\newcommand{\remove}[1]{}
\title{Shuffling Cards When You Are of Very Little Brain: \\Low Memory Generation of Permutations\thanks{Research supported in part by grants from the Israel Science Foundation (no.\  2686/20),  by the Simons Foundation Collaboration on the Theory of Algorithmic Fairness and by the Israeli Council for Higher Education (CHE) via the Weizmann Data Science Research Center.}
}
\author{Boaz Menuhin\thanks{
		Some of this work was accomplished when the author was at the Department of Computer Science and Applied Mathematics, Weizmann Institute of Science, Rehovot, Israel.
		Email: {boaz.menuhin@gmail.com}.
	}
	\and
	Moni Naor\thanks{
		Department of Computer Science and Applied Mathematics, Weizmann Institute of Science, Rehovot, Israel.
		Incumbent of the Judith Kleeman Professorial Chair.
		Email: {moni.naor@weizmann.ac.il}.}
}
\begin{document}
\maketitle

\begin{abstract} 
How can we generate a permutation of the numbers $1$ through $n$ so that it is hard to guess the next element given the history so far? The twist is that the generator of the permutation (the ``Dealer") has limited memory, while the ``Guesser" has unlimited memory. 
With unbounded memory (actually~$n$ bits suffice), the Dealer can generate a truly random permutation where~$\ln n$ is the expected number of correct guesses.

Our main results establish tight bounds for the relationship between the guessing probability and the memory $m$ required to generate the permutation.
We suggest a method for an $m$-bit Dealer that operates in constant time per turn, and any Guesser can pick correctly only $O(n/m+\log m)$ cards in expectation. The method is fully transparent, requiring no hidden information from the Dealer (i.e., it is "open book" or "whitebox").

We show that this bound is the best possible, even with secret memory. Specifically, for
any $m$-bit Dealer, there is a (computationally powerful) guesser that achieves $\Omega(n/m+\log m)$ correct guesses in expectation. We point out that the assumption that the Guesser is computationally powerful is necessary: under cryptographic assumptions, there exists a low-memory Dealer that can fool any computationally bounded guesser. 

We also give an $O(n)$ bit memory Dealer that generates perfectly random permutations and operates in constant time per turn. 
 
\end{abstract}

\section{Introduction}
The question of generating random permutations has received significant attention, dating back to at least the work of Fisher and Yates in the 1930s (see Section 3.4.2 in Knuth's~\cite{Knuth98}). In this work we concentrate on generating permutations using a small amount of memory while ensuring that it is difficult to predict the next value based on the previously generated values.
	
The following game was considered by Menuhin and Naor~\cite{MenuhinN22}:
A card guessing game is played between two players, ``Guesser" and ``Dealer".
At the beginning of the game, the Dealer holds a deck of $n$ distinct cards (labeled $1, ..., n$).
For~$n$ turns, the Dealer draws a card from the deck, the Guesser guesses which card was drawn, and then the card is discarded from the deck.
The Guesser receives a point for each correctly guessed card.
	
Menuhin and Naor considered the asymmetric case where the Dealer remembers everything, and the Guesser has limited memory and derived tight bounds in this case. 
In this work we consider what happens when the shoe is on the other foot and the Dealer is limited in space, while the Guesser has plenty of memory. 
Say that the Dealer has only~$m$ bits of memory, and wants to pick a permutation that is unpredictable by any Guesser, that has no limitation on the number of bits it can store or on its computational power? We know that when both players have plenty of memory, then the expected number of correct guesses is $H_n \approx \ln n$.

In our model, the Dealer has a limited number of long-lived random bits, in the sense that storing them is part of its memory, as well as ``on-the-fly" randomness that is available at every round. 
Note that with~$O(n \log n)$ long-lived bits, the Dealer can generate a truly random permutation.	
As Knuth showed~(see Algorithm P in Section 3.4.2 in~\cite{Knuth98}), we can drop the longevity requirement by maintaining an ordered list of available cards, pick one at random and discard. This algorithm, which is attributed to Fisher and Yates requires~$O(n \log n)$ bits of memory.
It is also possible to get a {\em perfectly random permutation} with only $n$ bits of memory, using a bitmap Dealer (see below), where what the generator maintains is the subset of cards that was used so far.  

If the Dealer has only $m$ bits of memory, then if we partition the deck into $n/m$ parts and for each one generate a perfect permutation using the same storage of $m$ bits (that is $n/m$ perfect permutations on sets of size $m$). For each such small permutation, the expected number of guesses a perfect Guesser makes is $\ln m$, and thus $(n/m) \ln m$ altogether. 

The natural question is whether this algorithm is the best possible in terms of its unpredictability or whether it can be improved.  
After some pondering the reader may conclude that the best strategy such a Dealer can have is pick the next card from a set of $m$ cards at random (in the above description, towards the end of each epoch the number of possibilities diminishes), and the question is how to ensure that there is such a set available for as many rounds as possible especially given the tight memroy requirements. This is indeed true, and our paper proves it and, furthermore, shows that this is the best possible. 

\paragraph{Our Contributions:}
Our main results are tight bounds for the relationship between the guessing probability and the memory required to generate the permutation.

For the case where the Dealer is frugal and is willing to allocate only~$m=o(n)$ bits of memory,
we found a method that makes any Guesser pick correctly only $O(n/m+\log m)$ cards in expectation.
The method does not require any secrecy from the Dealer, i.e.\ it is ``open book" in the language of Magen and Naor~\cite{MagenN22}  or ``whitebox" in terms of Ajtai et al.~\cite{AjtaiBJSSWZ22}).
Furthermore, this method is computationally efficient, where each card draw takes a constant time in the worst case and overall~$O(n)$ time.

On the other hand, we show that this bound is the best possible, even for Dealers with secret memory. For any $m$ bit Dealer there is a (computationally powerful) Guesser, which we call the {\em myopic optimizing Guesser}, that makes $\Omega(n/m+\log m)$ correct guesses in expectation. 
	
Note that the Guesser from the lower bound is {\em not} efficient, and this is not a coincidence: we show that assuming that the Guesser is polynomial time and that one-way functions exist, then it is possible to generate a permutation that is indistinguishable from random (and hence any computationally efficient Dealer can guess at most $\ln n$ cards correctly in expectation) using an amount of memory needed to store a key to a pseudo-random function (PRF); See \Cref{one-way}.

For the high memory regime, we introduce a Dealer that produces a perfectly random permutation, that is, one against which any Guesser scores at most~$\ln n$ points in expectation, runs in constant time, and requires a linear amount of bits.
This ultimately closes the gap between the \bitmapDealer that uses~$n$ bits of memory and runs in super-constant time (even amortized), and Knuth's implementation of Fisher and Yates's algorithm that runs in constant time and requires~$\Omega(n \log n)$ bits of memory.
To the best of the authors' knowledge, this is the first suggestion to achieve this.

\subsection{Related Work}

\paragraph{Generating Permutations and Cards Shuffling}
Due to its fundamental nature, the problem of generating a random permutation has been widely treated and analyzed from different angles and has a relatively long history.
The computational aspect of generating a permutation got a dedicated treatment in Knuth's~\cite{Knuth98}, where he analyzed a method by Fisher and Yates from the 1930s.
In an iconic paper among many, Bayer and Diaconis~\cite{BayerDiaconis92} analyzed the amount of Rifle-Shuffles required to achieve a close to uniform distribution of permutations.
Shuffles are studied cryptographically, as in Morris and Rogaway's~\cite{MorrisR14} construction (see Section~\ref{one-way}).
Permutations are essential, and thus, so is the study of the computational resources required to produce them.

\paragraph{The Missing Item Problem:}
In the Missing Item Finding problem (MIF), a stream of $r$ arbitrary elements 
$(e_1, ..., e_r) \in [n]^r$ is presented to an observer, which has to come up with some element $x \in [n]$, that has not appeared yet in the stream so far.
Stoekl~\cite{stoeckl2023streaming} and Magen~\cite{magen2024missing} thoroughly analyzed upper and lower bounds for various computational variants of the MIF problem, and in particular the one in which the generator's internal are ``open-book''.
In each turn, our Dealer faces a similar challenge, where it must produce a never-seen-before element.

\paragraph{Mirror Games:} Mirror games were invented by Garg and Schnieder~\cite{GargS19}. In this game, Alice and Bob take turns (with Alice playing first) in declaring numbers from the set $\{1,2, \dots, 2n\}$. If a player picks a number that was previously played, that player loses and the other player wins. If all numbers are declared without repetition, the result is a draw. Bob has a simple mirror strategy that assures he won't lose and requires no memory. On the other hand, Garg and Schnieder showed that every deterministic Alice needs memory of size linear in $n$ in order to secure a draw. Regarding probabilistic strategies, with sufficiently many secret bits or using cryptographic assumptions, there is a strategy for Alice that enables her to achieve a draw in the game w.h.p.\ using only polylog bits of memory~\cite{GargS19,Feige,MenuhinN22}.
The requirement for secret bits is crucial~\cite{MagenN22}.

\paragraph{Dynamically Changing Distributions}
The task of sampling a random variable from a given discrete distribution has been ultimately solved by the famous Alias method.
The case where the distribution changes across time has been studied by Rajasekaran and Ross~\cite{rajasekaran_fast_1993}, who suggested several rejection-based algorithms that work well for some families of dynamic distributions.
Their work followed a fruitful line of work for handling more classes of dynamically changing distributions, where Matias, Vitter and Ni~\cite{MatiasVN93} suggested a general solution that works for any dynamic distribution, albeit one that runs at a super constant time.
Inspired by their work, Hagerup, Mehlhorn and Munro~\cite{hagerup1993} introduced two algorithms for different classes of dynamic distributions.
While many of these works reached expected constant time (or near it), none reached worst case constant time, and they have neglected the space aspect; thus, to the best of our knowledge, none of our results immediately follow from this branch.

Sampling marbles from urns, with and without replacements, is a dynamically changing distributions of special interest in simulation of interactions between agents.
Berenbrink, Hammer, Kaaser, Meyer, Penschuck and Tran~\cite{berenbrink_simulation20} presented a memory efficient rejection-based algorithm for this task, especially designed for bulk operations.
One of our machineries addresses this distribution, as well as a broader family of distributions, and runs faster. However, it also consumes much more memory, and is not intended for bulk operation.

\remove{
\paragraph{Generative streaming algorithms}
Many streaming algorithms handle the task of analyzing or responding to a given stream.
Our Dealer's algorithms are generative in the sense that they are independent of a given input and focuses on the task of generating an object (in our case, a permutation) in a streaming fashion.
An interesting property of this task is that the domain of generation is widely used and is often examined by its randomness.
}

\paragraph{Generating and Storing Permutations} 
\label{storing_permutations}
The randomness of a Guesser-agnostic Dealer dictates the drawing order of the cards and, thus, describes a permutation.
In the other direction, given a succinct permutation data structure that can be initialized with random bits, we can generate a random drawing order.

Storing a permutation succinctly aims at representing a permutation using space as close to the information-theoretic lower bound as possible while enabling a quick (adaptive) evaluation of arbitrary elements. Such schemes are significant for Proof-of-Space schemes~\cite{BonehCohen2023} and database optimization~\cite{MunroRajeev2012}.

\paragraph{Circuit Depth of Generating Permutations:} It is possible to generate random permutations in parallel. In particular, it can be done in the class $AC^0$, polynomial size, constant depth unbounded fan-in AND/OR gates, but not in the class $NC^0$, circuits where every output bit depends on a constant number of input bits~\cite{MatiasV91,Hagerup91,Viola12,Viola20}.

\subsection{Results and Structure}

As mentioned, our primary results are smooth bounds relating the amount of memory that the Dealer poses and the predictability of the permutations they can produce.

\paragraph{Low Memory Dealer:}
With~$m$ bits of memory, we aim to have~$\Omega(m)$ cards to choose from in every turn.
Our Dealer is based on the idea that it is easier to track the progress of many small mini-decks than the deck in its entirety.
After splitting the deck, in each turn, the Dealer chooses a mini-deck to draw a card from and draws the top card of that mini-deck. 
This already results in a huge reduction in the required memory.
If the mini-decks are close to each other, then we can store their distances from some central point and thus achieve additional savings.
So we want the mini-decks to progress somewhat evenly without giving too much advantage to the Guesser; this is exactly the kind of problem addressed by allocation schemes in the balls-into-bins model (See~\Cref{sec_preliminaries:sec_balls_into_bins} and a survey by Wieder~\cite{Wieder2017}).

Our Dealer utilizes the~\adaptiveTerm allocation scheme introduced by Berenbrink, Khodamoradi, Sauerwald, and Stauffer~\cite{Berenbrink_Khodamoradi_Sauerwald_Stauffer_2013}, which can be roughly described as ``sample mini-decks until finding one that is not much ahead of the average''.
As we will show (\Cref{sec_adaptive:claim_number_of_minidecks}), this scheme can be implemented by a Dealer with~$m$ bits of memory can track~$\Theta(m)$ mini-decks. But how well does the Dealer play?
It turns out that it is pretty good in the sense that there are $\Theta(m)$ cards that can be played at every round. Therefore, our Dealer scores~$O(n/m + \ln m)$ points in expectation against any Guesser (\Cref{sec_adaptive:sec_predictable}.)
Furthermore, this Dealer is efficient and runs in constant time in expectation.

We prove that the Dealer has these properties by walking the path paved by Berenbrink et al. in~\cite{Berenbrink_Khodamoradi_Sauerwald_Stauffer_2013}.
As in other proofs on allocation schemes, our main tool is a potential function.
We relate the value of the potential function to the set of mini-decks that we can draw from and show that we expect to have sufficiently many such mini-decks. 
This would imply that the draws are sufficiently unpredictable and that each turn runs in constant time in expectation.
We present our low memory Dealer in~\Cref{sec_adaptive}, and analyze its behavior and implementation for the first~$n-m$ turns.
We treat the last~$m$ turns in~\Cref{sec_linear_perfect:sec_discussion}, where we also improve the Dealer's runtime.

The analysis of the first~$n-m$ turns (\Cref{sec_adaptive}), and that of the last~$m$ turns (\Cref{sec_linear_perfect:sec_discussion}) are of different nature and rely on different building blocks.
The separation to two sections enables us to treat each mechanism in a self-contained and focused manner.

\paragraph{Constant-Time Sampling from a Dynamic Subset:}

Given a fixed domain~$\domainSymbol$, we construct a data structure for maintaining a dynamic subset~$\subsetSymbol \subseteq \domainSymbol$. That data structure supports membership queries, addition and removal of a single element, as well as sampling of a member uniformly at random, all running in worst-case constant time, and using~$O(|\domainSymbol|)$ bits of memory (see \Cref{sec_linear_perfect:claim_maintain_and_sample_from_subset}).

We use this data structure to complete the analysis of the low memory Dealer from~\Cref{sec_adaptive} and improve its runtime.
Namely, we show that the low-memory Dealer runs in worst-case constant time per turn, and that any Guesser scores~$O(n/m + \ln m)$ points in expectation against our Dealer.
In addition, we show that there exists a Dealer with~$O(n)$ bits of memory, that runs in worst case constant time per draw, against which any Guesser is expected to score at most~$\ln n$ points.

The data structure is constructed in three layers.
Going bottom-up, we first split the domain~$\domainSymbol$ into~$|\domainSymbol|/\log|\domainSymbol|$ cells, and allocate a bit for each element in the domain, to determine its membership.
This already allows us to sample, update and query members of~$\subsetSymbol$ within a specific cell.
In the second layer, we group cells together by the number of elements from~$\subsetSymbol$ that they track, that is, by their population size.
This helps us to manage cells efficiently, as well to sample a random member of a given population size.
The third layer handles the task of sampling a population size.
Even though there are only~$\log |\domainSymbol|$ possible outcomes, we are unaware of a dynamic distribution data structure that yields the desired result. Thus, we present one of our own, which may be of independent interest.
We present this data structure in \Cref{sec_linear_perfect}.

\paragraph{Lower Bound:}
In \Cref{sec_lower_bound} we show that for any Dealer with~$m$ bits of memory, there exists a Guesser that scores~$\Omega(n /m)$ correct guesses in expectation.
We show this by establishing a connection between the size of the Dealer's memory and the Dealer's card sampling entropy. That is, we show that less memory implies more predictable draws.
Our central tool is a compression argument; namely, we show an efficient, prefix-free encoding scheme that allows us to encode and recover the course of a random game. 
Since no prefix-free encoding can surpass the entropy, we get an upper bound on the Dealer's entropy.
We present a simple Guesser called the Myopic Optimizing Guesser, which simply guesses the most probable card in each turn.
We draw a relation between the Dealer's entropy and the success probability of the Myopic Guesser, and by doing so, we prove that our low-memory Dealer is optimal.

\paragraph{Connections with Cryptography:} In \Cref{prf} we discuss the (tight) relationship between bypassing the lower bound and cryptography. The Guesser from the lower bound is not computationally efficient, and there is a good reason for that: if certain computational assumptions are true, then there is a low memory Dealer that produces a random-looking permutation where no computationally efficient Guesser can get non negligibly better than $\ln n$ correct guesses in expectation. This is when the Dealer can keep a secret. 

But in the open book case, where the Dealer has no secrets, the Myopic Guesser from~\Cref{sec_lower_bound} can still operate efficiently. We present a few research questions about the exact necessity of the existence of one-way functions to obtain our result.

\section{Preliminaries}

Throughout this paper we use $[n]$ to denote the set of integers~$\{1, \dots ,n\}$ and $[a - b]$ to denote the set of integers~$\{a, a+1, \dots, b\}$.
All logs are base $2$ unless explicitly stated otherwise, $\ln$ is the natural logarithm (base $e$).
For binary strings $s_1, s_2 \in \zo^\ast$, $|s_1|$ denotes the length in bits of $s_1$, and $s_1 \circ s_2$ denotes the concatenation of $s_1$ and $s_2$. $s[i..j]$~refers to the substring of~$s$ that spans between the~$i$th and~$j$th bits.
As a convention, we will denote random variables by bold capital letters.
We use~$\poison(x)$ to for the Poisson distribution with expected value $x$.

\subsection{Word RAM Generative Streaming Model}

\label{sec_preliminaries:sec_word_ram}

In the generative streaming model, an algorithm produces a sequence of elements, one at a time.
The span of time it takes to produce an element is referred to as a \emph{turn} or a step.
There may be an observer responding to the elements, and the algorithm may or may not take their responses into account when producing the next elements in the sequence.

We work under the assumptions of the Word RAM model in a streaming fashion.
This model is considered realistic for capturing multiple aspects of modern computers; namely, processors, operating systems, and memory. 
The main standard assumptions of this model are that an algorithm has a finite memory of $m$ bits and that this memory is made of cells (or words) where each cell consists of~$w \ge \log m$ bits. 
The algorithm can access any cell in constant time.

As for randomness, the model allows access to random bits, where the bits are produced and read ``on the fly'', so one cannot access previously read bits without storing them in memory.
We assume that we can ask for biased random bits, so that we can sample a random number~$r \sim [x]$ for any~$x \le 2^w$ in worst-case constant time. 
We discuss this in more detail in~\Cref{sec_linear_perfect:sec_discussion}.

In terms of computation, we assume that integer and bit-wise operations on~$w$ bits, occur in constant time, as is the case in modern processors.
In this paper, we assume that the processor exposes the operations \popCount, \bitSelect, and \bitRank, and that these operations run in constant time.
Given a binary string~$s$ of~$w$ bits~$s[1..w]$ the $\popCount(s)$ operation returns the number of $1$-bits in the string, that is $\popCount(s) = |\{i\in [w]: s_i = 1\}|$.
The $\bitRank(s, j)$ returns the number of $1$ bits in the substring $s[1..j]$, that is $\bitRank(s,j) = |\{i \in [j]: s_i = 1 \}|$.
The $\bitSelect(s,i)$ operation returns the index of the $i$th $1$-bit, or in other words, the least index~$j$ such that~$s[1..j]$ has exactly~$i$ 1-bits, that is $\bitSelect(s,i) = \argmin_j \bitRank(s,j) = i$.

CPUs implement \popCount since the 60s~\cite{wiki:Hamming_weight}.
And clearly, $\popCount(s)$ is just $\bitRank(s, w)$, and conversely~$\bitRank$ can be implemented by masking the higher~$w-j$ bits and calling \popCount.
If the processor does not expose any of these operations, we can preprocess a data structure that does during the initialization phase, which will use only a linear amount of bits (See Section~2 in~\cite{blandford_compact_2008}).

\subsection{Introduction to Card Guessing}
\label{sec_preliminaries:sec_intro_to_card_guessing}

Card Guessing is a game played by two players, a Guesser and a Dealer.
At the beginning of the game, the Dealer holds a deck consisting of $n$ distinct cards (labeled $1, 2, ..., n$ for simplicity).
In each turn, the Dealer picks a card from the deck, placing it face down, and the Guesser attempts to guess the identity of that card.
Once the Guesser declares its prediction, the card is revealed and discarded, and can no longer be played during this game.
The Guesser receives a point for each correct guess, and attempts to maximize the number of correct guesses throughout the game, while the Dealer attempts to be as unpredictable as possible, and reduce the amount of predicted card draws.

Say that the Dealer shuffles the deck properly at the beginning of the game, and draws cards one by one.
The capable Guesser who remembers previous draws can simply guess a card that still resides in the deck.
When the deck contains $i$ cards, any such guess is correct with probability~$1/i$.
Therefore, by linearity of expectation, the expected number of correct guesses throughout the game is~\footnote{Taken from textbook on algorithms by Kleinberg and Tardos~\cite{KT2006}, Chapter 13, Pages 721-722.} 
$$
\frac{1}{n}  + \frac{1}{n-1} + ... + \frac{1}{3} + \frac{1}{2} + 1 \approx \ln (n).
$$

In the setting considered in this paper, we assume that the Guesser has sufficient memory to remember all previously drawn cards, so $\ln n$ is the best that a Dealer can aim for.
We study the Dealer's predictability as a function of the computational resources it possesses, namely, the size of the Dealer's memory, the time spent during each turn, and its access to cryptographic primitives.

A transcript of a turn $t$ between a Guesser and a Dealer is a pair of (i) a guess $g_t \in [n]$ (made by the Guesser) and (ii) a draw of a card $d_t \in[n]$ (declared by the Dealer). The draw $d_t$ should not have appeared before. 
A transcript of a game is a sequence of guesses and draws $\{(g_\turn, d_\turn)\}_{\turn=1}^n$, such that at turn~$\turn$ the Guesser guessed the card $g_\turn$ while the Dealer drew the card~$d_\turn$.
Whenever the Guesser and the Dealer declared the same card, i.e.\ whenever $g_\turn = d_\turn$, a single point is attributed to the Guesser.
So the total \emph{score} of the game is the total number of turns at which a card was \emph{predicted correctly}.
We are interested in the \emph{expected number of correct guesses} during a game. 

At turn~$\turn$, we say that a card is {\emph available for drawing} if it has not been drawn yet.
That is, a card is available if it belongs to the set $[n] \setminus \{d_1, ... d_{\turn - 1}\}$.
A crucial requirement of the game is that the Dealer must, under all circumstances, declare an available card in each turn.
That is, the sequence~$d_1, ... d_n$ has no repetitions and thus, forms a permutation of $[n]$.

We think of the Dealer as being made up of two functions:
\begin{itemize}
    \item 
        \texttt{DRAW(turn, memory\_state, randomness):} produces the Dealer's next card draw.
        It receives the current turn, the Dealer's memory state ($m$ bits), and on-the-fly random bits.
    \item 
        \texttt{UPDATE(turn, memory\_state, last\_draw, randomness):} produces the Dealer's next memory state.
        It receives the current turn, the memory state at the beginning of the turn, on-the-fly random bits, the last card drawn (the result of \texttt{DRAW}) and \emph{optionally} the Guesser's guess.
\end{itemize}

If the Guesser can observe or deduce the Dealer's memory state, between turns, then we say that the Dealer is ``open book''.
A Dealer that does not take the Guesser's guesses into account is said to be \emph{Guesser-agnostic}.

As a warm-up, consider a Dealer with $n$ bits of memory that maintains a bitmap of available cards.
In each turn, the Dealer samples a card to pick, and if that card is available, then the Dealer draws it. 
Otherwise, the Dealer keeps sampling until an available card is found.
Tracking the cards using a bitmap takes~$n$ bits of memory.
From the coupon collector's problem, we know that the total number of sampling attempts is expected to be~$n \ln n$; thus, it is far from constant time per draw, even amortized.

\begin{construction}[\bitmapDealer] 
\label{sec_preliminaries:sec_intro_to_card_guessing:def_bitmap_dealer}
    
In each turn, the Dealer samples a card $c \in_R [n]$ uniformly at random.
    If $\card$ is available for drawing, then the Dealer declares $c$ as its draw.
    Otherwise, the dealer will sample again until an available card is found.
\end{construction}
An algorithmic description of this Dealer is presented in \Cref{sec_preliminaries:sec_intro_to_card_guessing:alg_bitmap_dealer}

\begin{algorithm}[ht]
    \caption{\bitmapDealer}
    \label{sec_preliminaries:sec_intro_to_card_guessing:alg_bitmap_dealer}
    \small
    \begin{algorithmic}
        \State
            Let $A \gets 1^n$
            \Comment{Cards availability bitmap}
        \For {turn $\turn \in [n]$}
            \State 
                Sample $c \sim [n]$ uniformly at random
            \While {$A[c] = 0$}
                \Comment{Sample until an available card is found}
                \State Sample $c \sim [n]$ uniformly at random
            \EndWhile
            \State 
                Draw card $c$
            \State
                $A[c] \gets 0$
        \EndFor
    \end{algorithmic}
\end{algorithm}

Consider also a Dealer that implements Knuth's algorithm~\cite{Knuth98}~(Section~3.4.2, Algorithm P). That Dealer explicitly stores all available cards ($n$ cards, $\log n$ bits each), samples one of them in each turn, and pops from memory.
More formally, that Dealer initiates their memory with an array~$A$ of size~$n$ (total of $n \log n$ bits), such that the $j$th cell contains~$j$, i.e.\ ${A[j] \gets j}$.
In turn $\turn$, the Dealer samples an index $i$ uniformly at random from~$[n - \turn + 1]$, draws the card stored in~$A[i]$, and assigns the card that resides in $A[n- \turn +1]$ to cell $A[i]$.
So, assuming that index $i$ was chosen at the first turn, then the memory state switches from $$
1, 2, ..., i-1, i, i+1, ..., n-1, n
$$
to $$
1, 2, ..., i-1, n, i+1, ..., n-1
.
$$

\begin{construction}[\simpleConstantTimeDealer]
    \label{sec_preliminaries:sec_intro_to_card_guessing:def_simple_constant_time}
    At the beginning of the game, the Dealer initiates an array~$A$ of~$n$ cells such that, for every~$i \in [n]: A[i] \gets i$.
    At turn $\turn$, the Dealer samples an index $i \sim [n -t + 1]$ uniformly at random and draws the card that resides at cell $A[i]$.
    Then $A[i] \gets A[n- t +1]$.
\end{construction}
An algorithmic description is provided in \Cref{sec_preliminaries:sec_intro_to_card_guessing:alg_simple_constant_time}.
And we note that this Dealer takes $O(1)$ time per draw, and~$O(n \log n)$ bits of memory.

\begin{algorithm}[ht]
    \caption{\simpleConstantTimeDealer}
    
    \label{sec_preliminaries:sec_intro_to_card_guessing:alg_simple_constant_time}
    \small
    \begin{algorithmic}
        \State
            Initiate an array of $n$ cells such that $A[i] \gets i$
        \For {turn $\turn \in [n]$}
            \State 
                Sample index $i \sim [n - \turn + 1]$ uniformly at random
            \State
                Draw card $A[i]$
            \State
                $A[i] \gets A[n - \turn + 1]$
        \EndFor
    
    \end{algorithmic}
\end{algorithm}

\subsection{Balls-into-Bins}
\label{sec_preliminaries:sec_balls_into_bins}

The Balls-into-Bins model we consider is a stochastic process where~$m$ balls are thrown into $n$ bins\footnote{In this section, we follow the common naming convention from the literature. However, in the context of our work, $m$ will be related to the number of bins, and~$n$ to the number of balls.}.
The balls are thrown in some random manner, one after another, and we are interested in the question of \emph{how many balls are there in the most (or least) loaded bin}, often with respect to the average load. 
	
In the setting of interest to this work, $m$ will be larger than $n$. This case is known as the ``heavily loaded'' case.
A probabilistic algorithm for placing balls into bins is often called an \emph{allocation process} or a \emph{sampling scheme}.	There has been much work on methods for achieving balanced allocations via simple rules of assigning the balls into the bins (e.g.\ the famous ``two-choice paradigm", where balls are thrown into two bins and are kept in the less loaded one).  See \cite{Wieder2017} for a survey on the topic.

We will actually consider two simple methods: the one choice scheme and the~\adaptiveTerm. 
Let $\loadSymbol_i(m)$ denote the load of bin $i$ after placing $m$ balls.
	\begin{algorithm}[h]
		\caption{One-choice scheme}
		\label{sec_preliminaries:sec_balls_into_bins:alg_one_choice}
        \small
		\begin{algorithmic}
			\State
				$\loadSymbol_i = 0 : \forall i \in \binsParam$
			\For {ball $v \in [\ballsParam]$}
                    \State
                    $i \sim [n]$
                    \Comment{Sample a random bin~$i$}
				\State 
					Assign ball $v$ to bin $i$
                    \State
                    $\loadSymbol_i  \pluseq 1$
			\EndFor

		\end{algorithmic}
	\end{algorithm}
	
	\begin{proposition}[Theorem 2.2 in \cite{Wieder2017}]
		\label{sec_preliminaries:sec_balls_into_bins:one_choice_gap}
After throwing $\ballsParam$ balls into $\binsParam$ bins using the one-choice scheme, with probability at least $1 - 1/\binsParam$ the maximum load is ${\frac{\ballsParam}{\binsParam}} + O\left(\sqrt{\frac{\ballsParam \ln \binsParam}{\binsParam}}\right)$ when ${\ballsParam >  \binsParam \ln \binsParam}$.
	\end{proposition}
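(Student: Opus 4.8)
The plan is the textbook heavily-loaded balls-into-bins argument: bound the load of a single bin with a Chernoff-type tail inequality, then union-bound over the $\binsParam$ bins. Fix a bin $i \in [\binsParam]$. Under the one-choice scheme each of the $\ballsParam$ balls lands in bin $i$ independently with probability $1/\binsParam$, so $\loadSymbol_i(\ballsParam)$ is a sum of $\ballsParam$ i.i.d.\ Bernoulli$(1/\binsParam)$ random variables, with mean $\mu := \ballsParam/\binsParam$ and variance at most $\mu$.

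Next I would apply a Bernstein/Chernoff upper-tail bound, which for such a sum yields $\Pr[\loadSymbol_i(\ballsParam) \ge \mu + t] \le \exp\!\left(-t^2/(2\mu + \tfrac{2}{3}t)\right)$ for every $t>0$, and instantiate $t = c\sqrt{\mu \ln \binsParam}$ for a suitable absolute constant $c$ (taking $c=3$ is comfortable). The place where the hypothesis $\ballsParam > \binsParam\ln\binsParam$ is used is that it forces $\mu > \ln\binsParam$, hence $\sqrt{\mu\ln\binsParam} \le \mu$ and so $t \le c\mu$; feeding this into the denominator shows the exponent is at least $\tfrac{c^2}{2+2c/3}\ln\binsParam \ge 2\ln\binsParam$, so $\Pr[\loadSymbol_i(\ballsParam) \ge \mu + c\sqrt{\mu\ln\binsParam}] \le \binsParam^{-2}$.

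Finally I would union-bound over all $\binsParam$ bins — no independence across bins is needed — to get that with probability at least $1 - \binsParam\cdot\binsParam^{-2} = 1 - 1/\binsParam$ every bin, and in particular the most loaded one, has load at most $\mu + c\sqrt{\mu\ln\binsParam} = \frac{\ballsParam}{\binsParam} + O\!\left(\sqrt{\frac{\ballsParam\ln\binsParam}{\binsParam}}\right)$. The only step requiring real care, and the main obstacle, is the choice of tail bound: when $\ballsParam$ is only slightly above $\binsParam\ln\binsParam$ the target deviation $\sqrt{\mu\ln\binsParam}$ is itself $\Theta(\mu)$, so the small-$\delta$ multiplicative Chernoff bound $\exp(-\delta^2\mu/3)$ does not apply and one must use the additive Bernstein form (or the multiplicative form valid for constant relative deviation); the constant $c$ must be chosen so as to satisfy both $t \le c\mu$ and $\tfrac{c^2}{2+2c/3}\ge 2$, which is exactly what makes the union bound close. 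If one additionally wanted the $O(\cdot)$ term to be tight, a matching lower bound on the maximum load follows from a Poissonization/second-moment computation, but that is not needed for the statement as phrased.
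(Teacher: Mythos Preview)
Your argument is correct and is precisely the standard heavily-loaded one-choice analysis (Chernoff/Bernstein tail for a single bin, then a union bound over the $\binsParam$ bins, using the hypothesis $\ballsParam>\binsParam\ln\binsParam$ to ensure the deviation $t$ is $O(\mu)$ so that the quadratic regime of the exponent applies). The paper itself does not give a proof of this proposition at all: it is simply quoted as Theorem~2.2 of Wieder's survey~\cite{Wieder2017} and used as a black box, so there is no ``paper's own proof'' to compare against; your write-up is exactly the kind of proof one finds in that reference.
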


An allocation scheme that achieves a more balanced allocation is called~\adaptiveTerm  and was introduced in~\cite{Berenbrink_Khodamoradi_Sauerwald_Stauffer_2013}.
The idea here is to make sure that no bin is too advanced.
This is achieved by placing the~$v$th ball into a bin with load at most~$v/n +1$, and by doing so, preventing bins from diverging far ahead of the average.
We will study this scheme extensively in~\Cref{sec_adaptive}.

\begin{algorithm}[ht]
\caption{\adaptiveTerm}
\label{sec_preliminaries:sec_balls_into_bins:alg_adaptive}
\small
\begin{algorithmic}
    
    \State
    $\loadSymbol_i = 0 : \forall i \in n$
    \For {ball $v \in [m]$}
    \Repeat
        \State
            $i \sim [n]$
            \Comment sample a bin uniformly at random
        \If {$\loadSymbol_i < \lceil {\frac{v}{n}}\rceil + 1$}
            \Comment If the selected bin is loaded below the threshold
            \State Place ball $v$ into bin $i$
            \State $\loadSymbol_i  \pluseq 1$
        \EndIf
        
    \Until ball is placed
    \EndFor
\end{algorithmic}
\end{algorithm}

\subsection{Data Structures}
\subsubsection{Variable bit-length arrays}
\label{sec_preliminaries:sec_vla}
An array of~$n$ cells of~$k$ bits each can easily store an ordered set of objects of size at most~$k$ while allowing fast access and update time.
If many of the stored objects require significantly less than~$k$ bits, then this scheme is wasteful. How can we dynamically adjust object lengths while maintaining efficiency?

The problem of maintaining a compact data structure for variable bit-length arrays while allowing fast look-up and update times has several solutions.
Two that match our needs are those by Blandford and Blelloch~\cite{blandford_compact_2008}, who showed a worst-case constant time compact solution for objects of known size bound, and that of PacHash~\cite{Kurpicz23PacHash}, who lifted the known size bound at the cost of having the lookup and update time constant in expectation.
Both can be implemented in our computational model.

Let~$S_w$ be the set of binary objects of size at most~$w$.
\begin{proposition}[Theorem 3.2 in~\cite{blandford_compact_2008}]
    \label{sec_preliminaries:sec_vla:thm_variable_bit_length_arrays}
    An array~$A[1,n]$ storing elements from $S_w$ with $m$ total bits
can be represented using~$O(n + m)$ bits while supporting lookups and updates in~$O(1)$ worst-case time.
\end{proposition}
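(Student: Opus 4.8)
Since this is Theorem~3.2 of~\cite{blandford_compact_2008}, a complete argument is available there; the way I would approach it is the following layered construction, carried out so that both the space bound and the $O(1)$ worst-case time survive updates. Throughout I assume elements are non-empty (the empty ones can be recorded in an $n$-bit side bitmap, which is within the $O(n)$ budget), so sizes are in $\{1,\dots,w\}$.

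First, the \emph{layout}. Partition the logical array into consecutive \emph{blocks} of $\Theta(w)$ bits of content: append elements to the current block and close it once its content reaches $\Theta(w)$ bits. A closed block then carries $\Theta(w)$ content bits and at most $\Theta(w)$ elements, and there are $O(1+m/w)$ blocks in all. Store a block in a memory \emph{chunk} of $O(1)$ machine words holding the concatenation of its elements together with a self-delimiting length header, for instance the unary string $0^{\ell_1}1\,0^{\ell_2}1\,0^{\ell_3}1\cdots$, whose length is $\sum_j \ell_j$ plus the number of elements, again $\Theta(w)$. With this layout a chunk is read or rewritten in $O(1)$ word operations, and locating the $j$th element inside a block is a $\bitRank/\bitSelect$ query on the header, which the primitives of \Cref{sec_preliminaries:sec_word_ram} answer in $O(1)$ time (for a header spanning a small constant number of words, chain $O(1)$ \popCount's and one \bitSelect); pulling out or overwriting the element is then shifts and masks.

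Second, the \emph{directory and memory manager}. Keep a directory storing, per block, a pointer to its chunk and the running number of elements through that block; with $O(1+m/w)$ blocks and $O(w)$ bits per entry this is $O(n+m)$ bits, and the chunks hold $m$ content bits plus $O(1)$ header bits per element, another $O(n+m)$, so the total is $O(n+m)$ bits as claimed. Allocate chunks from a backing array by rounding capacities to $O(1)$ size classes with a stack-based free list per class, so allocation and release cost $O(1)$; carve a fresh chunk from the frontier when a class runs dry, and start migrating into a doubled backing array when the current one fills. A \texttt{lookup}$(i)$ finds the block of index $i$ through the directory and navigates inside it; an \texttt{update}$(i,x)$ rewrites that block's chunk, splicing in $x$, moving to a chunk of the right size class if needed, releasing the old one, and --- if the block must be split or merged with a neighbour to stay near $\Theta(w)$ bits --- fixing $O(1)$ directory entries.

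The \emph{crux} is making all of this \emph{worst-case} $O(1)$, and I expect this to be where the real work is. Mapping a logical index to its block is a dynamic prefix-sum search over the per-block counts; to do it in $O(1)$ one nests the directory into super-blocks that store exact cumulative counts, packs the small within-super-block counts so a search resolves in a constant number of word operations and $\bitSelect$'s, and falls back on a universal table precomputed in $O(n+m)$ bits during initialization for the residual tiny instances (Section~2 of~\cite{blandford_compact_2008}); the same recursion --- descriptors of descriptors being ever shorter --- handles in-block navigation when a header spills past one word. Because an update can split or merge a block, the directory itself must be one of these nested, chunked structures rather than a flat array, so insertions and deletions of blocks stay local. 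Finally, the backing-array migration and the periodic compaction that reclaims space from under-used size classes must be \emph{de-amortized}: each operation performs a fixed quantum of incremental copying into the next clean structure while the current one still answers queries, so no \texttt{lookup} or \texttt{update} pays more than $O(1)$ and the live footprint stays $\Theta(n+m)$. Once the nested directory and the de-amortized rebuild are in place, the space bound and both time bounds follow from the accounting above; reconciling the $O(1)$-time index-to-block search with the $O(n+m)$-bit budget under arbitrary updates is, I think, the single hardest point.
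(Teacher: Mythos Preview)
The paper does not prove this statement at all: it is stated in the preliminaries as a black-box citation of Theorem~3.2 in~\cite{blandford_compact_2008}, with no accompanying argument. So there is no ``paper's own proof'' to compare your sketch against.

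Your outline is a reasonable high-level summary of the Blandford--Blelloch construction (content blocks of $\Theta(w)$ bits, self-delimiting headers navigated by rank/select, a chunk allocator with size classes, and de-amortized rebuilding), and you correctly flag the genuinely delicate point --- getting worst-case $O(1)$ index-to-block search within the $O(n+m)$ space budget under updates. For the purposes of this paper, though, none of that detail is needed: the proposition is used only as an off-the-shelf tool in \Cref{sec_adaptive:claim_number_of_minidecks}, and the appropriate ``proof'' is simply the citation.
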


\subsection{Dynamically Changing Distributions}
\label{sec_preliminaries:sec_dynamic_dist}

Consider the task of generating the outcome of a discrete random variable~$X$ that has~$n$ possible outcomes, such that~$X$ equals~$i$ with probability~$a_i$.
We can describe such a random variable as a vector~$(a_1, \dots, a_n)$ where~$\sum a_i = 1$.
The famous Alias method solves this problem; it uses at most~$O(n \log n)$ bits and runs in constant time.

What happens when the random variable~$X$ changes over time?
We now want a data structure that, in addition to initialization and generation, also supports updating the weight of some of its values, and efficiently.
There are multiple solutions to this problem, addressing different families of dynamic distributions. 

If the sum of weights is not equal to~$1$, then we say that this is a pseudo distribution, and we would like to sample the value~$j$ with probability~$a_j / \sum_{i=1}^n a_i$, where we usually consider the case where~$a_j \in \fN$.
Call a pseudo-distribution polynomially-bounded, if $\sum_{i=1}^n a_i$ is at most~$\poly(n)$ for some polynomial of degree at least~$2$.
A construction by Hagerup et. al.~\cite{hagerup1993} utilized multiple ideas by~\cite{MatiasVN93} and~\cite{rajasekaran_fast_1993}, and introduced a rejection-based algorithm to maintain and sample from polynomially bounded pseudo distributions in expected constant time.
\begin{proposition}[Theorem 2 in \cite{hagerup1993}]    \label{sec_preliminaries:sec_dynamic_dist:thm_polynomially_bounded}
    Polynomially bounded pseudo-distribution on~$[n]$ can be maintained with constant expected generation time, constant update time, $O(n \log n)$~bits of space, and~$O(n)$ initialization time.
\end{proposition}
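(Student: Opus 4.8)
The plan is to reconstruct the grouping-plus-rejection data structure behind~\cite{rajasekaran_fast_1993,MatiasVN93,hagerup1993}. Write $W=\sum_i a_i$ for the current total; by hypothesis $W\le n^{c}$ for a fixed constant $c$, and we keep the convention that each $a_i$ is a nonnegative integer. The first ingredient is a \emph{magnitude decomposition}: place element $i$ (whenever $a_i>0$) into class $g(i)=\lceil\log_2 a_i\rceil$, so class $g$ holds exactly the elements with $2^{g-1}<a_i\le 2^{g}$. Since $1\le a_i\le n^{c}$, only $L=\lceil c\log_2 n\rceil+1=O(\log n)$ classes are ever nonempty. For each class $g$ I would keep a dynamic array $G_g$ of the elements currently in it, together with an inverse table recording, for each element, its class and its position inside that array; deletion is then a swap with the last entry and insertion is an append, both $O(1)$ worst case. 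Alongside I maintain the counts $N_g=|G_g|$, the \emph{envelope weights} $U_g=N_g 2^{g}$ (so that $U_g/2<\sum_{i\in G_g}a_i\le U_g$), and the envelope total $U=\sum_g U_g\in[W,2W)$. The point of the envelopes is that an update changes $N_g$ by $\pm1$ for at most two classes, hence each $U_g$ by at most $\pm 2^{g}$, so envelope weights evolve very gently.

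Generation is then a two-stage rejection sampler: first draw a class $g$ with probability proportional to $U_g$ using an auxiliary sampler over the $L$ classes (discussed below); then draw a uniform position in $G_g$, read off its element $i$, and accept $i$ with probability $a_i/2^{g}\in[\tfrac12,1]$, restarting on rejection. Conditioned on its class, each element is returned with probability proportional to $a_i$, and class $g$ is itself picked with probability proportional to $\sum_{i\in G_g}a_i$, so the output law is exactly $a_i/W$; since every acceptance test passes with probability at least $\tfrac12$, the expected number of iterations is at most $2$, i.e.\ $O(1)$. An update changing some $a_i$ recomputes $g(i)$ and the new class, performs the two $O(1)$ array moves, fixes the at most two affected $N_g,U_g$ and the running total $U$, and refreshes the auxiliary sampler; the only nontrivial cost is in that last step. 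For space, the class arrays and the inverse table occupy $O(n)$ words of $O(\log n)$ bits, i.e.\ $O(n\log n)$ bits; the per-class data and the auxiliary sampler involve only $O(\log n)$ numbers of $O(\log n)$ bits; and initialisation merely classifies the $n$ given weights and builds the arrays and the sampler in $O(n)$ time.

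The crux — and the place I expect the real difficulty — is the auxiliary sampler over the $L=O(\log n)$ classes: it must draw a class with probability proportional to $U_g$ in worst-case $O(1)$ time and absorb $O(1)$-time envelope changes, and a plain balanced or Fenwick tree over the classes only gives $O(\log\log n)$. Following~\cite{MatiasVN93,hagerup1993}, I would build it from a short \emph{replicated roster} $R$ in which class $g$ appears $\lceil U_g\,2^{-B}\rceil$ times for a scale $2^{B}=\Theta(U/L)$, so $|R|=O(L)$: a draw is a uniform pick from $R$ giving some $g$, followed by one rejection step accepting $g$ with probability $U_g\,2^{-B}/\lceil U_g\,2^{-B}\rceil$, which is exact for every $B$ (it only uses $\lceil U_g\,2^{-B}\rceil\ge U_g\,2^{-B}$) and succeeds with constant probability because $\Pr[\text{accept}]=U/(2^{B}|R|)=\Omega(1)$. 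Because an update moves an envelope weight by only $\pm 2^{g}$, the multiplicity of any class with $2^{g}\le 2^{B}$ changes by only $\pm O(1)$, so the roster is patched in $O(1)$ time for all but the $O(\log(U/2^{B}))=O(\log\log n)$ \emph{heavy} classes $\{g:2^{g}>2^{B}\}$, which are handled by a further, much shallower application of the same device. The bookkeeping that actually needs care is (i) what to do when the global total $W$, and hence the scales $2^{B}$, shift by a large factor in a single update — handled by hysteresis (re-scale a level only when $W$ leaves a constant-factor window) together with deamortising each re-scaling against a shadow roster over the next $\mathrm{poly}(\log n)$ updates, correctness being unharmed by the resulting staleness since it depends only on $\lceil U_g\,2^{-B}\rceil\ge U_g\,2^{-B}$ — and (ii) verifying that across the recursive levels the roster sizes and the re-scalings stay within constant factors of their targets, so that the $O(1)$ draw time and $O(1)$ update time survive. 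That bookkeeping is the delicate part; exactness of the two-stage sampler, the $O(n\log n)$-bit space count, and the $O(n)$ initialisation are routine.
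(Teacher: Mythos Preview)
The paper does not prove this proposition: it is stated in the preliminaries as Theorem~2 of Hagerup, Mehlhorn and Munro~\cite{hagerup1993} and used as a black box, with no argument supplied beyond the citation. There is therefore nothing in the paper to compare your proposal against.

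For what it is worth, your reconstruction is faithful to the architecture of the cited result --- magnitude classes $g=\lceil\log_2 a_i\rceil$, envelope weights $U_g=N_g2^g$, a two-stage rejection sampler whose inner acceptance probability is at least $1/2$, and an auxiliary sampler over the $O(\log n)$ classes built from replicated rosters with recursion on the heavy classes --- and you correctly flag the rescaling/deamortisation bookkeeping as the genuinely delicate part. None of that is needed for the present paper, which only invokes the statement.
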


\subsection{Information Theory and Probability}
	
\begin{definition}[Entropy]
    Let  $X$ be a discrete random variable that takes values from domain~$\cX$ with probability mass function $p(x) = \Pr[X = x]$.
    The Binary Entropy (abbreviated Entropy) of $X$, denoted $H(X)$ is
    \[
    H(X) = - \sum_{x \in \cX} p(x) \cdot \log p(x).
    \]
\end{definition}

\begin{lemma}
    \label{sec_preliminaries:lemma_entropy_heavy_element}
    Let $X$ be drawn according to the probability mass function $p$.
    If $\entropy{X} \le k$ then there exists $x \in \cX$ such that $$\Pr[X=x] \ge 2^{-k}.$$
\end{lemma}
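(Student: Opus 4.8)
The plan is to prove the contrapositive: if every outcome $x \in \cX$ satisfies $\Pr[X = x] < 2^{-k}$, then $\entropy{X} > k$. This is the natural direction because the entropy is a sum of terms $-p(x) \log p(x)$, and each such term is controlled pointwise by the size of $p(x)$: when $p(x)$ is small, $-\log p(x)$ is large.

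First I would write $\entropy{X} = \sum_{x \in \cX} p(x) \cdot \left(-\log p(x)\right)$, restricting the sum to those $x$ with $p(x) > 0$ (which contribute nothing anyway, under the usual convention $0 \log 0 = 0$). For each such $x$, the assumption $p(x) < 2^{-k}$ gives $-\log p(x) > k$. Substituting this bound into every term yields
\[
\entropy{X} \;=\; \sum_{x : p(x) > 0} p(x)\left(-\log p(x)\right) \;>\; \sum_{x : p(x) > 0} p(x) \cdot k \;=\; k \sum_{x : p(x) > 0} p(x) \;=\; k,
\]
using that the probabilities sum to $1$. Hence $\entropy{X} > k$, contradicting $\entropy{X} \le k$, so there must exist some $x$ with $\Pr[X = x] \ge 2^{-k}$.

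There is essentially no obstacle here — the statement is a one-line pigeonhole/averaging fact. The only minor care needed is the handling of zero-probability outcomes and the strictness of the inequality: if one wanted $\ge$ rather than $>$ in the intermediate step, one should note that the inequality $p(x) < 2^{-k}$ holding for \emph{every} $x$ already forces a strict gap, which is fine; alternatively one can observe that if all $p(x) \le 2^{-k}$ with at least one strict, the same argument works, and the degenerate case where some $p(x) = 2^{-k}$ exactly is precisely the conclusion we want. In the write-up I would simply present the contrapositive cleanly and note the convention for $0 \log 0$.
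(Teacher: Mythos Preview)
Your proof is correct and is essentially identical to the paper's own proof: both argue the contrapositive, assuming $p(x) < 2^{-k}$ for all $x$ and concluding $\entropy{X} > k$ by bounding each term $-\log p(x) > k$ and summing. The paper is slightly terser and omits the explicit remark about the $0\log 0$ convention, but the argument is the same.
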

\begin{proof}
    Assume the converse, then for every $x \in \cX: p(x) < 2^{-k}$.
    In that case, 
    \begin{flalign*}
        \entropy{X} 
        &
        = 
        \sum_{x \in \cX} p(x) \cdot \log \left(\frac{1}{p(x)}\right)
        \\
        &
        >
        \sum_{x \in \cX} p(x) \cdot \log \left(2^k\right)
        =
        k
        .
    \end{flalign*}
    In contradiction to the assumption.
\end{proof}

\begin{proposition}[Jensen's Inequality - Theorem 2.6.2 in~\cite{CoverThomas06}]
    \label{sec_preliminaries:thm_jensen_inequality}
    If $f$ is a convex function and~$X$ is a random variable, $$
\expected \left[f(X)\right] \ge f (\expected \left[X\right])
.
$$
\end{proposition}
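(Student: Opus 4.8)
The plan is to deduce the inequality from the \emph{supporting line} property of convex functions. Write $\mu = \expected[X]$ and assume, as is implicit in the statement, that $X$ takes values in the domain of $f$ and that both $\expected[X]$ and $\expected[f(X)]$ are well defined. First I would record the standard fact that a convex function admits a supporting line at every interior point of its domain: there is a constant $c$ (one may take any value between the left and right derivatives of $f$ at $\mu$, both of which exist for a convex function) such that
\[
f(x) \ge f(\mu) + c\,(x - \mu) \qquad \text{for every } x \text{ in the domain of } f.
\]
This is the one genuinely analytic ingredient; I would either cite it or prove it in two lines from the monotonicity of the difference quotients $\tfrac{f(y)-f(\mu)}{y-\mu}$ of a convex function.

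The remaining steps are purely formal. Substitute the random variable $X$ for $x$, which is legitimate since $X$ lies in the domain of $f$ almost surely, to obtain the pointwise inequality $f(X) \ge f(\mu) + c\,(X - \mu)$. Now take expectations of both sides, using monotonicity of expectation together with linearity:
\[
\expected[f(X)] \ge f(\mu) + c\,\bigl(\expected[X] - \mu\bigr) = f(\mu) = f(\expected[X]),
\]
since $\expected[X] - \mu = 0$ by the choice of $\mu$. This is exactly the claimed bound.

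For the discrete, finitely supported random variables that actually occur in this paper there is an alternative, self-contained route that avoids even the supporting-line fact: induct on the size of the support. The base case $|\mathrm{supp}(X)| = 1$ is trivial and the case $|\mathrm{supp}(X)| = 2$ is literally the definition of convexity; for the inductive step, peel off one atom $x_0$ occurring with probability $p$, write $\expected[X] = p\,x_0 + (1-p)\,\nu$ where $\nu$ is the conditional expectation of $X$ given $X \neq x_0$, apply convexity to bound $f(\expected[X]) = f\bigl(p\,x_0 + (1-p)\,\nu\bigr)$ by $p\,f(x_0) + (1-p)\,f(\nu)$, and invoke the inductive hypothesis to bound $f(\nu)$ by the conditional expectation of $f(X)$.

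The main obstacle here is not the algebra, which is essentially a one-liner, but making sure the hypotheses are strong enough that the statement is well posed: one needs $X$ supported inside the domain of $f$ and the relevant expectations to be finite (or at least well defined in $[-\infty,+\infty]$), and, in the continuous case, one must deal with the possibility that $\mu$ lies on the boundary of the domain, where the only supporting line may be vertical and a small separate argument is required. None of these subtleties arise in the discrete, finitely supported setting relevant to this paper, so the induction proof is the cleanest option to present.
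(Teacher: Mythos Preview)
Your proof is correct: the supporting-line argument is the standard textbook route, and the induction you sketch for finitely supported $X$ is also sound. However, the paper does not prove this proposition at all; it merely states it as a preliminary and cites Cover and Thomas~\cite{CoverThomas06}, Theorem~2.6.2. So you have supplied strictly more than the paper does. If the goal is to match the paper, a one-line citation suffices; if the goal is a self-contained exposition, either of your two arguments is fine, with the finite-support induction being the cleaner fit given that every application in the paper is to finitely supported random variables.
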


\section{Low-Memory Dealer}

\label{sec_adaptive}

\newcommand{\allocationRV}{\mathbf{Y}}
\newcommand{\deviationRV}{\mathbf{X}}
\newcommand{\deviationInst}{x}
\newcommand{\drawableMinidecks}{D}
\newcommand{\loadRV}{\mathbf{L}}
\newcommand{\timeRV}{\mathbf{T}}
\newcommand{\averageContributionRV}{\mathbf{B}^{\stage}}
\newcommand{\averageContributionInst}{b^{\stage}}
\newcommand{\averageContributionSupp}{\cB^{\stage}}

\newcommand{\randomnessSymbol}{\cR}
\newcommand{\randomnessUntilStage}{\cR^\stage}
\newcommand{\randomnessTurn}{\cR_{\turn}}
\newcommand{\randomnessRest}{\overline{\cR_{{\turn}}^{{\stage}}}}

\newcommand{\decreaseConstant}{\kappa}
\newcommand{\lagConstant}{{c_1}}
\newcommand{\potentialParamSymbol}{\epsilon}
\newcommand{\potentialParamValue}{1/200}
\newcommand{\potentialThresholdSymbol}{\rho}
\newcommand{\potentialThresholdTerm}{(\decreaseConstant + \potentialParamSymbol)\cdot \left(1+\potentialParamSymbol\right)^\lagConstant \cdot \left({\frac{2}{\decreaseConstant}}\right)}

\renewcommand{\decreaseConstant}{\beta}
\renewcommand{\lagConstant}{\alpha}
\renewcommand{\potentialParamSymbol}{\epsilon}
\renewcommand{\potentialThresholdSymbol}{\gamma}

\newcommand{\potential}{\Phi}

\newcommand{\lagConstantLagging}{$\lagConstant$-lagging\xspace}

\newcommand{\thresholdAverage}{\lceil{\frac{\turn}{\minidecks}} \rceil}
\newcommand{\threshold}{\thresholdAverage + 1}

\newcommand{\phaseTerm}{phase\xspace}
\newcommand{\phaseAdaptive}{\adaptiveTerm \phaseTerm}
\newcommand{\phaseFinal}{Final \phaseTerm}

We present our low-memory Dealer. Recall that our goal is to have about~$m$ choices at each turn and pick one of them at random. We can think of several strategies for doing so, and the one we present is based on the ``mini-deck" approach. At the beginning of the game, the Dealer splits the deck into~$\minidecks$ mini-decks of equal size (i.e.\ $n/\minidecks$ cards in each mini-deck). 
In each turn, the Dealer chooses a mini-deck and draws the card at the top of it. So if there are~$\minidecks$ mini-decks, the Dealer will draw one of the~$\minidecks$ top cards of the different mini-decks. The advantage of picking the top card from each deck is that we simply need to recover the number of cards drawn from it, and this determines the identity of the chosen card. In terms of guessing probability, as long as all the $\minidecks$ mini-decks have cards, the probability of guessing is $1/\minidecks.$

Several complications arise: first, we need to record the number of cards drawn from each mini-deck. Naively, this takes $\log n$ bits. But even if we try to be a bit more sophisticated and record the difference between the expected number of cards drawn and the actual one, then this difference is going to be something like $\sqrt{n/\minidecks}$ (See~\Cref{sec_preliminaries:sec_balls_into_bins:one_choice_gap}). This means that it takes~$\log (n/\minidecks)$  bits per mini-deck, resulting in roughly~$m \approx \minidecks \cdot \log n$ total number of bits. Another issue is that some mini-decks may run empty before others, resulting in fewer than $m$ choices pretty early on (which means better probability for the adversary in guessing the next card). 

Instead, we aim to get a much more balanced number of cards drawn from each mini-deck, concentrated around the expected number of drawn cards. This way achieves both a compression of the bit-representation of the numbers of cards drawn and addresses the mini-deck depletion problem. The idea is to apply the \adaptiveTerm paradigm (\adaptiveTerm allocation scheme, as described in \Cref{sec_preliminaries:sec_balls_into_bins:alg_adaptive}), that is, to select at each turn a mini-deck that is below the average.
The fact that we choose from a (known) subset of all available mini-decks gives some advantage to the Guesser, but not a huge one: as we will show, its probability of guessing correctly is multiplied only by a constant.

\subsection{\adaptiveDealer}

At the beginning of the game, the Dealer splits the cards into~$\minidecks$ mini-decks of equal size and tracks their progress.
In each turn, the Dealer selects a mini-deck and draws the card that resides at the top of this mini-deck.
To keep the mini-decks balanced, the Dealer draws only from mini-decks whose progress is below a certain threshold, which is the average $+ 1$.
Let~$\loadSymbol_{i, \turn}$ be the number of cards drawn from the~$i$th mini-deck by turn~$\turn$.
At turn~$\turn$ the Dealer may draw from the~$i$th mini-deck if~$\loadSymbol_{i, \turn} < \threshold$.
The Dealer performs rejection sampling to select a mini-deck, i.e., the Dealer samples repeatedly until a suitable mini-deck is found (we will see how to do this more efficiently).
The Dealer plays this way until~$2\minidecks$ cards are left, at which point some mini-decks might run empty. We refer to the first~$n-2\minidecks$ turns as the \phaseAdaptive.
When~$2\minidecks$ turns are left, the Dealer shuffles all the remaining cards together and draws them at random, one after another.
We refer to the last~$2\minidecks$ turns as the \phaseFinal.
An algorithmic description of our Dealer is described in~\Cref{sec_adaptive:alg_adaptive_dealer}.

We now describe the Dealer's algorithm. It consists of the  the \phaseAdaptive for the first $n - 2 \minidecks$  turns, and the \phaseFinal for the last $2 \minidecks$ turns. 

\begin{algorithm}[ht]
\caption{\adaptiveDealer}
\label{sec_adaptive:alg_adaptive_dealer}
\small
\begin{algorithmic}
    \State
    $\loadSymbol_i = 0 : \forall i \in \minidecks$

    \State
    \LineComment
        \textbf{\phaseAdaptive}
    \For {turn $\turn \in [n - 2 \minidecks]$}
    \LineComment {draw a card from a suitable mini-deck}
    \Repeat
        
        \State
            $i \sim [\minidecks]$
            \Comment sample a mini-deck uniformly at random
        \If {$\loadSymbol_i < \threshold $}
            \Comment If the sampled mini-deck is less advanced than the threshold
            \State Draw the top card from the~$i$th mini-deck
            \State $\loadSymbol_i \gets \loadSymbol_i + 1$
        \EndIf
    \Until a card is drawn
    \EndFor
    \State
    \LineComment{\textbf{\phaseFinal}}
    \State
    $A \gets$ all remaining cards
    \For {turn $\turn \in \{n-2\minidecks, \dots, n\}$}
        \State
            Draw and discard a random card from~$A$
    \EndFor
    
\end{algorithmic}
\end{algorithm}

\paragraph{Implementing \Cref{sec_adaptive:alg_adaptive_dealer}:}
In terms of space, our Dealer tracks the progress of the various mini-decks by their distance from the threshold.
Let~$\deviationInst_{i, \turn}$ denote the distance of the~$i$th mini-deck from the threshold in turn~$\turn$, that is~$\deviationInst_{i, \turn} = \threshold - \loadSymbol_{i, \turn}$.
We encode the distances using Elias encoding\footnote{Any other variable-length encoding for positive integers that encodes a number~$x \in \fN$ using $O(\log(x))$ bits would work for our needs. In fact, in terms of space, we can allow ourselves to squander and encode the distances in unary, but that would come at the run time's expense.}, and store the distances in a Variable bit-Length Array, as in~\Cref{sec_preliminaries:sec_vla}.
This allows us to update and track the state of a mini-deck in constant time.
But how many mini-decks can we track this way, so that the total memory used is less than $m$?

We treat the implementation of the~\phaseFinal in~\Cref{sec_linear_perfect:sec_discussion}, since it overlaps with the implementation of the fast linear memory perfect permutation generator (note that the number of cards left is $O(d)$ and the memory is $O(d)$ as well).

\newcommand{\stage}{\tau}

Observe that the value of the threshold increases every~$\minidecks$ turns.
We refer to the span of turns during which the threshold is the same as a \emph{stage}. So there are~$n / \minidecks$ stages, and the threshold is~$\stage +1 $ during the~$\stage$th stage.
We refer to the distance between the progress of the mini-decks and the threshold by~\emph{holes}. So when it is possible to draw~$k$ more cards from a mini-deck before reaching the threshold, we say that the mini-deck has~$k$ holes\footnote{Holes is a burrowed term from the balls-into-bins literature, in which the distance between the progress of less-advanced bins and a target load (in our case, the threshold) is sometimes referred to by the term~\emph{holes}; this captures the volume absent of balls that should be filled.}.

\begin{claim}
    \label{sec_adaptive:claim_number_of_holes_smaller_than_2d}
    For any turn~$\turn$, the total number of holes is between~$\minidecks$ and~$2\minidecks$:
    $$
    \minidecks \le \sum_{i \in [\minidecks]} \deviationInst_{i,\turn} \le 2 \minidecks
    .
    $$
\end{claim}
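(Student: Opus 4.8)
The plan is a short counting argument. The key point is that the total number of holes at turn~$\turn$ depends only on~$\turn$ and~$\minidecks$ — not on how the draws so far have been distributed among the mini-decks — and is therefore immediate to bound.

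First I would unfold the definition. Since $\deviationInst_{i,\turn} = \threshold - \loadSymbol_{i,\turn}$ and the threshold is the same for every mini-deck, summing over $i \in [\minidecks]$ gives
$$
\sum_{i \in [\minidecks]} \deviationInst_{i,\turn}
\;=\;
\minidecks \cdot \left(\lceil \turn/\minidecks \rceil + 1\right) - \sum_{i \in [\minidecks]} \loadSymbol_{i,\turn}.
$$
The inner \textbf{repeat}-loop of \Cref{sec_adaptive:alg_adaptive_dealer} draws exactly one card per turn, so the mini-deck loads partition the set of cards drawn so far, whence $\sum_{i \in [\minidecks]} \loadSymbol_{i,\turn} = \turn$ (or $\turn-1$, depending on whether the loads are read at the end or at the start of turn~$\turn$ — this only shifts the final constant by~$1$).

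It then remains to bound $\minidecks \lceil \turn/\minidecks\rceil - \turn$. Since $\minidecks \lceil \turn/\minidecks\rceil$ is the least multiple of~$\minidecks$ that is at least~$\turn$, it is at most $\turn + \minidecks - 1$, and therefore
$$
\sum_{i \in [\minidecks]} \deviationInst_{i,\turn} \;\le\; (\turn + \minidecks - 1) + \minidecks - \turn \;=\; 2\minidecks - 1 \;<\; 2\minidecks,
$$
as claimed. I do not anticipate any real obstacle; the only thing to be careful about is fixing, once and for all, the convention for when within a turn the loads are measured and the $\lceil\cdot\rceil$-rounding in the threshold, since a sloppy treatment moves the constant by~$\pm 1$. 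As a side remark not needed for the inequality: each $\deviationInst_{i,\turn}$ is in fact nonnegative, because the threshold is non-decreasing in~$\turn$ and a mini-deck is drawn from only while its load is strictly below the current threshold, so no load ever exceeds~$\threshold$ — this is what makes the Elias encoding of the holes well-defined and shows that the sum above involves no cancellation.
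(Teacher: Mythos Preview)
Your argument is correct and amounts to the same elementary counting as the paper's proof; the only stylistic difference is that you compute $\sum_i \deviationInst_{i,\turn} = \minidecks(\thresholdAverage+1) - \sum_i \loadSymbol_{i,\turn}$ directly, whereas the paper tracks this quantity turn by turn (it starts at $2\minidecks$, drops by $1$ each draw, and jumps by $\minidecks$ when the stage advances). Your remark about the $\pm 1$ shift from the start-vs-end-of-turn convention is exactly right --- under the paper's convention the bound is attained with equality at the first turn of every stage, matching their ``exactly~$2\minidecks$ at the beginning of every stage.''
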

\begin{proof}
    At the first turn of the first stage, the threshold is~$2$, and all mini-decks are full, so the Dealer can draw~$2$ cards from each mini-deck; thus, the number of holes is~$2\minidecks$.
    In each turn the Dealer draws a card, so the number of holes decreases by~$1$. 
    After~$\minidecks$ turns, the stage ends and the threshold grows by one, so the number of holes increases by~$\minidecks$.
    Overall, the number of holes is at most~$2\minidecks$ during all turns, exactly~$2\minidecks$ at the beginning of every stage, and at least~$\minidecks$.
\end{proof}
\begin{lemma}
    \label{sec_adaptive:claim_number_of_minidecks}
    The \adaptiveDealer with~$m$ bits of memory tracks~$\minidecks=\Theta(m)$ mini-decks.
\end{lemma}
\begin{proof}
    The Dealer tracks the progress of the mini-decks by their distance from the threshold and encodes the distances using an encoding function~$\encodeFunction$, such that~$\encodeFunction(x) \le O(\log x )$.
    We get that the Dealer stores~$\minidecks$ distances and that the total number of bits required to store them is:
    \begin{flalign*}
    \sum_{i \in [\minidecks]} \encodeFunction(x_{i, \turn}) 
    = 
    \sum_{i \in [\minidecks]} O(\log(x_{i, \turn}))
    = 
    \sum_{i \in [\minidecks]} O(x_{i, \turn})
    = 
    O(\minidecks)
    .
    \end{flalign*}
    By using Variable bit-Length Array, and from~\Cref{sec_preliminaries:sec_vla:thm_variable_bit_length_arrays}, we know that the Dealer can do so by using~$O(\minidecks)$ bits.
\end{proof}

\paragraph{Roadmap for the rest of the proof:}
Having settled the number of mini-decks that we can track, we would like to show that our Dealer has a sufficient number of mini-decks that it can draw cards from in every turn in the~\phaseAdaptive. This would imply that there are sufficiently many possible cards to draw in each turn, so the Dealer is as unpredictable as we want throughout the game, and that the Dealer quickly finds a suitable mini-deck.

For turn~$\turn$, we consider the vector of holes~$\deviationInst_\turn=(\deviationInst_{1, \turn}, \dots, \deviationInst_{\minidecks, \turn})$.
Many mini-decks with holes implies that there are many options for the next card, so our goal is to show that the mini-decks progress somewhat evenly throughout the game, and that the holes are spread across the mini-decks.
We apply the exponential potential function to the vector of holes and analyze the game through the lens of this function.

Set~$\potentialParamSymbol=\potentialParamValue$, and 
define the \emph{potential contribution} of the~$i$th mini-deck at the turn~$\turn$ to be~$\potential_i(\deviationInst_\turn) =  (1+\potentialParamSymbol)^{\deviationInst_{i,\turn}}$, and let the \emph{potential} of the holes vector~$\deviationInst_\turn$ at turn~$\turn$ be
$$
\potential(\deviationInst_\turn) 
= 
\sum_{i \in [\minidecks]} \potential_i(\deviationInst_\turn) 
=
\sum_{i \in [\minidecks]} (1+\potentialParamSymbol)^{\deviationInst_{i,\turn}}
.
$$

As an intuition, since there are at least~$\minidecks$ holes (\Cref{sec_adaptive:claim_number_of_holes_smaller_than_2d}), in the extreme case where all holes are in one mini-deck, then the potential is~$2^{\Omega(\minidecks)}$.
Contrary to this example, we will show that the expected value of the potential function is~$O(d)$ (in \Cref{sec_adaptive:sec_bounding}). 
We then use it to show that our Dealer is sufficiently unpredictable and that each turn takes a constant time in expectation (\Cref{sec_adaptive:sec_predictable}).

\subsection{Bounding the Potential Function}
\label{sec_adaptive:sec_bounding}

As our Dealer relies heavily on the~\adaptiveTerm allocation scheme~(\Cref{sec_preliminaries:sec_balls_into_bins:alg_adaptive}), we will walk the path paved by~Berenbrink et al.\cite{Berenbrink_Khodamoradi_Sauerwald_Stauffer_2013} to show our results.

Our main machinery is the exponential potential function that we apply to the vector of holes.
This potential function is monotone in the sense that mini-decks that lag behind, that is, have more holes, contribute more to the value of the potential function.
We start by showing that there is some constant~$\lagConstant$ such that, if a mini-deck has at least~$\lagConstant$ holes at the end of a stage, then we expect to draw strictly more than one card from it during the following stage, thus, it progresses faster and has fewer holes than at the beginning of the stage.
As a result, we expect its potential contribution to decrease.
We then show that the potential decrease caused by lagging mini-decks suffices to bound the expected potential value of the vector of holes throughout the entire game.

As a convention, we will use Greek letters for constants, Capital letters for sets, {\bf bold} ones for random variables, and lower letters for specific values of the corresponding random variables.
We will often change focus between turns and stages (recall that stages are made of~$\minidecks$ consecutive turns).
Therefore, for clarity, we will use subscripted~${}_{\turn}$ to refer to the state of objects during turn~$\turn$ and superscripted~${}^{\stage}$ when referring to stages.

Let $\loadRV_i^\stage$ be the random variable measuring the number of cards drawn from the~$i$th mini-deck by the end of stage~$\stage$, and consider the ``global'' random variable vector that represents the progress of all mini-decks~$\loadRV^\stage =(\loadRV_1^\stage, \dots, \loadRV_\minidecks^\stage)$.
Consider the equivalent random variable~$\deviationRV^{\stage}$ that describes the number of holes in the~$i$th mini-deck by the end of stage~$\stage$ and is defined as~$\deviationRV_i^{\stage} = \stage + 2 - \loadRV_i^\stage$, and the random vector of deviation at the end of the stage~$\deviationRV^{\stage}=(\deviationRV_1^{\stage}, \dots, \deviationRV_\minidecks^{\stage})$.
While we are at it, let~$\loadRV_{i, \turn}, \loadRV_{\turn}, \deviationRV_{i, \turn}, \deviationRV_{\turn}$ be the corresponding random variables at a specific turn~$\turn$.
Let $\allocationRV_i^{\stage}$ be the random variable measuring the number of cards drawn from the~$i$th mini-deck at stage~$\stage$.

We start by showing that mini-decks that are too far from the threshold at some stage are likely to get closer to the threshold in the following stage; moreover, we expect to draw slightly more than~$1$ card from such mini-decks.
Call a mini-deck \emph{\lagConstantLagging} at the end of stage~$\stage$ if it has more than~$\lagConstant$ holes, for some constant~$\lagConstant$.

\begin{proposition}[Lemma 3.3 in~\cite{Berenbrink_Khodamoradi_Sauerwald_Stauffer_2013}]
\label{sec_adaptive:claim_lagging_mini_deck_yields_more_than_one_draw}
    There exists a constant~$\lagConstant$, such that for any stage~$\stage$ and any~$0 \le k \le \lagConstant$,
    if mini-deck~$i$ is \lagConstantLagging at the end of stage~$\stage$, then the probability that the \adaptiveDealer draws at least~$k$ cards from the~$i$th mini-deck at stage~$\stage+1$ is at least
    $$
    \Pr[\allocationRV_i^{\stage+1} \ge k] 
    \ge 
    \Pr[\poison(199/198) \ge k] - 2 \cdot 10^{-10}
    .
    $$
\end{proposition}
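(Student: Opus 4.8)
The plan is to follow the single‑stage analysis of Berenbrink, Khodamoradi, Sauerwald, and Stauffer~\cite{Berenbrink_Khodamoradi_Sauerwald_Stauffer_2013}. First I would fix the configuration of the mini-decks at the end of stage~$\stage$, conditioned on mini-deck~$i$ being \lagConstantLagging, and view the~$\minidecks$ turns of stage~$\stage+1$ as a balls-into-bins experiment: $\minidecks$ balls (the~$\minidecks$ draws of the stage), $\minidecks$ bins (the mini-decks), and a threshold frozen at~$\stage+2$. Two structural facts drive the argument. First, since mini-deck~$i$ has more than~$\lagConstant$ holes at the start of the stage, its load is at most~$\stage+1-\lagConstant$, so it stays below the threshold (drawable) as long as it has received at most~$\lagConstant$ cards during the stage; hence for every~$k\le\lagConstant$ the threshold test never blocks mini-deck~$i$ during the ``race'' to its~$k$-th draw, and we may pretend mini-deck~$i$ is always a legal target when lower-bounding~$\Pr[\allocationRV_i^{\stage+1}\ge k]$. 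Second, within one turn the Dealer resamples until it hits a drawable mini-deck, and by symmetry the drawn card comes from a uniformly random drawable mini-deck; so, conditioned on mini-deck~$i$ being drawable, it is the one drawn from in that turn with probability exactly~$1/\drawableMinidecks_\turn$, where~$\drawableMinidecks_\turn$ denotes the number of mini-decks drawable at turn~$\turn$ (which is always at most~$\minidecks$).

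Combining these, the number of cards drawn from mini-deck~$i$ during the stage stochastically dominates a sum of~$\minidecks$ Bernoulli steps with success probabilities~$1/\drawableMinidecks_\turn\ge1/\minidecks$, so~$\Pr[\allocationRV_i^{\stage+1}\ge k]\ge\Pr[\mathrm{Bin}(\minidecks,1/\minidecks)\ge k]$. This is not yet enough: $\mathrm{Bin}(\minidecks,1/\minidecks)$ converges to~$\poison(1)$, whereas the claim asks for the strictly larger~$\poison(199/198)$, so the effective rate must be pushed above~$1$. Here I would use the hole budget. By~\Cref{sec_adaptive:claim_number_of_holes_smaller_than_2d} the total number of holes at the start of stage~$\stage+1$ is exactly~$2\minidecks$, spread over the~$\minidecks$ mini-decks, all of which are drawable and each of which has at least one hole; thus a constant fraction of the mini-decks carry only a couple of holes and become undrawable after receiving one or two cards, which forces~$\drawableMinidecks_\turn$ below~$\minidecks$ for a constant fraction of the turns. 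Making this precise — a concentration bound on the number of mini-decks that fill up, and early enough in the stage — yields~$\sum_\turn 1/\drawableMinidecks_\turn\ge199/198$ except with probability at most~$10^{-10}$, which upgrades the coupling to~$\Pr[\allocationRV_i^{\stage+1}\ge k]\ge\Pr[\mathrm{Bin}(\minidecks,(199/198)/\minidecks)\ge k]-10^{-10}$.

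The last step is the binomial-to-Poisson passage: the total variation distance between~$\mathrm{Bin}(\minidecks,\lambda/\minidecks)$ and~$\poison(\lambda)$ is~$O(1/\minidecks)$ by a Le~Cam / Chen--Stein estimate, which is below~$10^{-10}$ once~$\minidecks$ exceeds an absolute constant (the regime of interest). Since~$\lambda\mapsto\Pr[\poison(\lambda)\ge k]$ is increasing, this gives~$\Pr[\allocationRV_i^{\stage+1}\ge k]\ge\Pr[\poison(199/198)\ge k]-2\cdot10^{-10}$, and~$\lagConstant$ may be taken to be the absolute constant implicit in the estimates above (any fixed constant works, since making mini-deck~$i$ more lagging only uses up more of the hole budget and hence only makes the other mini-decks fill faster).

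I expect the main obstacle to be the quantitative heart of the second paragraph: rigorously lifting the effective rate from~$1$ to~$199/198$, i.e.\ showing that enough mini-decks become undrawable \emph{early enough} during stage~$\stage+1$ to make~$\sum_\turn 1/\drawableMinidecks_\turn>199/198$ with overwhelming probability. Everything else — the drawability window, the within-turn uniformity, the stochastic domination, and the Poisson approximation — is essentially bookkeeping, whereas this step is exactly where the~$2\minidecks$-hole structure of the \adaptiveDealer must be exploited, and it is where the engineered constants~$199/198$ and~$2\cdot10^{-10}$ originate.
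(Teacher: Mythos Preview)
The paper does not prove this proposition; it is quoted as Lemma~3.3 of~\cite{Berenbrink_Khodamoradi_Sauerwald_Stauffer_2013} and used as a black box, so there is no in-paper proof to compare your sketch against.

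Your outline is a faithful reconstruction of the expected argument: the drawability window for an $\lagConstant$-lagging mini-deck, the within-turn uniformity over drawable mini-decks, the baseline coupling with $\mathrm{Bin}(\minidecks,1/\minidecks)$, and the Le~Cam passage to Poisson are all sound. The one step that is more delicate than your sketch lets on is the upgrade from rate~$1$ to rate~$199/198$. Even granting that $\sum_\turn 1/|\drawableMinidecks_\turn|\ge 199/198$ holds with overwhelming probability, this does not directly give domination by a binomial (or Poisson) of that rate, because $|\drawableMinidecks_\turn|$ is correlated with the draws from mini-deck~$i$ itself: every hit on~$i$ is a non-hit on the other mini-decks, which keeps them drawable longer and inflates $|\drawableMinidecks_\turn|$ in later turns. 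This negative feedback means that conditioning on the $|\drawableMinidecks_\turn|$ sequence does not leave the hits on~$i$ as independent Bernoullis with the advertised rates. A clean way through is to compare against a surrogate process in which the evolution of the other $\minidecks-1$ mini-decks is decoupled from~$i$ (so that their fill-up times, and hence the effective rate seen by~$i$, are determined independently of~$i$'s own draws). You are right that this is the crux, but the obstacle is not only quantitative---getting the constants $199/198$ and $2\cdot10^{-10}$---but also structural, in handling this dependence.
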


We use the exponential potential function to analyze the Dealer's progress throughout the game. 
Consider some deviation vector~$\deviationInst^{\stage} = (\deviationInst^{\stage}_1, \dots, \deviationInst^{\stage}_\minidecks)$ that captures the mini-decks holes by the end of stage~$\stage$.
And recall that the further a mini-deck is from the threshold, the larger is its potential contribution. 
At the same time, observe that the potential contribution cannot grow much between stages.
\begin{claim}
    \label{sec_adaptive:claim_potential_increase_general_upper_bound}
    For any mini-deck~$i \in [\minidecks]$ and any stage~$\stage$, the potential contribution of the~$i$th mini-deck grows by a factor of at most~$(1+\potentialParamSymbol)$ by the end of stage~$\stage+1$:
    $$
    \potential_i(\deviationInst^{\stage+1})
    \le 
    (1+\potentialParamSymbol) \cdot \potential_i(\deviationInst^{\stage})
    .
    $$
\end{claim}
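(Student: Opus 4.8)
Since $1+\potentialParamSymbol>1$, the ratio between the contributions of a single mini-deck at two consecutive stages is governed entirely by how much its hole count can grow, namely
$$
\frac{\potential_i(\deviationInst^{\stage+1})}{\potential_i(\deviationInst^{\stage})}
=
(1+\potentialParamSymbol)^{\,\deviationInst_i^{\stage+1}-\deviationInst_i^{\stage}} .
$$
So the whole claim reduces to the purely combinatorial statement that $\deviationInst_i^{\stage+1}-\deviationInst_i^{\stage}\le 1$, i.e.\ the number of holes in any mini-deck increases by at most one from one stage to the next. The plan is to prove exactly this and then substitute.

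First I would rewrite the hole counts in terms of the loads, using the definition $\deviationRV_i^{\stage} = \lceil \turn/\minidecks\rceil + 2 - \loadRV_i^\stage$ evaluated at the last turn of the respective stage (where $\lceil \turn/\minidecks\rceil$ equals $\stage$ for stage $\stage$ and $\stage+1$ for stage $\stage+1$). This gives
$$
\deviationInst_i^{\stage+1}-\deviationInst_i^{\stage}
=
\bigl((\stage+1)+2-\loadSymbol_i^{\stage+1}\bigr)-\bigl(\stage+2-\loadSymbol_i^{\stage}\bigr)
=
1-\bigl(\loadSymbol_i^{\stage+1}-\loadSymbol_i^{\stage}\bigr)
=
1-\allocationRV_i^{\stage+1},
$$
where $\allocationRV_i^{\stage+1}=\loadSymbol_i^{\stage+1}-\loadSymbol_i^{\stage}$ is the number of cards drawn from mini-deck $i$ during stage $\stage+1$. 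The only fact needed to finish is $\allocationRV_i^{\stage+1}\ge 0$, which is immediate: the \adaptiveDealer only ever draws cards (and its update only increments some $\loadSymbol_i$), so each $\loadSymbol_i$ is non-decreasing in the turn index. Hence $\deviationInst_i^{\stage+1}-\deviationInst_i^{\stage}\le 1$, and plugging back in yields $\potential_i(\deviationInst^{\stage+1})\le (1+\potentialParamSymbol)\cdot\potential_i(\deviationInst^{\stage})$.

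There is essentially no hard step here; the claim is a one-line monotonicity observation once the bookkeeping is set up. The only thing to be careful about is reconciling the off-by-one conventions between the per-turn threshold ($\lceil \turn/\minidecks\rceil+1$), the per-stage deviation variable $\deviationRV_i^{\stage}$ (which carries an extra $+1$ because it records the holes as seen at the start of the next stage, after the threshold bump), and the fact that during stage $\stage+1$ the threshold rises by exactly $1$. A secondary point worth a sentence is the edge case of an exhausted mini-deck: even when $\loadSymbol_i$ has hit $n/\minidecks$ and no further draws are possible, $\allocationRV_i^{\stage+1}=0$ and the bound $\deviationInst_i^{\stage+1}-\deviationInst_i^{\stage}=1$ still holds, so nothing special is needed.
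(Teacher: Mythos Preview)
Your proposal is correct and takes essentially the same approach as the paper: both arguments boil down to the observation that if no card is drawn from mini-deck $i$ during a stage, its hole count grows by exactly one (the threshold bump), and it cannot grow by more since loads are non-decreasing. The paper compresses this into a single sentence, whereas you spell out the bookkeeping $\deviationInst_i^{\stage+1}-\deviationInst_i^{\stage}=1-\allocationRV_i^{\stage+1}\le 1$ explicitly.
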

\begin{proof}
    It could be the case that the Dealer doesn't sample the $i$th mini-deck during a stage, resulting in the mini-deck~$i$ getting farther from the threshold by~$1$, and that's the worst that can happen.
\end{proof}

The following \namecref{sec_adaptive:claim_potential_contribution_decrease_of_lagging_mini_decks} shows that there exists a constant~$0 <\decreaseConstant < 1$ that depends on~$\lagConstant$, such that the potential contribution of $\lagConstant$-lagging mini-decks decreases in expectation by a factor of~$(1-\decreaseConstant)$.
\begin{proposition}[Lemma 3.4 in~\cite{Berenbrink_Khodamoradi_Sauerwald_Stauffer_2013}]
    \label{sec_adaptive:claim_potential_contribution_decrease_of_lagging_mini_decks}
    If mini-deck~$i$ is \lagConstantLagging at the end of stage~$\stage$, then its potential contribution is expected to decrease at least by a factor of~$(1-\decreaseConstant)$ on stage~$\stage+1$
    $$
    \fE\left[\potential_i(\deviationRV^{\stage+1}) | \deviationRV^{\stage} = \deviationInst^\stage \right] 
    \le
    (1 - \decreaseConstant) \cdot \potential_i(\deviationInst^{\stage})
    .
    $$
\end{proposition}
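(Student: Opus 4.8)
The plan is to reduce the statement to a one-dimensional estimate about the number of cards $\allocationRV_i^{\stage+1}$ that the \adaptiveDealer draws from mini-deck $i$ during stage $\stage+1$. First I would record the bookkeeping identity for holes: the threshold increases by exactly one between stage $\stage$ and stage $\stage+1$, and the Dealer removes $\allocationRV_i^{\stage+1}$ cards from mini-deck $i$ during stage $\stage+1$, so conditioned on $\deviationRV^\stage = \deviationInst^\stage$ we have $\deviationRV_i^{\stage+1} = \deviationInst_i^\stage + 1 - \allocationRV_i^{\stage+1}$, hence
\[
\potential_i(\deviationRV^{\stage+1}) = (1+\potentialParamSymbol)^{\deviationInst_i^\stage}\cdot(1+\potentialParamSymbol)^{1-\allocationRV_i^{\stage+1}} = \potential_i(\deviationInst^\stage)\cdot(1+\potentialParamSymbol)^{1-\allocationRV_i^{\stage+1}}.
\]
Thus it suffices to exhibit a constant $\decreaseConstant > 0$, depending only on $\lagConstant$ and $\potentialParamSymbol$, such that $\fE\bigl[(1+\potentialParamSymbol)^{1-\allocationRV_i^{\stage+1}} \mid \deviationRV^\stage = \deviationInst^\stage\bigr] \le 1 - \decreaseConstant$ for every $\deviationInst^\stage$ in which mini-deck $i$ is \lagConstantLagging.

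Second, I would bound this expectation using only the tail information of \Cref{sec_adaptive:claim_lagging_mini_deck_yields_more_than_one_draw}. Write $g(y) = (1+\potentialParamSymbol)^{1-y}$, which is positive, strictly decreasing, and satisfies $g(y) - g(y-1) = -\potentialParamSymbol(1+\potentialParamSymbol)^{1-y}$; summation by parts (Abel summation) gives $\fE[g(\allocationRV_i^{\stage+1})] = g(0) + \sum_{y\ge 1}\bigl(g(y)-g(y-1)\bigr)\Pr[\allocationRV_i^{\stage+1}\ge y]$. Because every coefficient $g(y)-g(y-1)$ is negative, I may (i) drop all terms with $y > \lagConstant$, each of which is non-positive, and (ii) replace $\Pr[\allocationRV_i^{\stage+1}\ge y]$ for $1 \le y \le \lagConstant$ by the \emph{smaller} quantity $\Pr[\poison(199/198)\ge y] - 2\cdot 10^{-10}$ — legitimate precisely because mini-deck $i$ is \lagConstantLagging, by \Cref{sec_adaptive:claim_lagging_mini_deck_yields_more_than_one_draw} — and both steps only increase the bound. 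This yields the fully explicit estimate
\[
\fE\bigl[(1+\potentialParamSymbol)^{1-\allocationRV_i^{\stage+1}} \mid \deviationRV^\stage = \deviationInst^\stage\bigr] \le (1+\potentialParamSymbol) - \potentialParamSymbol \sum_{y=1}^{\lagConstant} (1+\potentialParamSymbol)^{1-y}\Bigl(\Pr[\poison(199/198)\ge y] - 2\cdot 10^{-10}\Bigr).
\]

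Third, I would verify that the right-hand side is a constant strictly below $1$ when $\potentialParamSymbol = \potentialParamValue$, which is exactly where the choices of $\potentialParamSymbol$ and $\lagConstant$ enter. Using the Poisson moment generating function, the infinite analogue $\sum_{y\ge 1}(1+\potentialParamSymbol)^{1-y}\Pr[\poison(\lambda)\ge y]$ equals $\bigl((1+\potentialParamSymbol) - (1+\potentialParamSymbol)e^{-\lambda \potentialParamSymbol/(1+\potentialParamSymbol)}\bigr)/\potentialParamSymbol$, and for $\lambda = 199/198$ and $\potentialParamSymbol = 1/200$ this strictly exceeds $1$ — equivalently $(1+\potentialParamSymbol)e^{-(199/198)\potentialParamSymbol/(1+\potentialParamSymbol)} < 1$, which holds because $\potentialParamSymbol$ is chosen small relative to the Poisson excess rate $1/198$ (so that $(1+\potentialParamSymbol)\ln(1+\potentialParamSymbol) < (199/198)\potentialParamSymbol$). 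Truncating the sum at $\lagConstant$ discards only a rapidly decaying Poisson tail, and the accumulated additive slack is at most $2\cdot 10^{-10}\cdot (1+\potentialParamSymbol)/\potentialParamSymbol$; hence, taking $\lagConstant$ to be a sufficiently large absolute constant — which we may, as we only need $\lagConstant$ at least as large as the constant furnished by \Cref{sec_adaptive:claim_lagging_mini_deck_yields_more_than_one_draw} — the truncated, error-inflated sum still exceeds $1 + \Omega(1)$, so the displayed bound is at most $1-\decreaseConstant$ for a fixed $\decreaseConstant>0$. Multiplying back through by $\potential_i(\deviationInst^\stage)$ finishes the proof. The one delicate point, and the place I expect to spend the most care, is this final constant-chasing: the gap by which the target expectation lies below $1$ is only of order $\potentialParamSymbol$, so one must confirm that the Poisson truncation error and the $\lagConstant$-fold accumulation of the $2\cdot 10^{-10}$ errors are comfortably inside that gap, which pins down how small $\lagConstant$ may be and ties this lemma to the precise constant of \Cref{sec_adaptive:claim_lagging_mini_deck_yields_more_than_one_draw}; everything else is routine summation by parts and elementary Poisson-tail estimates.
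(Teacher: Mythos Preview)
The paper does not prove this proposition; it is quoted as Lemma~3.4 of Berenbrink, Khodamoradi, Sauerwald, and Stauffer and used as a black box. Your argument is the natural route and is essentially how the original reference proceeds: write $\potential_i(\deviationRV^{\stage+1})/\potential_i(\deviationInst^\stage) = (1+\potentialParamSymbol)^{1-\allocationRV_i^{\stage+1}}$, then control the expectation of the right-hand side via the tail bounds of \Cref{sec_adaptive:claim_lagging_mini_deck_yields_more_than_one_draw} together with a Poisson moment-generating-function computation.

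One small point of exposition: you write that you may take $\lagConstant$ ``sufficiently large,'' but $\lagConstant$ is not a free parameter --- it is the specific constant furnished by \Cref{sec_adaptive:claim_lagging_mini_deck_yields_more_than_one_draw}, and the tail bound there is asserted only for $0 \le k \le \lagConstant$. What you actually need is that \emph{this} $\lagConstant$ is already large enough that the Poisson truncation past $\lagConstant$ is negligible compared to the (rather small, order $10^{-5}$) gap by which $(1+\potentialParamSymbol)\,\fE\bigl[(1+\potentialParamSymbol)^{-\poison(199/198)}\bigr]$ sits below~$1$. Since Poisson tails decay super-exponentially, any moderate $\lagConstant$ suffices, and the constant produced in the original proof of Lemma~3.3 is indeed large enough; but the logical dependence runs ``$\lagConstant$ is given, check it works'' rather than ``pick $\lagConstant$ large.'' Apart from this ordering of quantifiers, your Abel-summation bound and constant-chasing are accurate and the proof is sound.
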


Let~$\potentialThresholdSymbol = \potentialThresholdTerm$ be a potential threshold constant.
We use~\Cref{sec_adaptive:claim_potential_contribution_decrease_of_lagging_mini_decks} to show that if the potential is too big, then the potential contribution decrease of lagging mini-decks is expected to reduce the potential of the entire system by the end of the following stage.

\begin{lemma}
    \label{sec_adaptive:claim_potential_decreases_if_too_large}
    For any stage~$\stage$, and any holes vector~$\deviationInst^\stage$ such that~$\deviationRV^\stage = \deviationInst^\stage$, if the potential~$\potential(\deviationInst^\stage)$ is larger than~$\potentialThresholdSymbol \cdot \minidecks$, then the potential of~$\deviationRV^{\stage+1}$ is expected to decrease by a factor of~$\left(1 - {\frac{\decreaseConstant}{2}}\right)$ by the end of stage~$\stage+1$,
    $$
    \fE[\potential(\deviationRV^{\stage+1}) | \deviationRV^{\stage} = \deviationInst^{\stage}]
    \le
    \left(1 - {\frac{\decreaseConstant}{2}}\right) \cdot \potential(\deviationInst^{\stage})
    .
    $$
    
\end{lemma}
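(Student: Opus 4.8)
The plan is to split the mini-decks at the end of stage $\stage$ into two groups according to whether they are \lagConstantLagging (more than $\lagConstant$ holes) or not, bound the expected stage-$(\stage+1)$ potential contribution of each group separately, and add them up. Write $\potential(\deviationInst^\stage) = \potential_{\mathrm{lag}} + \potential_{\mathrm{near}}$, where $\potential_{\mathrm{lag}} = \sum_{i:\,\deviationInst_i^\stage > \lagConstant} \potential_i(\deviationInst_i^\stage)$ and $\potential_{\mathrm{near}} = \sum_{i:\,\deviationInst_i^\stage \le \lagConstant} \potential_i(\deviationInst_i^\stage)$. For the lagging mini-decks, \Cref{sec_adaptive:claim_potential_contribution_decrease_of_lagging_mini_decks} gives $\fE[\potential_i(\deviationRV^{\stage+1}) \mid \deviationRV^\stage = \deviationInst^\stage] \le (1-\decreaseConstant)\potential_i(\deviationInst^\stage)$ term by term, so by linearity $\fE[\sum_{i \text{ lagging}} \potential_i(\deviationRV^{\stage+1})] \le (1-\decreaseConstant)\potential_{\mathrm{lag}}$. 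For the near mini-decks, I would use the crude per-mini-deck bound \Cref{sec_adaptive:claim_potential_increase_general_upper_bound}, $\potential_i(\deviationRV^{\stage+1}) \le (1+\potentialParamSymbol)\potential_i(\deviationInst^\stage)$ (this holds deterministically, hence also in expectation), giving $\fE[\sum_{i \text{ near}} \potential_i(\deviationRV^{\stage+1})] \le (1+\potentialParamSymbol)\potential_{\mathrm{near}}$.

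Next I would get a handle on $\potential_{\mathrm{near}}$ that does not depend on the unknown holes vector: since each near mini-deck has at most $\lagConstant$ holes, $\potential_i(\deviationInst_i^\stage) = (1+\potentialParamSymbol)^{\deviationInst_i^\stage} \le (1+\potentialParamSymbol)^{\lagConstant}$, and there are at most $\minidecks$ of them, so $\potential_{\mathrm{near}} \le (1+\potentialParamSymbol)^{\lagConstant}\minidecks$. Combining the two group bounds:
\[
\fE[\potential(\deviationRV^{\stage+1}) \mid \deviationRV^\stage = \deviationInst^\stage]
\le (1-\decreaseConstant)\potential_{\mathrm{lag}} + (1+\potentialParamSymbol)\potential_{\mathrm{near}}
\le (1-\decreaseConstant)\potential(\deviationInst^\stage) + (\decreaseConstant+\potentialParamSymbol)(1+\potentialParamSymbol)^{\lagConstant}\minidecks,
\]
where I bounded $\potential_{\mathrm{lag}} \le \potential(\deviationInst^\stage)$ and absorbed the $(1-\decreaseConstant)\potential_{\mathrm{near}}$ back in, so the surplus over $(1-\decreaseConstant)\potential(\deviationInst^\stage)$ coming from the near mini-decks is at most $(1+\potentialParamSymbol)\potential_{\mathrm{near}} - (1-\decreaseConstant)\potential_{\mathrm{near}} = (\decreaseConstant+\potentialParamSymbol)\potential_{\mathrm{near}} \le (\decreaseConstant+\potentialParamSymbol)(1+\potentialParamSymbol)^{\lagConstant}\minidecks$.

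Finally I would invoke the hypothesis $\potential(\deviationInst^\stage) > \potentialThresholdSymbol\minidecks$ with $\potentialThresholdSymbol = \potentialThresholdTerm = (\decreaseConstant+\potentialParamSymbol)(1+\potentialParamSymbol)^{\lagConstant}(2/\decreaseConstant)$ to convert the additive slack into a multiplicative one: $(\decreaseConstant+\potentialParamSymbol)(1+\potentialParamSymbol)^{\lagConstant}\minidecks = (\decreaseConstant/2)\potentialThresholdSymbol\minidecks \le (\decreaseConstant/2)\potential(\deviationInst^\stage)$, which yields
\[
\fE[\potential(\deviationRV^{\stage+1}) \mid \deviationRV^\stage = \deviationInst^\stage]
\le (1-\decreaseConstant)\potential(\deviationInst^\stage) + \tfrac{\decreaseConstant}{2}\potential(\deviationInst^\stage)
= \left(1 - \tfrac{\decreaseConstant}{2}\right)\potential(\deviationInst^\stage),
\]
as claimed. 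I do not anticipate a serious obstacle here — the structure is the standard drift argument of Berenbrink et al., and the main thing to be careful about is that \Cref{sec_adaptive:claim_potential_contribution_decrease_of_lagging_mini_decks} is only available for lagging mini-decks (so the split is essential), and that the constant $\potentialThresholdSymbol$ was tuned precisely so that $(\decreaseConstant+\potentialParamSymbol)(1+\potentialParamSymbol)^{\lagConstant}/\potentialThresholdSymbol = \decreaseConstant/2$. One should also double-check that the conditioning $\deviationRV^\stage = \deviationInst^\stage$ fixes which mini-decks are lagging, so that the partition is deterministic given the conditioning and linearity of conditional expectation applies cleanly.
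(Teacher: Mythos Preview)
Your proposal is correct and follows essentially the same approach as the paper: split into \lagConstantLagging and non-lagging mini-decks, apply \Cref{sec_adaptive:claim_potential_contribution_decrease_of_lagging_mini_decks} and \Cref{sec_adaptive:claim_potential_increase_general_upper_bound} respectively, bound the non-lagging contribution by $(\decreaseConstant+\potentialParamSymbol)(1+\potentialParamSymbol)^{\lagConstant}\minidecks$, and use the hypothesis $\potential(\deviationInst^\stage) > \potentialThresholdSymbol\minidecks$ with the tuned value of $\potentialThresholdSymbol$ to absorb it as $(\decreaseConstant/2)\potential(\deviationInst^\stage)$. The algebra and the order of the bounds match the paper's proof line by line.
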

\begin{proof}
    \label{sec_adaptive:prf_potential_decreases_if_too_large}
    \newcommand{\drawableNotLag}{\drawableMinidecks_{\le\lagConstant}^{\stage}}
    
    Consider the subset of mini-decks~$\drawableNotLag \subseteq [\minidecks]$ such that, at the end of stage~$\stage$, are not \lagConstantLagging, i.e.~mini-deck~$i\in \drawableNotLag$ if it has at most~$\lagConstant$ holes.
    We get that:
    \begin{flalign*}
        \fE\left[\potential(\deviationRV^{\stage+1}) | \deviationRV^{\stage} = \deviationInst^\stage \right] 
        &
        =
        \sum_{i \in [\minidecks]}
        \fE\left[\potential_i(\deviationRV^{\stage+1}) | \deviationRV^{\stage} = \deviationInst^\stage \right] 
        \nonumber
        \\
        &
        =
        \sum_{i \notin \drawableNotLag}
        \fE\left[\potential_i(\deviationRV^{\stage+1}) | \deviationRV^{\stage} = \deviationInst^\stage \right] 
        +
        \sum_{i \in \drawableNotLag}
        \fE\left[\potential_i(\deviationRV^{\stage+1}) | \deviationRV^{\stage} = \deviationInst^\stage \right]  = (\ast)
        .
    \end{flalign*}
    
    For the second summand, we use the general potential growth upper bound from~\Cref{sec_adaptive:claim_potential_increase_general_upper_bound} to get that
    $
        \sum_{i \in \drawableNotLag}
        \fE\left[\potential_i(\deviationRV^{\stage+1}) | \deviationRV^{\stage} = \deviationInst^\stage \right] 
        \le
        (1+\potentialParamSymbol)
        \cdot
        \sum_{i \in \drawableNotLag} \potential_i(\deviationInst^\stage)
    $.
    And for the first summand, we apply~\Cref{sec_adaptive:claim_potential_contribution_decrease_of_lagging_mini_decks} to conclude that
    \begin{flalign*}
        \sum_{i \notin \drawableNotLag}
        \fE\left[\potential(\deviationRV^{\stage+1}) | \deviationRV^{\stage} = \deviationInst^\stage \right] 
        &
        \le
        (1-\decreaseConstant) \cdot
        \sum_{i \notin \drawableNotLag}
        \potential_i(\deviationInst^\stage)
        \\
        &
        =
        (1-\decreaseConstant) \cdot
        \left(
        \sum_{i \in [\minidecks]}
        \potential_i(\deviationInst^\stage)
        - 
        \sum_{i \in \drawableNotLag}
        \potential_i(\deviationInst^\stage)
        \right)
        \nonumber
        \\
        &
        =
        (1-\decreaseConstant) \cdot \potential(\deviationInst^\stage)
        +
        (\decreaseConstant-1) \cdot \sum_{i \in \drawableNotLag}
        \potential_i(\deviationInst^\stage)
        .
    \end{flalign*}

    Plugging these together into~$(\ast)$, we get that
    \begin{flalign}
         \fE\left[\potential(\deviationRV^{\stage+1}) | \deviationRV^{\stage} = \deviationInst^\stage \right] 
         &
         \le
        \left[(1-\decreaseConstant) \cdot \potential(\deviationInst^\stage) 
        +
        (\decreaseConstant-1) \cdot \sum_{i \in \drawableNotLag}
        \potential_i(\deviationInst^\stage) \right]
        +
        \left[
        (1+\potentialParamSymbol)
        \cdot
        \sum_{i \in \drawableNotLag} \potential_i(\deviationInst^\stage)
        \right]
        \nonumber
        \\
        &
        =
        (1-\decreaseConstant) \cdot \potential(\deviationInst^\stage)
        +
        (\decreaseConstant+\potentialParamSymbol) \cdot \sum_{i \in \drawableNotLag}
        \potential_i(\deviationInst^\stage)
        \nonumber
        \\
        &
        \le
        (1-\decreaseConstant) \cdot \potential(\deviationInst^\stage)
        +
        (\decreaseConstant+\potentialParamSymbol) \cdot \sum_{i \in \drawableNotLag}
        (1+\epsilon)^\lagConstant
        \label[ineq]{sec_adaptive:prf_potential_decreases_if_too_large:set_of_bounded_holes}
        \\
        &
        \le
        (1-\decreaseConstant) \cdot \potential(\deviationInst^\stage)
        +
        (\decreaseConstant+\potentialParamSymbol) \cdot \minidecks \cdot
        (1+\epsilon)^\lagConstant
        \label[ineq]{sec_adaptive:prf_potential_decreases_if_too_large:subset_is_at_most_set}
        \\
        &
        \le
        (1-\decreaseConstant) \cdot \potential(\deviationInst^\stage)
        +
        {\frac{\decreaseConstant}{2}} \cdot \potential (\deviationInst^\stage)
        \label[ineq]{sec_adaptive:prf_potential_decreases_if_too_large:by_assumption_on_potential_threshold}
        \\
        &
        =
        \left(1 - {\frac{\decreaseConstant}{2}}\right) \cdot \potential (\deviationInst^\stage)
        \nonumber
         .
    \end{flalign}
    Where
    \Cref{sec_adaptive:prf_potential_decreases_if_too_large:set_of_bounded_holes} is true because for every~$i \in \drawableNotLag$ it holds that~$\potential_i(\deviationInst^\stage) \le (1+\potentialParamSymbol)^\lagConstant$,
    \Cref{sec_adaptive:prf_potential_decreases_if_too_large:subset_is_at_most_set} is true because~$|\drawableNotLag| \le \minidecks$,
    and
    \Cref{sec_adaptive:prf_potential_decreases_if_too_large:by_assumption_on_potential_threshold} follows from our assumption that the potential is larger than
    $\potential(\deviationInst^\stage) \ge \potentialThresholdTerm \cdot \minidecks$.
\end{proof}

\begin{corollary}
    \label{sec_adaptive:claim_expected_potential_upper_bound}
    For every stage~$\stage$, the expected potential is at most
    $$
    \fE\left[\potential\left(\deviationRV^\stage\right)\right]
    \le
    \minidecks \cdot \left( (1 + \potentialParamSymbol)^2 \cdot \potentialThresholdSymbol \cdot  \left ({\frac{2}{ \decreaseConstant}} \right) \right)
    =
    O(\minidecks)
    .
    $$
\end{corollary}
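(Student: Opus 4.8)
The plan is to prove this by a drift argument: I will show, by induction on the stage~$\stage$, the (slightly stronger) invariant
$\fE[\potential(\deviationRV^\stage)] \le (1+\potentialParamSymbol)\cdot\potentialThresholdSymbol\cdot(2/\decreaseConstant)\cdot\minidecks$, which already implies the stated bound since $1+\potentialParamSymbol \le (1+\potentialParamSymbol)^2$.

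The core is a one-step estimate. Fix a stage~$\stage$, condition on~$\deviationRV^\stage = \deviationInst^\stage$, and split into two regimes. If $\potential(\deviationInst^\stage) > \potentialThresholdSymbol\cdot\minidecks$, then~\Cref{sec_adaptive:claim_potential_decreases_if_too_large} gives the multiplicative contraction $\fE[\potential(\deviationRV^{\stage+1}) \mid \deviationRV^\stage = \deviationInst^\stage] \le (1-\decreaseConstant/2)\,\potential(\deviationInst^\stage)$. If instead $\potential(\deviationInst^\stage) \le \potentialThresholdSymbol\cdot\minidecks$, then summing the pointwise bound of~\Cref{sec_adaptive:claim_potential_increase_general_upper_bound} over the $\minidecks$ mini-decks yields the deterministic estimate $\potential(\deviationRV^{\stage+1}) \le (1+\potentialParamSymbol)\,\potential(\deviationInst^\stage) \le (1+\potentialParamSymbol)\potentialThresholdSymbol\minidecks$. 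A short check, using only $\decreaseConstant < 2$, shows both cases are subsumed by the single affine bound
$$
\fE[\potential(\deviationRV^{\stage+1}) \mid \deviationRV^\stage = \deviationInst^\stage] \le \left(1-\frac{\decreaseConstant}{2}\right)\potential(\deviationInst^\stage) + (1+\potentialParamSymbol)\potentialThresholdSymbol\minidecks ,
$$
and taking expectation over~$\deviationRV^\stage$ gives the recursion $a_{\stage+1} \le (1-\decreaseConstant/2)\,a_\stage + (1+\potentialParamSymbol)\potentialThresholdSymbol\minidecks$ for $a_\stage := \fE[\potential(\deviationRV^\stage)]$.

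The rest is unwinding this recursion. Its fixed point is $a^\ast = (1+\potentialParamSymbol)\potentialThresholdSymbol\minidecks\cdot(2/\decreaseConstant)$, and the property $a_\stage \le a^\ast$ is preserved by the recursion by the very definition of $a^\ast$; so it remains only to verify the base case $a_0 \le a^\ast$. At the start of the game every mini-deck is full and (since the initial threshold is~$2$) $\deviationInst_i^0 = 2$ for every~$i$, hence $\potential(\deviationInst^0) = (1+\potentialParamSymbol)^2\minidecks$. Comparing with~$a^\ast$ reduces to the constant inequality $\potentialThresholdSymbol\cdot(2/\decreaseConstant) \ge 1+\potentialParamSymbol$, which follows by substituting $\potentialThresholdSymbol = \potentialThresholdTerm$ and using $\decreaseConstant < 1$ (so $2/\decreaseConstant > 2$) together with $\lagConstant \ge 1$.

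I do not expect a genuine obstacle: this is the standard ``bounded potential via drift'' pattern. The only points needing care are (a) correctly merging the contraction regime with the growth regime into one affine recursion — one must check that the additive slack $(1+\potentialParamSymbol)\potentialThresholdSymbol\minidecks$ dominates the extra growth $(\potentialParamSymbol+\decreaseConstant/2)\potential(\deviationInst^\stage)$ precisely when the potential is below $\potentialThresholdSymbol\minidecks$; (b) the handful of routine inequalities relating the constants $\potentialParamSymbol,\decreaseConstant,\lagConstant,\potentialThresholdSymbol$, which are immediate from their explicit definitions; and (c) observing that the cited Propositions~\ref{sec_adaptive:claim_lagging_mini_deck_yields_more_than_one_draw} and~\ref{sec_adaptive:claim_potential_contribution_decrease_of_lagging_mini_decks} already absorb any end-of-game edge behaviour (mini-decks running empty, the threshold exceeding the mini-deck size), so nothing further is needed there.
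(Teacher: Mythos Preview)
Your proposal is correct and rests on the same two ingredients as the paper: the deterministic growth bound of \Cref{sec_adaptive:claim_potential_increase_general_upper_bound} and the contraction of \Cref{sec_adaptive:claim_potential_decreases_if_too_large}, together with the base case $\potential(\deviationInst^0)=(1+\potentialParamSymbol)^2\minidecks$.

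The organization differs slightly, and yours is cleaner. You condition on the \emph{realized} state $\deviationRV^\stage=\deviationInst^\stage$, merge the two regimes into the single affine bound $\fE[\potential(\deviationRV^{\stage+1})\mid\deviationRV^\stage]\le(1-\decreaseConstant/2)\potential(\deviationRV^\stage)+(1+\potentialParamSymbol)\potentialThresholdSymbol\minidecks$, take expectations, and read off the fixed point $(1+\potentialParamSymbol)\potentialThresholdSymbol(2/\decreaseConstant)\minidecks$. The paper instead inducts by first case-splitting on whether $\fE[\potential(\deviationRV^{\stage-1})]$ exceeds $(1+\potentialParamSymbol)\potentialThresholdSymbol(2/\decreaseConstant)\minidecks$, and then (in the large case) further splitting on whether the realized $\potential(\deviationRV^{\stage-1})$ exceeds $\potentialThresholdSymbol\minidecks$, bounding the resulting convex combination by the sum of the two pieces. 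Your route avoids a slightly delicate passage in the paper where a conditional expectation $\fE[\potential(\deviationRV^{\stage-1})\mid\potential(\deviationRV^{\stage-1})>\potentialThresholdSymbol\minidecks]$ is tacitly replaced by the unconditional one.
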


\begin{proof}
    \label{sec_adaptive:prf_expected_potential_upper_bound}
    We first observe that at the beginning of the very first stage, there are two holes in each mini-deck, and therefore
    $\potential(\deviationInst^0) = \minidecks \cdot (1+\potentialParamSymbol)^2 = O(\minidecks)$.
    Assume by induction that the statement is true for all stages up until~$\stage-1$, and we will show that it is also true for stage~$\stage$.

    First, consider the case where~$\fE[\potential(\deviationRV^{\stage-1})] \le (1+\potentialParamSymbol)\cdot \potentialThresholdSymbol \cdot \minidecks \cdot \left ({\frac{2}{\decreaseConstant}} \right)$:
    \begin{flalign}
         \fE\left[\potential(\deviationRV^{\stage})\right]
         &
         \le
         (1 + \potentialParamSymbol) \cdot  \fE\left[\potential(\deviationRV^{\stage-1})\right]
         \label[ineq]{sec_adaptive:prf_expected_potential_upper_bound:by_claim_potential_increase_general_upper_bound}
         \\
         &
         \le
         (1+\potentialParamSymbol)^2 \cdot \potentialThresholdSymbol \cdot \minidecks \cdot \left (\frac{2}{\decreaseConstant} \right)
         \label[ineq]{sec_adaptive:prf_expected_potential_upper_bound:by_assumption}
         .
    \end{flalign}
    Where
    \Cref{sec_adaptive:prf_expected_potential_upper_bound:by_claim_potential_increase_general_upper_bound} follows from~\Cref{sec_adaptive:claim_potential_increase_general_upper_bound} and monotonicity of expectation,
    and 
    \Cref{sec_adaptive:prf_expected_potential_upper_bound:by_assumption} is true by the assumption that~$\fE\left[\potential(\deviationRV^{\stage-1})\right] \le (1+\potentialParamSymbol)\cdot \potentialThresholdSymbol \cdot \minidecks \cdot \left (\frac{2}{\decreaseConstant} \right)$.

    Otherwise, consider the case where~$\fE\left[\potential(\deviationRV^{\stage-1})\right] \ge (1+\potentialParamSymbol)\cdot \potentialThresholdSymbol \cdot \minidecks \cdot \left (\frac{2}{\decreaseConstant} \right)$.
    There are two options now, either the potential at the previous stage is above~$\potentialThresholdSymbol \cdot \minidecks$ or below it. From the law of total expectation, we get that
    \begin{flalign}
        \fE\left[\potential(\deviationRV^{\stage})\right]
        =
        &
        \Pr\left[\potential(\deviationRV^{\stage-1}) \le \potentialThresholdSymbol \cdot \minidecks \right] 
        \cdot 
        \underbrace{\fE\left[\potential(\deviationRV^{\stage}) | \potential(\deviationRV^{\stage-1}) \le \potentialThresholdSymbol \cdot \minidecks \right]}_{(\ast)}
        \nonumber
        \\
        &
        +
        \Pr\left[\potential(\deviationRV^{\stage-1}) > \potentialThresholdSymbol \cdot \minidecks\right] 
        \cdot 
        \underbrace{\fE\left[\potential(\deviationRV^{\stage}) | \potential(\deviationRV^{\stage-1}) > \potentialThresholdSymbol \cdot \minidecks \right]}_{(\ast\ast)}
        .
        \label[equa]{sec_adaptive:prf_expected_potential_upper_bound:separation_to_cases}
    \end{flalign}
    We handle the two terms separately. 
    \begin{flalign}
        (\ast)
        &
        \le
        (1 + \potentialParamSymbol) \cdot \potentialThresholdSymbol \cdot \minidecks
        \label[ineq]{sec_adaptive:prf_expected_potential_upper_bound:by_claim_potential_increase_general_upper_bound_a}
        \\
        &
        \le
        \left(\frac{\decreaseConstant}{2}\right) \cdot \fE\left[\potential(\deviationRV^{\stage-1})\right]
        \label[ineq]{sec_adaptive:prf_expected_potential_upper_bound:case_a}
    .
    \end{flalign}
    Where
    \Cref{sec_adaptive:prf_expected_potential_upper_bound:by_claim_potential_increase_general_upper_bound_a} follows from the general potential increase upper bound described in~\Cref{sec_adaptive:claim_potential_increase_general_upper_bound},
    and~\Cref{sec_adaptive:prf_expected_potential_upper_bound:case_a}
    follows from the assumption that~$\fE\left[\potential(\deviationRV^{\stage-1})\right] \ge (1+\potentialParamSymbol)\cdot \potentialThresholdSymbol \cdot \minidecks \cdot \left (\frac{2}{\decreaseConstant} \right)$.

    As for the second term, we apply~\Cref{sec_adaptive:claim_potential_decreases_if_too_large} to conclude that
    \begin{flalign}
        (\ast\ast) 
        &
        \le
        \left(1 - \frac{\decreaseConstant}{2}\right) \cdot
        \fE\left[\potential(\deviationRV^{\stage-1})\right]
        \label[ineq]{sec_adaptive:prf_expected_potential_upper_bound:case_b}
        .
    \end{flalign}
    
    We observe that the right-hand side of~\Cref{sec_adaptive:prf_expected_potential_upper_bound:separation_to_cases} is a convex combination of~$(\ast)$ and~$(\ast\ast)$, thus it is clearly smaller than their sum.
    Placing \Cref{sec_adaptive:prf_expected_potential_upper_bound:case_a} and~\Cref{sec_adaptive:prf_expected_potential_upper_bound:case_b} into~\Cref{sec_adaptive:prf_expected_potential_upper_bound:separation_to_cases} we get that
    \begin{flalign*}
        \fE\left[\potential(\deviationRV^{\stage})\right]
        &
        <
        \left(\frac{\decreaseConstant}{2}\right) \cdot \fE\left[\potential(\deviationRV^{\stage-1})\right]
        +
        \left(1 - \frac{\decreaseConstant}{2}\right) \cdot
        \fE\left[\potential(\deviationRV^{\stage-1})\right]
        \\
        &
        =
        \fE\left[\potential(\deviationRV^{\stage-1})\right]
        .
    \end{flalign*}
    Which, by our inductive assumption, satisfies the desired bound.
\end{proof}

\subsection{Predictability and Run Time}
\label{sec_adaptive:sec_predictable}
We now show that our Dealer is sufficiently unpredictable and that each turn takes a constant time in expectation.
To do so, we focus on the set of mini-decks whose potential contribution is relatively small (compared to the average contribution), and we show that they form a constant fraction of {\em all} mini-decks, in every turn.
This implies that (i) the rejection-sampling process is expected to take only a constant time. Since all other parts run in worst-case constant time, we get that the Dealer runs in constant time in expectation. And (ii) that the Dealer is expected to have many mini-decks to draw from, thus, the Guesser's expected benefit is small. 
We show that this is the case for all turns, and so, by linearity of expectation, we conclude the expected score of our Dealer.

For a stage~$\stage$, and a vector of holes $\deviationRV^\stage$ at the end of stage~$\stage$, we introduce a new random variable~$\averageContributionRV = \max\left\{\frac{\potential(\deviationRV^{\stage})}{\minidecks}, \frac{16}{ \potentialParamSymbol^2}\right\}$, which can be roughly thought of as the average potential contribution of the various mini-decks.

\begin{claim}
    \label{sec_adaptive:claim_expected_average_contribution_is_constant}
    For every stage~$\stage$, $\fE[\averageContributionRV] = O(1).$
\end{claim}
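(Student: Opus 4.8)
The plan is to derive this directly from the expected potential bound already established in~\Cref{sec_adaptive:claim_expected_potential_upper_bound}, since~$\averageContributionRV$ is essentially a normalized version of the potential capped below by a constant. The only ingredients are the elementary inequality~$\max\{a,b\} \le a+b$ for nonnegative reals~$a,b$, linearity (monotonicity) of expectation, and the fact that~$\potentialParamSymbol = \potentialParamValue$ is a fixed constant so that~$16/\potentialParamSymbol^2$ is~$O(1)$.

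Concretely, first I would observe that both~$\frac{\potential(\deviationRV^{\stage})}{\minidecks}$ and~$\frac{16}{\potentialParamSymbol^2}$ are nonnegative, hence
$$
\averageContributionRV
=
\max\left\{\frac{\potential(\deviationRV^{\stage})}{\minidecks},\ \frac{16}{\potentialParamSymbol^2}\right\}
\le
\frac{\potential(\deviationRV^{\stage})}{\minidecks} + \frac{16}{\potentialParamSymbol^2}
.
$$
Taking expectations on both sides and using monotonicity of expectation together with~\Cref{sec_adaptive:claim_expected_potential_upper_bound},
$$
\fE\left[\averageContributionRV\right]
\le
\frac{\fE\left[\potential(\deviationRV^{\stage})\right]}{\minidecks} + \frac{16}{\potentialParamSymbol^2}
\le
(1+\potentialParamSymbol)^2 \cdot \potentialThresholdSymbol \cdot \left(\frac{2}{\decreaseConstant}\right) + \frac{16}{\potentialParamSymbol^2}
.
$$
Both summands are constants independent of~$\minidecks$ and of the stage~$\stage$ (they depend only on the fixed constants~$\potentialParamSymbol$, $\lagConstant$, and~$\decreaseConstant$), so the right-hand side is~$O(1)$, which proves the claim.

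There is essentially no real obstacle here; the statement is a near-immediate corollary of~\Cref{sec_adaptive:claim_expected_potential_upper_bound}. The only point worth checking is that the bound in that corollary is uniform over all stages~$\stage$ — which it is, since its proof is by induction yielding the same bound~$\minidecks \cdot\big((1+\potentialParamSymbol)^2 \cdot \potentialThresholdSymbol \cdot (2/\decreaseConstant)\big)$ for every~$\stage$ — so no stage-dependent blow-up occurs. The role of the~$\frac{16}{\potentialParamSymbol^2}$ floor is simply to keep~$\averageContributionRV$ bounded away from~$0$; it does not interfere with the upper bound and is itself a constant.
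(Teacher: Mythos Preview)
Your proof is correct and follows essentially the same approach as the paper: both derive the claim directly from \Cref{sec_adaptive:claim_expected_potential_upper_bound} by observing that $\averageContributionRV$ is the normalized potential capped below by the constant $16/\potentialParamSymbol^2$. The only cosmetic difference is that the paper splits into the two cases of the $\max$ via the law of total expectation, whereas you use the cleaner pointwise bound $\max\{a,b\}\le a+b$; the content is the same.
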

\begin{proof}
    Consider the constants~$c_1 = (1 + \potentialParamSymbol)^2 \cdot \potentialThresholdSymbol \cdot\left ({2 \over \decreaseConstant}\right)$ and~$c_2 = {16 / \potentialParamSymbol^2}$. 
From~\Cref{sec_adaptive:claim_expected_potential_upper_bound} we know that
$$\fE\left[\potential\left(\deviationRV^\stage\right)\right] \le c_1 \cdot \minidecks,$$ so we can conclude that~$\fE\left[{\potential\left(\deviationRV^\stage\right) \over \minidecks}\right] \le c_1.$
    Considering that~$\averageContributionRV$ is either~${\potential\left(\deviationRV^\stage\right) \over \minidecks}$ or~$c_2$, and using the law of total expectation, we get that~$\fE[\averageContributionRV]$ is a convex combination of two constants, thus upper bounded by a constant.
\end{proof}

\newcommand{\drawableKHoles}{\drawableMinidecks_{k}^{\stage}}

Observe that for any turn there is a finite set of possible values for the holes vector, and thus, for the possible potential values.
Let~$\averageContributionSupp$ be the set of possible values for~$\averageContributionRV$.
Let $\drawableKHoles$ be the set of mini-decks that have exactly~$k$ holes by the end of the~$\stage$th stage; that is, for every mini-deck~$i \in \drawableKHoles$ it holds that~$\potential_i(\deviationInst^{\stage}) = (1+\epsilon)^k$.

\begin{claim}
    \label{sec_adaptive:claim_number_of_bins_with_k_holes_ub}
    For every stage~$\stage$, every~$\averageContributionInst \in \averageContributionSupp$, and every~$k \in [\stage+1]$,
    if the average potential contribution at the end of stage~$\stage$ is~$\averageContributionInst$, 
    then, the number of mini-decks with~$k$ holes is at most
    $$
    |\drawableKHoles| 
    \le 
    \averageContributionInst \cdot \minidecks \cdot 2^{-k\potentialParamSymbol}
    .
    $$
\end{claim}
\begin{proof}
    Let~$\deviationInst^\stage$ be any hole vector at the end of stage~$\stage$ such that the average potential contribution is~$\averageContributionInst$, then
    \begin{flalign*}
        \averageContributionInst \cdot \minidecks
        &
        \ge
        \potential\left(\deviationInst^{\stage}\right)
        \defeq
        \sum_{i \in [\minidecks]} \potential_i(\deviationInst^{\stage})
        \nonumber
        \ge
        \sum_{i \in \drawableKHoles} (1+\potentialParamSymbol)^{k}
        =
        \left|\drawableKHoles\right| \cdot (1+\potentialParamSymbol)^{k}
        \nonumber
        .
    \end{flalign*}
    The first inequality is true by the definition of~$\averageContributionRV$,
    and the second inequality is true by the definition of~$\drawableKHoles$ and since $\drawableKHoles \subseteq [\minidecks]$.

    We get that 
    \begin{flalign*}
        \left|\drawableKHoles\right| 
        \le
        \averageContributionInst \cdot \minidecks \cdot (1+\potentialParamSymbol)^{-k}
        \le
        \averageContributionInst \cdot \minidecks \cdot 2^{- k \potentialParamSymbol}
    \end{flalign*}
    where the last inequality is true because $(1+\potentialParamSymbol)^{-1} \le 2^{-\potentialParamSymbol}$ for $0 \le \potentialParamSymbol \le 1$.
\end{proof}

Observe that a mini-deck can have at most~$\stage+2$ holes during stage~$\stage$.
We use this fact to conclude an upper bound on the number of holes in mini-decks that are relatively far behind.
\begin{claim}
    \label{sec_adaptive:claim_number_of_holes_at_lagging_decks}
    For every stage~$\stage$ and every~$\averageContributionInst \in \averageContributionSupp$,
    if the average potential contribution at the end of stage~$\stage$ is~$\averageContributionInst$, then the total number of holes in mini-decks having at least~$4 \ln (\averageContributionInst) \over \potentialParamSymbol$ holes is at most 
    $$
    \sum_{k={4 \ln (\averageContributionInst)/ \potentialParamSymbol}}^{\stage+2}
    |\drawableKHoles| \cdot k
    \le
    \minidecks \cdot {8 \over \potentialParamSymbol^2} \cdot {1 \over \averageContributionInst}
    .
    $$
\end{claim}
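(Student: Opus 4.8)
The plan is to reduce the statement to a geometric-tail estimate fed by the per-level bound of \Cref{sec_adaptive:claim_number_of_bins_with_k_holes_ub}. That claim gives $|\drawableKHoles| \le \averageContributionInst \cdot \minidecks \cdot 2^{-k\potentialParamSymbol}$ for every $k$, so, writing $K = 4\ln(\averageContributionInst)/\potentialParamSymbol$ for the threshold number of holes and dropping the finite upper limit $\stage+2$ (all terms are nonnegative), I would first bound
\[
\sum_{k = K}^{\stage+2} |\drawableKHoles|\cdot k \;\le\; \averageContributionInst \cdot \minidecks \cdot \sum_{k \ge K} k\, 2^{-k\potentialParamSymbol}.
\]
It then suffices to show the tail sum on the right is at most $\frac{8}{\potentialParamSymbol^2 (\averageContributionInst)^2}$, since multiplying by $\averageContributionInst \minidecks$ gives exactly the claimed $\minidecks \cdot \frac{8}{\potentialParamSymbol^2}\cdot\frac{1}{\averageContributionInst}$.

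For the tail, set $x = 2^{-\potentialParamSymbol} \in (0,1)$ and use the closed form $\sum_{k\ge K} k x^k = x^K\big(\frac{x}{(1-x)^2} + \frac{K}{1-x}\big)$ together with the elementary inequality $1 - 2^{-\potentialParamSymbol} \ge \potentialParamSymbol/2$, valid for $\potentialParamSymbol \in [0,1]$ by convexity of $t\mapsto 2^{-t}$ (it lies below the chord through $t=0$ and $t=1$). This yields $\sum_{k\ge K} k x^k \le x^K\big(\tfrac{4}{\potentialParamSymbol^2} + \tfrac{2K}{\potentialParamSymbol}\big)$. Substituting $K = 4\ln(\averageContributionInst)/\potentialParamSymbol$ gives $x^K = 2^{-4\ln \averageContributionInst} = (\averageContributionInst)^{-4\ln 2}$ and $\tfrac{2K}{\potentialParamSymbol} = \tfrac{8\ln \averageContributionInst}{\potentialParamSymbol^2}$, so
\[
\sum_{k\ge K} k\, 2^{-k\potentialParamSymbol} \;\le\; \frac{4 + 8\ln \averageContributionInst}{\potentialParamSymbol^2}\cdot (\averageContributionInst)^{-4\ln 2}.
\]

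To close, I would compare this with $\frac{8}{\potentialParamSymbol^2}(\averageContributionInst)^{-2}$: since $4\ln 2 - 2 > 0.77 > 0$, the target inequality is equivalent to $4 + 8\ln \averageContributionInst \le 8 (\averageContributionInst)^{4\ln 2 - 2}$, which holds for every $\averageContributionInst \ge 1$ — and we always have $\averageContributionInst \ge 16/\potentialParamSymbol^2 \gg 1$ by the definition of $\averageContributionRV$. The only delicate point is this constant bookkeeping, in particular noting that $4\ln 2 > 2$ so a genuine positive power of $\averageContributionInst$ survives to dominate the $\ln \averageContributionInst$ factor; the minor issue of $K$ possibly being non-integral is harmless, since replacing $K$ by $\lceil K\rceil$ in the summation range only makes the left-hand side smaller. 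Everything else is the standard arithmetic-geometric tail estimate.
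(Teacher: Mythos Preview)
Your proof is correct and follows essentially the same approach as the paper: both invoke \Cref{sec_adaptive:claim_number_of_bins_with_k_holes_ub} for the per-level bound and then estimate the resulting arithmetic--geometric tail $\sum_{k\ge K} k\,2^{-k\potentialParamSymbol}$. The only difference is bookkeeping in that tail estimate: the paper factors $2^{-k\potentialParamSymbol}=e^{-k\potentialParamSymbol/2}(4/e)^{-k\potentialParamSymbol/2}$ and uses $x(e/4)^{x/2}<2$ to absorb the factor $k$, leaving a pure geometric sum whose leading term is exactly $\averageContributionInst^{-2}$, whereas you apply the closed-form formula for $\sum_{k\ge K}kx^{k}$ directly, land on the stronger prefactor $\averageContributionInst^{-4\ln 2}$, and close with an easy comparison using $\averageContributionInst\ge 16/\potentialParamSymbol^{2}$.
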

\begin{proof}
    
    \begin{flalign}
        \sum_{k=4 \ln(\averageContributionInst) / \potentialParamSymbol}^{\stage+2} 
        |\drawableKHoles| \cdot k
        &
        \le
        \sum_{k=4 \ln(\averageContributionInst) / \potentialParamSymbol}^{\stage+2} 
        \averageContributionInst \cdot \minidecks \cdot 2^{-k\potentialParamSymbol} \cdot k
        \label[ineq]{sec_adaptive:prf_number_of_holes_at_lagging_decks:from_claim_number_of_bins_with_k_holes_ub}
        \\
        &
        =
        {\averageContributionInst \cdot \minidecks \over \potentialParamSymbol} \cdot
        \sum_{k=4 \ln(\averageContributionInst) / \potentialParamSymbol}^{\stage+2} 
        2^{-k\potentialParamSymbol} 
        \cdot k \cdot \potentialParamSymbol
        \nonumber
        \\
        &
        =
        {\averageContributionInst \cdot \minidecks \over \potentialParamSymbol} \cdot
        \sum_{k=4 \ln(\averageContributionInst) / \potentialParamSymbol}^{\stage+2} 
        e^{-k \potentialParamSymbol /2}
        \cdot
        \left(
        4 \over e
        \right)^{{-k \potentialParamSymbol /2}}
        \cdot k \cdot \potentialParamSymbol
        \nonumber
        \\
        &
        <
        {\averageContributionInst \cdot \minidecks \over \potentialParamSymbol} \cdot
        \sum_{k=4 \ln(\averageContributionInst) / \potentialParamSymbol}^{\stage+2} 
        e^{-k \potentialParamSymbol /2}
        \cdot 2
        \label[ineq]{sec_adaptive:prf_number_of_holes_at_lagging_decks:by_upper_bound_on_term}
        \\
        &
        <
        2 \cdot {\averageContributionInst \cdot \minidecks \over \potentialParamSymbol} \cdot
        {
        e^{-2 \cdot \ln(\averageContributionInst)}
        \over
        1-e^{\potentialParamSymbol/2}
        }
        \label[ineq]{sec_adaptive:prf_number_of_holes_at_lagging_decks:by_sum_of_geometric_series}
        \\
        &
        <
        2 \cdot {\averageContributionInst \cdot \minidecks \over \potentialParamSymbol} \cdot
        {
        (\averageContributionInst)^{-2}
        \over
        \potentialParamSymbol/4
        }
        \label[ineq]{sec_adaptive:prf_number_of_holes_at_lagging_decks:just_ub}
        =
        \minidecks \cdot {8 \over \potentialParamSymbol^2} \cdot {1 \over \averageContributionInst}
        .
    \end{flalign}
    Where \Cref{sec_adaptive:prf_number_of_holes_at_lagging_decks:from_claim_number_of_bins_with_k_holes_ub} follows from~\Cref{sec_adaptive:claim_number_of_bins_with_k_holes_ub},
    \Cref{sec_adaptive:prf_number_of_holes_at_lagging_decks:by_upper_bound_on_term} is true since~$(4/e)^{x/2} \cdot x < 2$ for every~$x$.
    We deduce \Cref{sec_adaptive:prf_number_of_holes_at_lagging_decks:by_sum_of_geometric_series} from the closed form of the sum of the corresponding infinite geometric series, 
    and
    \Cref{sec_adaptive:prf_number_of_holes_at_lagging_decks:just_ub} is true because~$x/2 \le 1 - e^{-x}$ for $0 \le x \le 1$.
\end{proof}

\newcommand{\drawableAtLeast}{\drawableMinidecks'_{\turn}}
\newcommand{\drawableAtLeastOtherTurn}{\drawableMinidecks'_{\turn'}}
For a turn~$\turn$ in stage~$\stage$, such that~$\averageContributionRV=\averageContributionInst$ for some value~$\averageContributionInst$, let~$\drawableAtLeast$ be the set of mini-decks that have at least~$1$ hole and at most~$4 \ln(\averageContributionInst)/ \potentialParamSymbol$ holes at turn~$\turn$.

\begin{claim}
\label{sec_adaptive:claim_number_of_drawable_minidecks_lb}
For every stage~$\stage$, and every value~$\averageContributionInst \in \averageContributionSupp$, 
if the average potential contribution at the end of stage~$\stage$ is~$\averageContributionInst$, 
then for every turn~$\turn$ in stage~$\stage+1$, the set of mini-decks with at least~$1$ hole and at most~$4 \ln(\averageContributionInst)/ \potentialParamSymbol$ holes at turn~$\turn$ is at least
$$
    |\drawableAtLeast| = {\potentialParamSymbol \over 8} \cdot {\minidecks \over \ln(\averageContributionInst)} = \Omega\left(\minidecks \over \ln(\averageContributionInst)\right)
$$
\end{claim}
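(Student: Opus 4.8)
The plan is to account for the total number of \emph{holes} present at turn~$\turn$ of stage~$\stage+1$ by splitting them among three groups of mini-decks: those with no holes, those with between~$1$ and~$4\ln(\averageContributionInst)/\potentialParamSymbol$ holes (this is exactly~$\drawableAtLeast$, the set we want to bound from below), and those that lag far behind, i.e.\ with more than~$4\ln(\averageContributionInst)/\potentialParamSymbol$ holes. I will show that the far-behind group can absorb at most half of all the holes, which forces the remaining~$\ge\minidecks/2$ holes into the mini-decks of~$\drawableAtLeast$; since each such mini-deck holds at most~$4\ln(\averageContributionInst)/\potentialParamSymbol$ holes, the desired lower bound on~$|\drawableAtLeast|$ drops out by division.

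\textbf{Step 1 (monotonicity within a stage).} Throughout stage~$\stage+1$ the threshold is fixed (equal to~$\stage+2$), and each turn removes exactly one hole from exactly one mini-deck. Hence every mini-deck's hole count is non-increasing during the stage, and the total hole count is~$2\minidecks$ at the start of the stage and drops by one per turn, so it is at least~$\minidecks$ at every turn of stage~$\stage+1$ (cf.\ \Cref{sec_adaptive:claim_number_of_holes_smaller_than_2d}). In particular, a mini-deck with more than~$4\ln(\averageContributionInst)/\potentialParamSymbol$ holes at turn~$\turn$ already had at least that many holes at the end of stage~$\stage$, so its hole count at turn~$\turn$ is at most its hole count at the end of stage~$\stage$.

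\textbf{Step 2 (the far-behind group is light).} By \Cref{sec_adaptive:claim_number_of_holes_at_lagging_decks}, conditioned on~$\averageContributionRV=\averageContributionInst$, the total number of holes at the end of stage~$\stage$ residing in mini-decks with at least~$4\ln(\averageContributionInst)/\potentialParamSymbol$ holes is at most~$\minidecks\cdot\frac{8}{\potentialParamSymbol^2}\cdot\frac{1}{\averageContributionInst}$; and since~$\averageContributionInst\ge 16/\potentialParamSymbol^2$ by the definition of~$\averageContributionRV$, this quantity is at most~$\minidecks/2$. Combining with Step~1, the far-behind mini-decks hold at most~$\minidecks/2$ holes at turn~$\turn$ as well, and the zero-hole mini-decks hold none, so the mini-decks of~$\drawableAtLeast$ jointly hold at least~$\minidecks-\minidecks/2=\minidecks/2$ holes. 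Since each of them carries at most~$4\ln(\averageContributionInst)/\potentialParamSymbol$ holes, we conclude $|\drawableAtLeast|\ge\frac{\minidecks/2}{4\ln(\averageContributionInst)/\potentialParamSymbol}=\frac{\potentialParamSymbol}{8}\cdot\frac{\minidecks}{\ln(\averageContributionInst)}$, which is the claimed bound.

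\textbf{Main obstacle.} The only non-routine point is the bookkeeping at the stage boundary: \Cref{sec_adaptive:claim_number_of_holes_at_lagging_decks} (and the conditioning~$\averageContributionRV=\averageContributionInst$) are phrased ``at the end of stage~$\stage$'', whereas the claim concerns an arbitrary turn~$\turn$ of stage~$\stage+1$. One must check that the holes counted by~$\deviationRV^{\stage}$ are already measured against the threshold~$\stage+2$ that governs stage~$\stage+1$ (so that no threshold increment is lost across the boundary), and that every subsequent draw in stage~$\stage+1$ can only decrease hole counts --- this is what lets the ``end of stage~$\stage$'' bounds transfer monotonically to turn~$\turn$. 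Once this alignment is settled, Steps~1 and~2 are elementary arithmetic, and note that the whole argument is deterministic given~$\averageContributionRV=\averageContributionInst$, so no expectation bookkeeping is needed here.
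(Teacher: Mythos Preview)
Your proof is correct and follows essentially the same approach as the paper: bound the holes held by far-lagging mini-decks using \Cref{sec_adaptive:claim_number_of_holes_at_lagging_decks} together with $\averageContributionInst\ge 16/\potentialParamSymbol^2$, observe that the total hole count stays at least~$\minidecks$ throughout the stage, and divide the remaining $\ge\minidecks/2$ holes by the per-mini-deck cap $4\ln(\averageContributionInst)/\potentialParamSymbol$. The only cosmetic difference is in the bookkeeping: the paper argues that $\drawableAtLeast$ starts the stage with at least $3\minidecks/2$ holes and can lose at most $\minidecks-1$ of them over the stage, whereas you partition the holes at turn~$\turn$ directly and use monotonicity of hole counts within the stage to transport the end-of-stage-$\stage$ bound on the far-lagging group forward to turn~$\turn$; these are two equivalent accountings of the same quantity.
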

\begin{proof}
    Recall (as appears in the proof of ~\Cref{sec_adaptive:claim_number_of_holes_smaller_than_2d}) that the number of holes is exactly~$2\minidecks$ at the beginning of each stage.
    Let~$\turn'$ be the first turn at stage~$\stage+1$.
    From~\Cref{sec_adaptive:claim_number_of_holes_at_lagging_decks} we conclude that the number of holes over mini-decks in~$\drawableAtLeastOtherTurn$ is at least~$2 \minidecks - \minidecks \cdot {8 \over \potentialParamSymbol^2} \cdot {1 \over \averageContributionInst} 
    \ge 
    2 \minidecks - \minidecks \cdot {8 \over \potentialParamSymbol^2} \cdot {\potentialParamSymbol^2 \over 16} = {3 \over 2} \minidecks$
    .
    It follows that even if the Dealer draws only from mini-decks in~$\drawableAtLeast$, then after~$\minidecks-1$ draws, there are at least~$\minidecks/2$ holes in mini-decks from $\drawableAtLeast$. 
    Thus, this is true for every turn~$\turn$ in the stage.
    
    Because $\drawableAtLeast$ contains mini-decks with at most~$4 \ln(\averageContributionInst)/\potentialParamSymbol$, we get that 
    $$
    |\drawableAtLeast| 
    \ge 
    {\minidecks \over 2} \cdot {\potentialParamSymbol \over 4 \ln(\averageContributionInst)}
    =
    \Omega\left(\minidecks \over \ln(\averageContributionInst)\right)
    .
    $$
\end{proof}

The following lemmata combine the lower bound on the size of relatively advanced mini-decks~$\drawableAtLeast$~(\Cref{sec_adaptive:claim_number_of_drawable_minidecks_lb}) and the upper bound on the expectation of the potential~(\Cref{sec_adaptive:claim_expected_average_contribution_is_constant}), to conclude the Dealer's run time and predictability.
Their proofs share a similar structure and analysis. 

Recall that the Dealer samples mini-decks until a drawable one is found.
Let~$\timeRV_\turn$ be the random variable measuring the time it takes the Dealer to draw a card at turn~$\turn$.

\begin{lemma}
\label{sec_adaptive:claim_expected_runtime_at_turn}
For every turn~$\turn$, the Dealer's expected run time at turn~$\turn$ is~$\fE[\timeRV_\turn] = O(1).$
\end{lemma}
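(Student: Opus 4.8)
The plan is to bound the expected running time at turn~$\turn$ by conditioning on the state at the end of the preceding stage, and in particular on the value of~$\averageContributionRV$, then invoking the lower bound on the number of drawable mini-decks from~\Cref{sec_adaptive:claim_number_of_drawable_minidecks_lb}. Recall the Dealer at turn~$\turn$ repeatedly samples a uniform mini-deck in~$[\minidecks]$ and accepts it if its load is below the threshold; a mini-deck with at least one hole is exactly a mini-deck the Dealer can draw from. So if there are at least~$D$ such mini-decks, the number of samples until success is stochastically dominated by a geometric random variable with success probability~$D/\minidecks$, hence has expectation at most~$\minidecks/D$.

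First I would fix the stage~$\stage$ containing turn~$\turn$ (if~$\turn$ is in the very first stage the claim is trivial since at the start every mini-deck has two holes, so I can assume~$\turn$ lies in stage~$\stage+1$ for some~$\stage\ge 0$, matching the hypothesis of~\Cref{sec_adaptive:claim_number_of_drawable_minidecks_lb}). Conditioning on~$\averageContributionRV=\averageContributionInst$, \Cref{sec_adaptive:claim_number_of_drawable_minidecks_lb} gives~$|\drawableAtLeast| = \Omega(\minidecks/\ln(\averageContributionInst))$, and since every mini-deck in~$\drawableAtLeast$ has at least one hole, it is a valid draw target. Therefore, conditioned on~$\averageContributionRV=\averageContributionInst$, the expected number of sampling attempts is at most~$\minidecks/|\drawableAtLeast| = O(\ln(\averageContributionInst))$, and each attempt costs~$O(1)$ time in the word RAM model, so~$\fE[\timeRV_\turn \mid \averageContributionRV = \averageContributionInst] = O(\ln \averageContributionInst)$.

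Next I would remove the conditioning by taking expectation over~$\averageContributionRV$, using Jensen's inequality (\Cref{sec_preliminaries:thm_jensen_inequality}) in the form that~$\ln$ is concave, so~$\fE[\ln \averageContributionRV] \le \ln \fE[\averageContributionRV]$; combined with~\Cref{sec_adaptive:claim_expected_average_contribution_is_constant}, which says~$\fE[\averageContributionRV] = O(1)$, this yields~$\fE[\ln \averageContributionRV] = O(1)$ and hence~$\fE[\timeRV_\turn] = \fE_{\averageContributionInst}\bigl[\fE[\timeRV_\turn \mid \averageContributionRV = \averageContributionInst]\bigr] = O(1)$, as desired. One subtlety to address is that~$\averageContributionRV$ is bounded below by~$16/\potentialParamSymbol^2 > 1$ by definition, so~$\ln \averageContributionRV$ is positive and the application of Jensen's inequality goes through cleanly without sign issues; it is precisely to guarantee this positivity (and to make the bound~$4\ln(\averageContributionInst)/\potentialParamSymbol$ in~\Cref{sec_adaptive:claim_number_of_drawable_minidecks_lb} nonnegative) that~$\averageContributionRV$ was defined with that floor.

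The main obstacle is making the domination argument fully rigorous: I need to argue that the relevant set of at-least-one-hole mini-decks at turn~$\turn$ really does contain~$\drawableAtLeast$ (a set defined at turn~$\turn$, not at the stage boundary), and that the acceptance probability at each sampling attempt is genuinely at least~$|\drawableAtLeast|/\minidecks$ regardless of past attempts within the same turn — which holds because the loads do not change during the rejection loop of a single turn. Once that is in place, the geometric-domination bound and the two invoked results (\Cref{sec_adaptive:claim_number_of_drawable_minidecks_lb} and \Cref{sec_adaptive:claim_expected_average_contribution_is_constant}) plus Jensen close the argument.
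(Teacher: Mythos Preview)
Your proposal is correct and follows essentially the same approach as the paper: condition on~$\averageContributionRV=\averageContributionInst$, use~\Cref{sec_adaptive:claim_number_of_drawable_minidecks_lb} to lower-bound the number of drawable mini-decks by~$\Omega(\minidecks/\ln\averageContributionInst)$, bound the rejection loop by a geometric random variable, then remove the conditioning via the law of total expectation, Jensen's inequality on~$\ln$, and~\Cref{sec_adaptive:claim_expected_average_contribution_is_constant}. Your added remarks on the first stage, the floor~$16/\potentialParamSymbol^2$ ensuring~$\ln\averageContributionRV>0$, and the fact that loads do not change within a turn's rejection loop are all sound and only make the argument more explicit than the paper's version.
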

\begin{proof}
    \label{sec_adaptive:proof_expected_runtime_at_turn}
    Let~$\stage$ be the stage prior to turn~$\turn$.
    Let~$\drawableMinidecks_\turn \subseteq [\minidecks]$ be the set of all drawable mini-decks at turn~$\turn$.
    Clearly, $\drawableAtLeast \subseteq \drawableMinidecks_\turn$, and therefore, the expected number of trials to find a mini-deck from~$\drawableMinidecks_\turn$ is upper bounded by the expected number of trials to find a mini-deck from~$\drawableAtLeast$, the later is a geometric random variable with probability~$|\drawableAtLeast|/ \minidecks$.
    Using the lower bound on~$\drawableAtLeast$ from \Cref{sec_adaptive:claim_number_of_drawable_minidecks_lb} we get that if~$\averageContributionRV=\averageContributionInst$ then expected time for success is~$O(\ln(\averageContributionInst))$, where the expectation is taken solely over the randomness of the dealer at turn~$\turn$.

    Let~$\averageContributionSupp$ be the set of possible values for~$\averageContributionRV$.
    We get that
    \begin{flalign}
        \expectation\left[\timeRV_\turn\right]
        &
        =
        \sum_{\averageContributionInst \in \averageContributionSupp}
        \expectation\left[\timeRV_\turn | \averageContributionRV = \averageContributionInst\right] \cdot \Pr\left[\averageContributionRV = \averageContributionInst\right]
        \label[equa]{sec_adaptive:proof_expected_runtime_at_turn:from_total_expectation}
        \\
        &
        \le
        \sum_{\averageContributionInst \in \averageContributionSupp}
        O(\ln(\averageContributionInst))
        \cdot
        \Pr\left[\averageContributionRV = \averageContributionInst\right]
        \label[ineq]{sec_adaptive:proof_expected_runtime_at_turn:by_the_discussion_above}
        \\
        &
        =
        \expectation \left[O(\ln(\averageContributionRV)) \right]
        \label[equa]{sec_adaptive:proof_expected_runtime_at_turn:by_definition_of_expectation}
        \\
        &
        \le
        O\left(\ln\left(\expectation \left[\averageContributionRV) \right] \right)\right)
        \label[ineq]{sec_adaptive:proof_expected_runtime_at_turn:from_jensen_inequality_and_linearity_of_expectation}
        \\
        &
        \le
        O(1)
        \label[ineq]{sec_adaptive:proof_expected_runtime_at_turn:from_claim_expected_average_contribution_is_constant}
        .
    \end{flalign}
    Where
    \Cref{sec_adaptive:proof_expected_runtime_at_turn:from_total_expectation} follows from the law of total expectation,
    \Cref{sec_adaptive:proof_expected_runtime_at_turn:by_the_discussion_above} is true by the discussion above,
    \Cref{sec_adaptive:proof_expected_runtime_at_turn:by_definition_of_expectation} is true by the definition of expectation,
    \Cref{sec_adaptive:proof_expected_runtime_at_turn:from_jensen_inequality_and_linearity_of_expectation} follows from the linearity of expectation and from Jensen inequality~\Cref{sec_preliminaries:thm_jensen_inequality},
    and~\Cref{sec_adaptive:proof_expected_runtime_at_turn:from_claim_expected_average_contribution_is_constant} follows from~\Cref{sec_adaptive:claim_expected_average_contribution_is_constant}.
    
\end{proof}

Observe that the Guesser's expected benefit is bounded from above by the number of drawable mini-decks.
Let the indicator random variable~$\correctTurnRV$ be the event that the Guesser predicted the Dealer's draw correctly at turn~$\turn$, and thus scored a point.
\begin{lemma}
    \label{sec_adaptive:claim_expected_score_at_turn}
    For every turn~$\turn$, any Guesser that plays against the~\adaptiveDealer is expected to score at most
    $$
    \fE\left[\correctTurnRV\right] \le O\left(\frac{1}{\minidecks}\right)
    .
    $$
\end{lemma}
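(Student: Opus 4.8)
The plan is to mirror the proof of \Cref{sec_adaptive:claim_expected_runtime_at_turn} almost verbatim, replacing the number of rejection-sampling trials by the probability that the Guesser's single guess coincides with the Dealer's draw. The engine is the observation that, within a turn, the Guesser's advantage is at most $1/|\drawableMinidecks_\turn|$, where $\drawableMinidecks_\turn$ is the (history-determined) set of mini-decks that still hold a card, i.e.\ have at least one hole; from there everything is a transcription of the run-time argument, using \Cref{sec_adaptive:claim_number_of_drawable_minidecks_lb} and \Cref{sec_adaptive:claim_expected_average_contribution_is_constant}.

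First I would pin down the per-turn guessing probability. Fix a turn~$\turn$. The \adaptiveDealer's inner loop is pure rejection sampling — repeat ``$i \sim [\minidecks]$'' until $i$ is drawable — so conditioned on the game history the accepted index is uniform on $\drawableMinidecks_\turn$, hence $d_\turn$ is uniform over the $|\drawableMinidecks_\turn|$ top cards of the drawable mini-decks, and this distribution is unaffected by anything the Guesser does (the Dealer is Guesser-agnostic). Since the Dealer is open book, the guess $g_\turn$ depends only on the history and the Guesser's own coins, so
\[
\Pr[\correctTurnRV = 1 \mid \text{history}] = \Pr[d_\turn = g_\turn \mid \text{history}] \le \max_c \Pr[d_\turn = c \mid \text{history}] = \frac{1}{|\drawableMinidecks_\turn|}.
\]

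Next, let $\stage$ be the stage prior to turn~$\turn$, so $\turn$ lies in stage~$\stage+1$ and \Cref{sec_adaptive:claim_number_of_drawable_minidecks_lb} applies. Every mini-deck counted in $\drawableAtLeast$ has at least one hole, hence $\drawableAtLeast \subseteq \drawableMinidecks_\turn$, and so, conditioned on $\averageContributionRV = \averageContributionInst$, we get $1/|\drawableMinidecks_\turn| \le 1/|\drawableAtLeast| \le O(\ln(\averageContributionInst)/\minidecks)$. From here I would finish exactly as in \Cref{sec_adaptive:claim_expected_runtime_at_turn}: take the expectation of $1/|\drawableMinidecks_\turn|$, split on the value of $\averageContributionRV$ by the law of total expectation, pull the $O(\ln(\cdot))$ outside using Jensen's inequality (\Cref{sec_preliminaries:thm_jensen_inequality}, exactly as applied there to $\ln$), and invoke $\fE[\averageContributionRV] = O(1)$ from \Cref{sec_adaptive:claim_expected_average_contribution_is_constant}, giving
\[
\fE[\correctTurnRV] \le \fE\!\left[\frac{1}{|\drawableMinidecks_\turn|}\right] \le \frac{1}{\minidecks}\,\fE\!\left[O(\ln \averageContributionRV)\right] \le \frac{O\!\left(\ln \fE[\averageContributionRV]\right)}{\minidecks} = O\!\left(\frac{1}{\minidecks}\right).
\]
The very first stage is not covered by \Cref{sec_adaptive:claim_number_of_drawable_minidecks_lb}, but there at least $\minidecks/2$ mini-decks are drawable at every turn (total holes stay at least $\minidecks$, and a short counting argument as in \Cref{sec_adaptive:claim_number_of_drawable_minidecks_lb} gives the rest), so the bound holds directly.

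The only genuinely new ingredient, and the step I expect to need the most care, is Step 1: the claim that open-book access is worthless inside a turn. The point to get right is that the Dealer's sole secret is the fresh randomness of the current turn, and that rejection sampling turns that randomness into a draw that is exactly uniform over the top cards of $\drawableMinidecks_\turn$ and independent of the history given $\drawableMinidecks_\turn$, so no guess — however computed, and even with full knowledge of the memory state — beats $1/|\drawableMinidecks_\turn|$. Everything downstream is a copy of the run-time proof, including the well-definedness remark that $|\drawableMinidecks_\turn| \ge |\drawableAtLeast| \ge \Omega(\minidecks/\ln \averageContributionInst) \ge 1$.
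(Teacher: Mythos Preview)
Your proposal is correct and follows essentially the same approach as the paper's own proof: both bound $\Pr[\correctTurnRV\mid\text{history}]$ by $1/|\drawableMinidecks_\turn|\le 1/|\drawableAtLeast|$, invoke \Cref{sec_adaptive:claim_number_of_drawable_minidecks_lb}, take total expectation over the values of $\averageContributionRV$, apply Jensen to $\ln$, and finish with \Cref{sec_adaptive:claim_expected_average_contribution_is_constant}. Your extra remarks (why rejection sampling yields uniformity over $\drawableMinidecks_\turn$, and the first-stage aside) only add justification; the first-stage worry is in fact moot since the paper's convention makes $\deviationRV^0$ and $\averageContributionRV=\mathbf{B}^0$ well-defined, so \Cref{sec_adaptive:claim_number_of_drawable_minidecks_lb} with $\stage=0$ already covers stage~$1$.
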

\begin{proof}
    \label{sec_adaptive:proof_expected_score_at_turn}
    Let~$\stage$ be the stage prior to turn~$\turn$.
    Assume that~$\averageContributionRV = \averageContributionInst$ for some~$\averageContributionInst \in \averageContributionSupp$, and recall that~$\drawableMinidecks_\turn \subseteq [\minidecks]$ is the set of all drawable mini-decks at turn~$\turn$.
    Clearly, $\drawableAtLeast \subseteq \drawableMinidecks_\turn$, and therefore, the probability (over the randomness at turn~$\turn$) that a Guesser's guess is correct is at most:
    \begin{flalign}
    \label[ineq]{sec_adaptive:proof_expected_score_at_turn:probability_upper_bound}
    \Pr\left[\correctTurnRV | \averageContributionRV=\averageContributionInst\right]
    \le
    {1 \over |\drawableMinidecks_\turn|}
    \le
    {1 \over |\drawableAtLeast|}
    \le
    \frac{8}{\potentialParamSymbol} \cdot {\ln(\averageContributionInst) \over \minidecks}
    .
    \end{flalign}
    Where the last inequality follows from~\Cref{sec_adaptive:claim_number_of_drawable_minidecks_lb}.
    
    Thus, we expect turn~$\turn$ to yield at most~$
        \expectation\left[\correctTurnRV | \averageContributionRV=\averageContributionInst\right]
        \le
        \frac{8}{\potentialParamSymbol} \cdot {\ln(\averageContributionInst) \over \minidecks}
    $
    correct guesses. We get that
    \begin{flalign}
        \expectation\left[\correctTurnRV \right]
        &
        =
        \sum_{\averageContributionInst\in\averageContributionSupp}
        \expectation\left[\correctTurnRV | \averageContributionRV=\averageContributionInst\right]
        \cdot
        \Pr[\averageContributionRV=\averageContributionInst]
        \label[equa]{sec_adaptive:proof_expected_score_at_turn:by_total_expectation}
        \\
        &
        \le
        \sum_{\averageContributionInst\in\averageContributionSupp}
        \left[\frac{8}{\potentialParamSymbol} \cdot {\ln(\averageContributionInst) \over \minidecks}\right]
        \cdot
        \Pr[\averageContributionRV=\averageContributionInst]
        \label[ineq]{sec_adaptive:proof_expected_score_at_turn:from_the_discussion_above}
        \\
        &
        =
        \frac{1}{\minidecks} \cdot 
        \frac{8}{\potentialParamSymbol} \cdot 
        \expectation\left[
        \ln(\averageContributionRV)
        \right]
        \label[equa]{sec_adaptive:proof_expected_score_at_turn:by_defintion_of_expectation}
        \\
        &
        \le
        \frac{1}{\minidecks} \cdot 
        \frac{8}{\potentialParamSymbol} \cdot 
        \ln\left(
        \expectation\left[
        \averageContributionRV
        \right]
        \right)
        \label[ineq]{sec_adaptive:proof_expected_score_at_turn:from_jensen}
        \\
        &
        \le
        \frac{1}{\minidecks} \cdot 
        \frac{8}{\potentialParamSymbol} \cdot 
        \ln(O(1))
        \label[ineq]{sec_adaptive:proof_expected_score_at_turn:from_claim_expected_average_contribution_is_constant}
        =
        O\left(\frac{1}{\minidecks}\right)
        .
    \end{flalign}
    Where
    \Cref{sec_adaptive:proof_expected_score_at_turn:by_total_expectation} follows from the law of total expectation,
    \Cref{sec_adaptive:proof_expected_score_at_turn:from_the_discussion_above} is true by the discussion above,
    \Cref{sec_adaptive:proof_expected_score_at_turn:by_defintion_of_expectation} is true by definition of expectation and its linearity,
    \Cref{sec_adaptive:proof_expected_score_at_turn:from_jensen} follows from Jensen Inequality~\Cref{sec_preliminaries:thm_jensen_inequality},
    and~\Cref{sec_adaptive:proof_expected_score_at_turn:from_claim_expected_average_contribution_is_constant} follows from~\Cref{sec_adaptive:claim_expected_average_contribution_is_constant}.
\end{proof}

\paragraph{Summing up:} 
In \Cref{sec_adaptive:claim_number_of_minidecks} we showed that an \adaptiveDealer with~$m$ bits of memory tracks~$\minidecks=\Theta(m)$ mini-decks.
Thus, in~\Cref{sec_adaptive:claim_expected_score_at_turn} we showed that during the \phaseAdaptive, the expected score in each turn is at most~$O(1/m)$, and, by linearity of expectation, any Guesser is expected to score at most~$O(n/m)$ during the \phaseAdaptive.
By~\Cref{sec_adaptive:claim_expected_runtime_at_turn}, each of these turns is expected to take a constant time.

The main result for this section follows:
\begin{theorem}
    \label{sec_adaptive:thm_adaptive_during_adaptive_phase}
    An \adaptiveDealer with~$m$ bits of memory scores~$O(n/m)$ points in expectation against any Guesser during the \phaseAdaptive, and each draw takes~$O(1)$ time in expectation.
\end{theorem}

In~\Cref{sec_linear_perfect:sec_discussion} we improve upon this result and expand the analysis for all turns during the game.
We show that any Guesser scores~$O(n/m + \ln m)$ points in expectation, and that each turn runs in worst case constant time.
We do so by replacing the rejection sampling mechanism and addressing the~\phaseFinal.

\section{Sampling from a Dynamic Subset in Constant Time}

\label{sec_linear_perfect}

\newcommand{\intervalSymbol}{I}
\newcommand{\cellSymbol}{A}
\newcommand{\weightSymbol}{W}

\newcommand{\cellSize}{\log n}
\newcommand{\cellCount}{n / \log n}

\newcommand{\availableCards}{\texttt{available\_cards}}

\newcommand{\massQuotient}{q}
\newcommand{\massResidue}{r}
\newcommand{\totalMassQuotient}{\totalMass_\massQuotient}
\newcommand{\totalMassResidue}{\totalMass_\massResidue}

\newcommand{\population}{p}
\newcommand{\colors}{k}
\newcommand{\marbles}{m}
\newcommand{\urnDS}{U}

\newcommand{\mass}{a}
\newcommand{\gradualBound}{g}
\newcommand{\outcome}{p}
\newcommand{\graduallyChangingDS}{G}
\newcommand{\totalMass}{t}

In this section we present a data structure for maintaining and sampling from a dynamic subset in worst case constant time.
We use this data structure to complete the analysis of the \adaptiveDealer~(\Cref{sec_adaptive:alg_adaptive_dealer}) during the \phaseFinal, and to improve upon its runtime during the \phaseAdaptive. 
We also use it to construct an efficient Dealer that generates all permutations with equal probability; that is, one that draws available cards uniformly at random in each turn.
As a result, any Guesser that plays against this Dealer scores at most~$\ln n$ correct guesses in expectation.
The main point of this Dealer is that it requires~$O(n)$ bits of memory and runs in worst-case constant time.

Our constructions revolves around the ability to maintain and sample from the set of available cards. Generally speaking, for a given domain~$\domainSymbol$, we present a data structure for maintaining and sampling from a dynamic subset~$\subsetSymbol \subseteq \domainSymbol$.

\begin{theorem}
    \label{sec_linear_perfect:claim_maintain_and_sample_from_subset}
    There exists a data structure~$\subsetDS$ such that for any~domain~$\domainSymbol$, $\subsetDS$ maintains a dynamic subset~$\subsetSymbol \subseteq \domainSymbol$. The data structure~$\subsetDS$ supports membership queries, addition and removal of an element to and from~$\subsetSymbol$, and sampling of a random member of~$\subsetSymbol$ uniformly at random.
    All operations run in worst-case constant time, while using~$O(|\domainSymbol|)$ bits of memory.
\end{theorem}

Given such a data structure, implementing our linear-space and constant time Dealer is straightforward: use~$\subsetDS$ to maintain the set of available cards.
Begin by initializing ~$\subsetDS$ to track the entire deck of~$n$ cards.
In each turn, the Dealer samples uniformly at random an available card to draw, and removes it from the set.
Since both operations run in constant time, so does the Dealer.
All available cards are equiprobable, so the expected score is~$\ln n$.
The Dealer solely uses~$\subsetDS$, so it consumes~$O(n)$ bits of memory.

\begin{theorem}
    \label{sec_linear_perfect:thm_linear_dealer}
    There exists a Dealer with~$O(n)$ bits of memory that generates a permutation uniformly at random. Any Guesser is expected to score at most~$\ln n$ correct guesses against this Dealer. 
    Furthermore, this Dealer is efficient, and each turn takes worst-case constant time.
\end{theorem}

We describe our data structure across~\Cref{sec_linear_perfect:sec_cells_intervals_populations,sec_linear_perfect:sec_distribution_of_populations}.
We then show how to use it in order to implement the \phaseFinal of the \adaptiveDealer and how to make it run in worst case constant time, these are discussed in \Cref{sec_linear_perfect:sec_discussion}.

\subsection{Subset Data Structure}
\label{sec_linear_perfect:sec_cells_intervals_populations}

It seems natural to allocate a bit for every element in the domain~$\domainSymbol$ to indicate its membership to~$\subsetSymbol$, and store these bits in small cells that allow fast operations.
With simple bit-wise operations on words, we can mark a cell's bit as a member of~$\subsetSymbol$, discard it from~$\subsetSymbol$, and query its membership.
Fixing the cells in an array forms a bitmap that supports addition and removal of a single element, as well as set membership queries, all in constant time and within the memory requirements.

What about sampling?
We think of a cell as inducing a restricted subset of~$\subsetSymbol$.
The cells are small enough, so we can sample a random element from it in constant time.
But how do we sample a cell? 
Choosing one at random implies that elements from less populated cells are drawn with a higher probability than elements from less populated ones. 
It follows that to draw an element uniformly at random, we need to sample a cell proportionally to its population size, that is, to the number of elements from~$\subsetSymbol$ that it tracks.

Observe that cells that hold the same number of elements must be equally probable.
This hints that our data structure requires the ability to uniformly sample a cell of a given population size, and also requires the ability to sample a population size in proportion to the number of elements from that size.
So on a high-level, the data structure first samples a population size, then picks a cell with this many elements, and finally, chooses a random element from that cell.

For brevity, we denote the size of the domain~$\domainSymbol$ by~$n=|\domainSymbol|$, and assume that~$n$ is a power of~$2$.
We also assume that there is a natural mapping between~$[n]$ and $\domainSymbol$.

\paragraph{Cells:}
We split the domain~$\domainSymbol$ to cells, where each cell tracks~$\log n$ elements.
If a cell contains~$\population$ elements from~$\subsetSymbol$, then we say that the cell is of population~$\population$, and similarly, that the elements in it are of population~$\population$.

In terms of space, each cell stores~$\log n$ bits to track the membership of~$\log n$ elements.
We would like to move cells around, so each cell also contains an index~$j$ so the~$i$th bit of that cell indicates the membership of the element numbered~$(j \cdot \log n) + i$.
Overall, each cell requires~$O(\log n)$ bits, and there are~$n /\log n$ cells, so we are good.
Simple bit manipulation on words allows us to sample a random element and update its membership in constant time, as can be seen in~\Cref{sec_linear_perfect:alg_draw_card_from_cell}. 
But how do we sample a cell?

\paragraph{Intervals:}
After removing an element from a cell of population size~$\population$ it becomes a cell of population size~$\population-1$.
In other words, the set of cells of population~$\population-1$ grows by~$1$, at the expense of the set of cells of population~$\population$, while the union of both sets remains in the same size.
This works the same also for adding an element to a cell, and hints that updates, as well as sampling, can be done in a local manner.

We store the cells of the same population size sequentially and in an ordered manner, so that the cells of population~$\population$ are stored before those of population~$\population-1$.
We refer to a sequence of cells of the same population size by \emph{interval}.
For each interval, we store the number of cells of that interval, as well as a pointer to its first cell.
This allows us to maintain the ordered structure, as well as to sample a random cell of a given population size, at worst case constant time, while using only~$O(\log^2 n)$ additional space.
An algorithmic description is provided in~\Cref{sec_linear_perfect:alg_draw_cell_of_population}.
A visual example of interval maintenance is provided in~\Cref{sec_linear_perfect:fig_interval}.

\begin{figure}[ht]
  \caption{Cells ordered in intervals. Top to bottom:
  (i) there are~$8$ cells mentioned by their index~($j$). Cell~$4$ holds~$3$ elements (the blue interval), cells~$1,3,7$ and~$5$ contain $2$ elements (the green interval), cells~$0$ and~$6$ track a single element (the yellow interval), and cell~$2$ is empty (red).
  (ii) We remove an element from cell~$3$, thus, it switches places with the last cell in the interval, which is cell~$5$.
  (iii) Cell~$3$ holds a single element now, and therefore it is marked as the beginning of the yellow interval (cells of population~$1$).
  .\\}
  
  \centering
  \includegraphics[width=0.4\columnwidth]{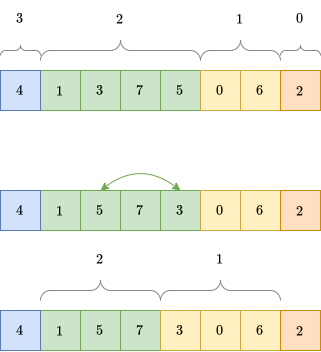}
  \label{sec_linear_perfect:fig_interval}
  
\end{figure}

\paragraph{Population:}

Since each cell contains at most~$\log n$ elements, then there are~$\log n$ possible population sizes (empty cells are not interesting).
When we remove an element from a cell of population~$\population$, that cell becomes one of population~${\population-1}$.
In other words, let~$\mass_\population$ be the number of elements that reside in cells that contain exactly~$\population$ elements.
Observe that after removing an element of population~$\population$, $\mass_\population$ is reduced by~$\population$, and~$\mass_{\population-1}$ grows by~$\population-1$.
We would like to sample a population size according to the dynamic pseudo distribution induced by~$$(\mass_1, \dots, \mass_{\log n}).$$
We claim this is possible with the given time and memory requirements, and show the proof in \Cref{sec_linear_perfect:sec_distribution_of_populations}.
\begin{lemma}
    \label{sec_linear_perfect:claim_population_sampling}
    It is possible to maintain the dynamic pseudo distribution induced by~$(\mass_1, \dots, \mass_{\log n})$ while using~$O(n)$ bits, where updates and sampling run in worst-case constant time.
\end{lemma}

Having settled the space and run time requirement, it is left to show that we actually sample each element of~$\subsetSymbol$ uniformly at random.
This completes the proof of~\Cref{sec_linear_perfect:claim_maintain_and_sample_from_subset}.

\begin{claim}
    \label{sec_linear_perfect:claim_equiprobable}
    All elements of~$\subsetSymbol$ are equiprobable.
\end{claim}
\begin{proof}
    \newcommand{\randomElement}{b}
    Fix some element~$\randomElement \in \subsetSymbol$ that resides in a cell of population~$\population$, whose interval contains~$\ell$ cells, and assume that~$|\subsetSymbol| = \alpha$.
    Let the indicator random variable $\mathbf{Y}_\randomElement$ be the event that the sampling process generated~$\randomElement$.
    The element~$\randomElement$ is sampled in probability~$1/\population$ if its cell was selected~(and~$0$ otherwise), which happens in probability~$1/\ell$ conditioned on the event that population~$\population$ was sampled~(and~$0$ otherwise). 
    The latter happens with probability~$\mass_\population / \alpha$.
    Since~$\mass_\population = \ell \cdot \population$, we get that $$
    \Pr[\mathbf{Y}_\randomElement] = {1 \over \population} \cdot {1 \over \ell} \cdot {\ell \cdot \population \over \alpha} = {1 \over \alpha}
    .
    $$
\end{proof}

\subsection{Pseudo Distribution of Populations}
\label{sec_linear_perfect:sec_distribution_of_populations}
\emph{And now for something completely different.}

We now present a data structure for maintaining and sampling from the dynamically changing distribution of populations\footnote{See~\Cref{sec_preliminaries:sec_dynamic_dist} for some background about dynamically changing distributions.}, and thus prove~\Cref{sec_linear_perfect:claim_population_sampling}.

To recap, and as a prelude to what's to come, we would like to sample a population with respect to the number of elements of that population.
This distribution is induced by the vector~$(\mass_1, \dots, \mass_{\log n}) \in [n]^{\log n}$, where~$\mass_\population$ is the number of elements that reside in cells that contain~$\population$ elements.
As mentioned, observe that the distribution of population changes in a specific way: 
removing an element from a cell of population~$\population$, implies that~$\mass_\population$ decreases by~$\population$, and~$\mass_{\population-1}$ increases by~$\population-1$, and vice versa when adding an element.
So for every~$\population$, $\mass_\population$ changes by at most~$\log n$.

Consider the presentation of a number~$\mass \in \fN$ as being made of an integer multiplication of~$\log n$ and some leftover; this results in a quotient and a residue.
\begin{flalign*}
    \mass
    =
    \underbrace{
    \left\lfloor
    {\mass \over \log n}
    \right\rfloor
    }_{\text{quotient}}
    \cdot
    \log n
    +
    \underbrace{
    (\mass \bmod \log n)
    }_{\text{residue}}
    .
\end{flalign*}
Observe now that the addition and subtraction of any quantity that is at most~$\log n$, changes the quotient at most by one. This fact will be of great use to us.

Given a pseudo probability mass vector~$(\mass_1, \dots, \mass_{\log n})$ we split the masses into quotients and residues and manage them in two different data structures.
Let~$\massQuotient_i = \lfloor {\mass_i / \log n}\rfloor$, and~$\massResidue_i = (\mass_i \bmod \log n)$ be the quotient and residue (resp.) of the mass of outcome~$i$.
Let~$\totalMassQuotient = \log n \cdot \sum_{i=1}^{\log n} \massQuotient_i$ be the total mass of quotients, and let~$\totalMassResidue = \sum_{i=1}^{\log n} \massQuotient_i$ be the total mass of residues.

The pseudo distribution of quotients~$(\massQuotient_1 \cdot \log n, \dots, \massQuotient_{\log n} \cdot \log n)$ is maintained in one data structure, and the pseudo distribution of residues~$(\massResidue_1,\dots,\massResidue_{\log n})$ in another. 
Flip a biased coin with probability proportional to the ratio between the total mass of quotients and the total mass of residues~$\totalMassQuotient: \totalMassResidue$, and sample from the resulting data structure.
This process clearly samples from the distribution~$(\mass_1, \dots, \mass_{\log n})$.
So, it is left to show how to maintain the two distributions.

\paragraph{Quotients:}
Observe that the quotients pseudo distribution induced by~$(\massQuotient_1 \cdot \log n, \dots, \massQuotient_{\log n} \cdot \log n)$ is exactly the pseudo distribution induced by~$(\massQuotient_1, \dots, \massQuotient_{\log n})$. 
And as discussed, each update adds or removes at most a single point of mass there.
This is a special case of a dynamically changing distribution that is used for simulating sampling of colored marbles from an urn, with and without replacement.

This can be done in a simple and explicit way\footnote{That might entertain some job interviewers.}.
To simulate an urn that contains~$\marbles$ marbles of~$\colors$ colors, we represent each color as a doubly-linked list, where each node in the list represents a single marble. By storing an anchor to each linked list, we get that addition and removal can be implemented in worst-case constant time.
As for sampling, the nodes of all linked lists are stored in an array of size~$\marbles$.
We make sure that there are no empty entries, so when a marble is removed, we fill its position in the array with another marble and fix its pointers.
So sampling a marble is done by sampling a random index in the array and checking its color.
Sampling without replacement is a combination of sampling and removal.

An algorithmic description of the Urn data structure is provided in~\Cref{sec_linear_perfect:alg_urn_ds} and~\Cref{sec_linear_perfect:alg_urn_methods}.
A visual representation of the removal process is provided in~\Cref{sec_linear_perfect:fig_marble_removal}.

\begin{figure}[ht]
  \caption{Urn data structure - marble removal process, top to bottom: (i) Initially, there are~8 marbles of~3 colors. We would like to remove a blue marble. (ii) The node in index~$2$ is removed, and the pointers are updated accordingly. (iii) To preserve the array's continuity, we move the last marble in the array and store it in the absent place, while fixing the pointers. }
  
  \centering
  \includegraphics[width=0.4\columnwidth]{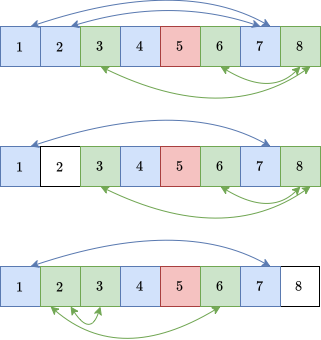}
  \label{sec_linear_perfect:fig_marble_removal}
  
\end{figure}

\begin{claim}
    \label{sec_linear_perfect:claim_urn}
    There exists a data structure~$\urnDS$ that stores an urn of~$\marbles$ marbles, where each marble is colored in one of~$\colors$ colors, such that
    $\urnDS$ requires~$O\left(\marbles \log \marbles + \marbles \log \colors\right)$ bits of memory, 
    and supports sampling, addition and removal of a marble in worst-case constant time.
\end{claim}

It follows that we can maintain the quotients distribution by using the Urn data structure with~$\marbles={n \over \log n}$ and~$\colors=\log n$, for which updates and samples will run in worst-case constant time, and within the given space requirements.

\paragraph{Residues:}
We observe that the total mass of residues~$\totalMassResidue$ is less than~$\log^2 n$, and that there are~$\log n$ possible outcomes.
It is possible to maintain and sample from such distributions by using a construction by Hagerup, Mehlhorn and Munro~\cite{hagerup1993} as described in~\Cref{sec_preliminaries:sec_dynamic_dist:thm_polynomially_bounded}.
This would give an expected constant sampling run time.

\newcommand{\randomPrefix}{u}
Another option is to observe that while there are more than~$\log^3 n$ elements, we sample from the distribution of residues with probability~$O(1/ \log n)$. So we can allow ourselves a relatively long sampling time.
One way to do so is to store the masses explicitly, and when asked to generate an outcome, we first sample an integer~$\randomPrefix \sim[\sum_{i=1}^{\log n} \massResidue_i]$ uniformly at random, and compare it against the corresponding prefix vector of the pseudo distribution, that is~$(0, \massResidue_1, \massResidue_1+\massResidue_2, \dots, \sum_{i=1}^{\log n-1} \massResidue_i)$, and return the index of the largest entry that is smaller than~$\randomPrefix$.
Since this can be done in~$O(\log n)$ time, we get constant time in expectation.
But we aim for worst-case constant time, so we will do better.

\newcommand{\samplingFunction}{f}
\newcommand{\supportSize}{k}
\newcommand{\massBound}{b}

Recall that we have~$O(n)$ bits of memory, and can perform a set of operations on words of size~$\log n$ (See~\Cref{sec_preliminaries:sec_word_ram}).
As a result, we can explicitly store a small function in memory, so that evaluating it takes only a constant time, say, by storing an array~$A$ such that~$A[x]$ holds the evaluation of the function on input~$x$.

Call a pseudo distribution \emph{$\massBound$-mass bounded} if the mass of every outcome is at most~$\massBound$.
Consider a function~$\samplingFunction$ that receives a vector~$(\mass_1, \dots, \mass_\supportSize) \in [\massBound]^\supportSize$ and a number~$\randomPrefix \in [\massBound \cdot \supportSize]$ and outputs a member of~$[\supportSize]$ as described in the prefix-sum scheme above, that is~$\samplingFunction : [\massBound]^\supportSize \times [\massBound \cdot \supportSize] \to \supportSize$.
By choosing a random~$\randomPrefix$ uniformly at random from~$[\sum_{i \in [\supportSize]} \mass_i]$ we, in fact, sample according to the given pseudo distribution vector.
What does it take to store $\samplingFunction$ in memory?

\begin{claim}
    \label{sec_linear_perfect:claim_maintaining_small_distributions}
    For every~$\massBound,\supportSize \in \fN$,
    if~$\massBound^{\supportSize} \cdot (\massBound \cdot \supportSize) \cdot \log \supportSize \le n$, then it is possible to maintain, and sample from, any dynamically changing $\massBound$-mass bounded pseudo distribution with support of size~$\supportSize$ in worst case constant time.
\end{claim}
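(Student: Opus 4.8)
The plan is to precompute the entire truth table of the sampling function $\samplingFunction$ and keep it in memory, so that a query reduces to one table lookup. First I would fix a bijective encoding of a $\massBound$-mass-bounded vector $(\mass_1,\dots,\mass_\supportSize)\in[\massBound]^\supportSize$ as the base-$\massBound$ integer $e=\sum_{i=1}^{\supportSize}\mass_i\massBound^{i-1}\in\{0,\dots,\massBound^\supportSize-1\}$; by the hypothesis $e<\massBound^\supportSize\le n\le 2^w$, so it fits in a single word and, crucially, can be updated incrementally. Then I would build an array $A$ whose entry $A[e,\randomPrefix]$, for $\randomPrefix\in[\massBound\cdot\supportSize]$, stores $\samplingFunction$ evaluated on the vector encoded by $e$ together with the threshold $\randomPrefix$, i.e.\ the index returned by the prefix-sum rule of the preceding discussion. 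There are $\massBound^\supportSize\cdot(\massBound\supportSize)$ entries, each needing $\lceil\log\supportSize\rceil$ bits, so packing them contiguously uses $O(n)$ bits by exactly the stated inequality; an individual fixed-width entry is then read in worst-case $O(1)$ time with a shift and a mask (alternatively, one may store $A$ in the variable bit-length array of~\Cref{sec_preliminaries:sec_vla:thm_variable_bit_length_arrays}).

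Next I would set up the bookkeeping kept alongside $A$: the current vector $(\mass_1,\dots,\mass_\supportSize)$, its encoding $e$, the running total $s=\sum_i\mass_i$, and the precomputed powers $\massBound^{0},\dots,\massBound^{\supportSize-1}$ (each at most $\massBound^\supportSize\le n$, hence one word; and since $\supportSize\le\log_2 n$ whenever $\massBound\ge 2$, all of this is $O(\log n)=O(n)$ bits). An update to coordinate $i$ from $\mass$ to $\mass'$ changes $e$ by $(\mass'-\mass)\massBound^{i-1}$ and $s$ by $(\mass'-\mass)$, both in worst-case constant time. To sample, I would draw $\randomPrefix\sim[s]$ uniformly at random — one biased-random-integer draw, worst-case $O(1)$ in the model of~\Cref{sec_preliminaries:sec_word_ram} since $s\le\massBound\supportSize\le n\le 2^w$ — and return $A[e,\randomPrefix]$, whose linear index $e\cdot(\massBound\supportSize)+\randomPrefix<\massBound^\supportSize\cdot\massBound\supportSize\le n$ again fits in a word, so the lookup is worst-case $O(1)$. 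Correctness of the sampling is immediate from the prefix-sum rule: over a uniform $\randomPrefix\in[s]$, outcome $j$ is returned precisely when $\sum_{i<j}\mass_i<\randomPrefix\le\sum_{i\le j}\mass_i$, i.e.\ with probability $\mass_j/s$, which is the target pseudo-distribution.

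None of the steps presents a real obstacle; the two points requiring a line of care are (i) checking that every quantity that must fit in a machine word does, which reduces to the chain $\massBound^\supportSize\cdot\massBound\supportSize\le n\le 2^w$ together with $w=\Omega(\log n)$, and (ii) that the table can actually be \emph{built} within a linear time (and space) budget — this holds because for each fixed $e$ the decoding, the prefix sums, and all $\massBound\supportSize$ values $\samplingFunction(\cdot,\randomPrefix)$ are produced by a single monotone pass (advancing one pointer as $\randomPrefix$ increases), for a total of $O(\massBound^\supportSize\cdot\massBound\supportSize)=O(n)$ operations. In short, the content of the claim is the observation that the hypothesis is exactly the memory budget needed to tabulate $\samplingFunction$ in full, after which maintenance and sampling are constant-time table operations.
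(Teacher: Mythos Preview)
Your proposal is correct and takes essentially the same approach as the paper: precompute and store the full lookup table of the sampling function $\samplingFunction$, with the hypothesis supplying exactly the required space bound, while maintaining the mass vector and its sum for constant-time updates. Your write-up is more detailed than the paper's terse paragraph---you spell out the base-$\massBound$ encoding of the vector, its incremental maintenance via precomputed powers, the word-size checks, and the linear-time table construction---but the core argument is identical.
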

\begin{proof}
    Store the vector of masses explicitly and the sum of masses; this takes~$O(\supportSize \cdot \log \massBound)$ bits and allows updates in constant time.
    There are~$\massBound^\supportSize$ possible $\massBound$-mass bounded distribution vectors.
    For each one of them, we sample a random~$\randomPrefix$ such that~$\randomPrefix \le \sum_{i \in [\supportSize]} \mass_i \le \supportSize \cdot \massBound$.
    So overall, we can store the sampling function~$\samplingFunction$ as an array of~$\massBound^{\supportSize} \cdot (\massBound \cdot \supportSize)$ entries (one for every possible input), where each entry contains outcome, which is of size~$\log \supportSize$. 
    Now sampling takes a constant time, and our only wish is that this array would fit into memory, which is given by the assumption.
\end{proof}

Back to the distribution of residues, while being~$\log n$-mass bounded, the support is too large for the sampling function to fit into memory.
We overcome this obstacle by partitioning the support of the distribution to smaller sets of equal size, and maintaining the pseudo distribution of each one of them separately.
Now we are left with the task of maintaining and sampling from the distribution of support sets.
This can be visualized as a depth-$2$ tree, where each leaf is responsible for maintaining a distribution of a certain partition, and has a total mass, where the root maintains the distribution of the leaves with respect to their total mass.
So, generating an outcome is done by first sampling a leaf according to the distribution at the root, and then sampling an outcome according to the distribution at the selected leaf.
Updates involve only the relevant leaf and the root, so they are both performed at constant run time.
In terms of space, this requires~$O(\log n \cdot \log \log n)$ bits of memory, in addition to storing the sampling functions in memory.
Since the distributions in the leaves are similarly mass bounded and are over the same number of outcomes, it suffices to store a single sampling function for the leaves (call it~$\samplingFunction_0$), and another one for the distribution at the root~(call it~$\samplingFunction_1$).

In more detail, we partition~$[\log n]$ to~$\supportSize_1 = 2 \log \log n$ disjoint sets, each of size~$\supportSize_0 = \frac{\log n}{2 \log \log n}$, and we consider the pseudo distributions restricted to each support set.
Observe that the distribution of any particular support set is~$\massBound_0$-mass bounded where~$\massBound_0 = \log n$.
As a result, the total mass of any leaf is upper bounded by~$\massBound_1 = \massBound_0 \cdot \supportSize_0 = \frac{\log^2 n}{2 \log \log n}$.
It follows that the distribution at the root is~$\massBound_1$-mass bounded, and has~$\supportSize_1$ possible outcomes.
Similarly, the distributions at the leaves are all~$\massBound_0$-mass bounded, and have support of size~$\supportSize_0$.
Let~$\samplingFunction_0$ ($\samplingFunction_1$) be the function for sampling from any $\massBound_0$-mass bounded (resp. $\massBound_1$-mass bounded) distributions over support of size~$\supportSize_0$ (resp. $\supportSize_1$).

\begin{claim}
    It is possible to store~$\samplingFunction_0$ and $\samplingFunction_1$ in memory.
\end{claim}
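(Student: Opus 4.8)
The plan is to apply Claim~\ref{sec_linear_perfect:claim_maintaining_small_distributions} (more precisely, its proof, which bounds the number of bits needed to tabulate the sampling function) separately to $\samplingFunction_0$ and $\samplingFunction_1$, and simply check that in both cases the relevant quantity $\massBound^{\supportSize} \cdot (\massBound \cdot \supportSize) \cdot \log \supportSize$ is at most $n$ (up to the $O(\cdot)$ allowed by the $O(n)$-bit budget). Recall the parameters fixed just above: $\samplingFunction_0$ must serve $\massBound_0$-mass bounded distributions over $\supportSize_0$ outcomes with $\massBound_0 = \log n$ and $\supportSize_0 = \frac{\log n}{2\log\log n}$, while $\samplingFunction_1$ must serve $\massBound_1$-mass bounded distributions over $\supportSize_1$ outcomes with $\massBound_1 = \frac{\log^2 n}{2\log\log n}$ and $\supportSize_1 = 2\log\log n$.

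For $\samplingFunction_0$ the dominant term is $\massBound_0^{\supportSize_0} = (\log n)^{\frac{\log n}{2\log\log n}}$; taking a base-$2$ logarithm, the exponent becomes $\frac{\log n}{2\log\log n}\cdot\log\log n = \frac{\log n}{2}$, so $\massBound_0^{\supportSize_0} = 2^{(\log n)/2} = \sqrt{n}$. The remaining factors $\massBound_0 \cdot \supportSize_0$ and $\log\supportSize_0$ are both $\mathrm{polylog}(n)$, so the tabulation of $\samplingFunction_0$ takes $\sqrt{n}\cdot\mathrm{polylog}(n)$ bits, which is $o(n)$.

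For $\samplingFunction_1$ the dominant term is $\massBound_1^{\supportSize_1} \le (\log^2 n)^{2\log\log n} = 2^{4(\log\log n)^2}$, which is $n^{o(1)}$ because $4(\log\log n)^2 = o(\log n)$; the factors $\massBound_1 \cdot \supportSize_1$ and $\log\supportSize_1$ are again $\mathrm{polylog}(n)$, so the tabulation of $\samplingFunction_1$ takes $n^{o(1)}\cdot\mathrm{polylog}(n) = o(n)$ bits. Hence both tables fit inside the Dealer's $O(n)$ bits of memory (for all sufficiently large $n$; the finitely many small cases are absorbed by constants or by the explicit fallback used elsewhere), and evaluation of each is a single array lookup, i.e.\ worst-case constant time. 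The only point that requires any care is the exponent bookkeeping in $\massBound_0^{\supportSize_0}$ — this is precisely why $\supportSize_0$ was chosen to be $\frac{\log n}{2\log\log n}$ rather than, say, $\frac{\log n}{\log\log n}$: halving the exponent buys the factor $\sqrt{n}$ of slack that then comfortably swallows the polylogarithmic overhead and the second table. I do not expect any genuine obstacle beyond this arithmetic.
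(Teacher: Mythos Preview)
Your proof is correct and follows essentially the same route as the paper: apply Claim~\ref{sec_linear_perfect:claim_maintaining_small_distributions} to each of $\samplingFunction_0$ and $\samplingFunction_1$, observe that the factor $(\massBound\cdot\supportSize)\cdot\log\supportSize$ is $\mathrm{polylog}(n)$ in both cases, and verify that the dominant term $\massBound^{\supportSize}$ is small enough. The only cosmetic difference is in the treatment of $\samplingFunction_1$: the paper massages the inequality $\supportSize_1 \le \tfrac{\log n}{2\log\massBound_1}$ to obtain $\massBound_1^{\supportSize_1}\le\sqrt{n}$, whereas you bound $\massBound_1^{\supportSize_1}\le 2^{4(\log\log n)^2}=n^{o(1)}$ directly, which is in fact a sharper and more transparent estimate.
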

\begin{proof}
    To use~\Cref{sec_linear_perfect:claim_maintaining_small_distributions} we would like to make sure that the inequality
    \begin{flalign*}
        \underbrace{\massBound^{\supportSize}}_{(\ast)} 
        \cdot 
        \underbrace{(\massBound \cdot {\supportSize}) \cdot \log \supportSize}_{(\ast\ast)} \le n
    \end{flalign*}
    holds for~$(\massBound, \supportSize) = (\massBound_0, \supportSize_0) = (\log n, \frac{\log n}{2 \log \log n})$ and for~$(\massBound, \supportSize) = (\massBound_1, \supportSize_1) = (\frac{\log^2 n}{2 \log \log n}, 2 \log \log n)$.
    We observe that for both cases~$(\ast\ast)$ is upper bounded by~$\poly\log n$. We show that~$(\ast) \le \sqrt{n}$, and the result follows.
    
    And indeed, for~$\massBound_0$ and~$\supportSize_0$:
    \begin{flalign*}
        \supportSize_0 \cdot \log \massBound_0
        &
        =
        \frac{\log n}{2 \log \log n} \cdot \log\log n 
        = 
        \frac{1}{2} \log n
        .
    \end{flalign*}
    Taking exponents on both sides implies that $(\ast) = {\massBound_0}^{\supportSize_0} = \sqrt{n}$.

    As for~$\massBound_1$ and~$\supportSize_1$:
    \begin{flalign*}
        \supportSize_1 
        &
        = 
        2 \log \log n
        \le 
        \frac{\log n}{4 \log \log n}
        =
        \frac{\log n}{2 \log \left(\log^2 n\right)}
        \le
        \frac{\log n}{2 \log \left(\frac{\log^2 n}{\log \log n}\right)}
        =
        \frac{\log n}{2 \log \massBound_1}.
        \\
        \supportSize_1 \cdot \log \massBound_1 & \le \frac{\log n}{2}.
        \\
        \massBound_1^{\supportSize_1} &\le \sqrt{n}
        .
    \end{flalign*}
    
\end{proof}
As a corollary, updating the distribution of residues, as well as sampling from it takes only a constant time in the worst case, and requires~$O(\log n \cdot \log \log n)$ bits in addition to two preprocessed functions that takes~$O(n)$ bits.

Since it is possible to maintain and sample in constant time, both from the distribution of quotients and the distribution of residues, both within the given space requirements, then it is possible to maintain and sample from the distribution of populations in worst-case constant time.
\Cref{sec_linear_perfect:claim_population_sampling} follows.

\subsection{Back to the Adaptive-Threshold Dealer}
\label{sec_linear_perfect:sec_discussion}

Looking back at our \adaptiveDealer from \Cref{sec_adaptive:alg_adaptive_dealer}, we would like to resolve two issues: (i) the time per draw during the~\phaseAdaptive, making it constant time in the worst case rather than expected case and (ii) to describe the Dealer's internals for the \phaseFinal.

Starting with the~\phaseAdaptive, in each turn, \adaptiveDealer samples a mini-deck repeatedly until a drawable one is found.
We showed (in \Cref{sec_adaptive:claim_expected_runtime_at_turn}) that the process of finding a suitable mini-deck is expected to terminate after a constant time.
We claim that there is a better way.
Consider the set~$\drawableMinidecks_\turn \subseteq [\minidecks]$ of all mini-decks that contain at least one hole at turn~$\turn$, thus, we can draw a card from them.
At the beginning of each stage, all mini-decks are such, and when a mini-deck reaches the threshold, we remove it from this set.
We can use the data structure from~\Cref{sec_linear_perfect:claim_maintain_and_sample_from_subset} to maintain and sample from the set of drawable mini-decks in worst-case constant time, and within the required space.
We keep two copies of this data structure: one for the odd stages and one for the even stages. In each stage, we maintain and sample from one of them as described above, while we replenish the other to be ready for the beginning of the next stage.
This implies that the \adaptiveDealer runs in worst-case constant time per turn during the~\phaseAdaptive.
Since each drawable mini-deck is equiprobable, the analysis of the \phaseAdaptive holds, and we get the following improvement upon \Cref{sec_adaptive:thm_adaptive_during_adaptive_phase}.
\begin{lemma}
    \label{sec_linear_perfect:lemma_adaptive_dealer_during_adaptive_phase}
    An \adaptiveDealer with~$m$ bits of memory scores~$O(n/m)$ points in expectation against any Guesser during the \phaseAdaptive, and each draw runs in worst case constant time.
\end{lemma}

As for the~\phaseFinal, recall that in \Cref{sec_adaptive:alg_adaptive_dealer}, the \adaptiveDealer draws the last~$2\minidecks$ cards uniformly at random.
That is, we would like to generate a permutation of the remaining cards.
Observe that generating a permutation of~$2\minidecks$ cards is a similar task to what we did in this section.
On the other hand, since a mini-deck might hold all remaining cards, then the initial super-set of possible cards is now~$\minidecks \cdot 2\minidecks$, which requires more bits than we are willing to spend.
So we need to come up with an efficient way to enumerate the remaining available cards, that is, to easily associate the set~$[2\minidecks]$ with the set of available cards.

We introduce a small enhancement to the Cells data structure (\Cref{sec_linear_perfect:sec_cells_intervals_populations}) that allows us to draw the last~$2\minidecks$ remaining cards in worst-case constant time, and in a perfectly random way.
This implies that any Guesser is expected to score at most~$\ln(2 \minidecks)$ points during this phase.

Originally, we split the domain into consecutive sets of size~$\log n$ and tracks each set in a cell, where a cell maintains a bitmap, as well as some meta-data to recover the original identity of each bit.
The idea now is that we use the data structure to track the set of available cards. Each cell tracks cards of (possibly) multiple mini-decks, in a way that allows us to associate each availability bit with a mini-deck. And complementary, if a mini-deck has more available cards than a cell can handle, then we can track it across multiple cells.

In the enhanced version, each cell tracks~$\log \minidecks$ cards.
We iteratively assign mini-decks to cells.
If a cell can track what's left of a mini-deck entirely, then we just add it to the cell. 
Otherwise, we assign what portion of the mini-deck that does fit into the cell and continue to the next cell.
We begin initializing this data structure ahead of the~\phaseFinal while ensuring worst-case constant time (say, start in stage~$n-4$ and inspect~4 mini-decks, or portions of mini-decks, in each turn).
By starting early, we ensure that no mini-deck is empty, which simplifies this construction.

In terms of metadata and space, each cell now contains:
\begin{itemize}
    \item Cards availability bitmap ($\log \minidecks$ bits) - that tracks the availability of~$\log \minidecks$ cards.
    \item First card bitmap ($\log \minidecks$ bits) - refers to the Cards availability bitmap, and indicates whether a card is the first card of a mini-deck.
    \item The first mini-deck assigned to this cell~($\log \minidecks$ bits).
    \item The first mini-deck's offset~($O(\log \minidecks)$ bits) - for the case that some of the cards of the first mini-deck are tracked in previous cell(s).
\end{itemize}
See~\Cref{sec_linear_perfect:fig_adaptive_cells} for a visual representation.

\begin{figure}[h]
  \caption{Enhanced cells for the~\adaptiveDealer. 
  The left cell tracks mini-decks~$14,15$ and the first~$4$ cards of mini-deck~$16$. It is of population~$4$ since it tracks a total of~$4$ available cards.
  Mini-deck~$14$ is the first mini-deck that begins in this cell, thus, it is a leader.
  The right cell tracks the~$6$ cards left of mini-deck~$16$ and the first two cards of mini-deck~$17$. It is of population~$6$.
  Mini-deck~$17$ is a leader because the first card of mini-deck~$16$ is tracked in the left cell.
  \\}
  
  \centering
  \includegraphics[width=1.0\columnwidth]{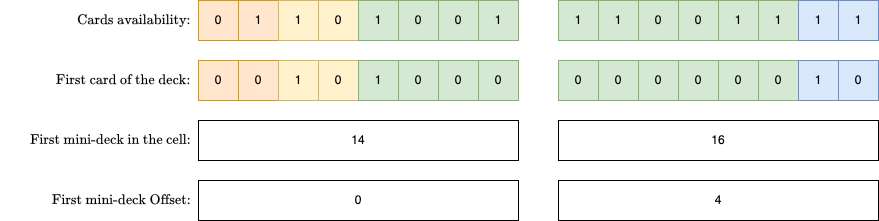}
  \label{sec_linear_perfect:fig_adaptive_cells}
  
\end{figure}

How many cells do we need? 
Since there are~$O(\minidecks)$ available cards to track, we get that~$O(\minidecks/ \log \minidecks)$ cells suffice. 
Since each cell consumes~$O(\log \minidecks)$ bits, then we are good.

Observe that during initialization, we maintain two data structures for the same purpose, so there is some bookkeeping to perform.
We add mini-decks to cells sequentially, so we make sure to know what is the last fully added mini-deck.
When a drawable mini-deck is chosen, if it is fully tracked, we sample from it using the cells data structure. And if not, then we draw its top card and update its hole counter.
Once a mini-deck is fully added, we freeze its hole counter, and use it later for decoding. We need to pay extra attention to the currently added mini-deck.

To draw a card from a tracked mini-deck during initialization, we need to recover its cell. We introduce yet another data structure for this task.
We say that a mini-deck is \emph{leading} if it is the first mini-deck that begins in a cell. 
So in~\Cref{sec_linear_perfect:fig_adaptive_cells}, mini-decks~$14$ and~$17$ are leading.
We allocate a bit for each mini-deck to indicate whether it is leading or not ($\minidecks$ bits).
Additionally, for leading mini-decks, we store a pointer to their cell.
So drawing a card from a leading mini-deck is easy.

Looking at the leaders bitmap, observe that every sequence of~$\log \minidecks$ bits indicates at least one leading mini-deck.
This is true because, during initialization, each mini-deck contains at least one available card.
So we find the cell of a non-leading mini-deck by finding its leader, which requires reading~$\log \minidecks$ bits from the leader's bitmap.
Since there are~$O(\minidecks / \log \minidecks)$ cells, then there are at most that many leaders, so storing their pointers also fits into memory.

After initialization, that is, during the \phaseFinal, we use the same procedure of sampling a population size, then a cell of this size, and then an available card.
The ``decoding'' of each card's meaning is done, as previously, by simple bit-wise and arithmetic operations.
\begin{lemma}
    \label{sec_linear_perfect:lemma_adaptive_dealer_during_final_phase}
    An \adaptiveDealer with~$m$ bits of memory scores~$O(\ln m) $ points in expectation against any Guesser during the \phaseFinal, and each draw runs in worst-case constant time.
\end{lemma}

Together with~\Cref{sec_linear_perfect:lemma_adaptive_dealer_during_adaptive_phase}, we finish the analysis of the \adaptiveDealer.

\begin{theorem}
    An \adaptiveDealer with~$m$ bits of memory scores~$O(n/m + \ln(m))$ points in expectation against any Guesser, and each draw takes worst-case constant time.
\end{theorem}

\paragraph{Time Complexity of Sampling at Random: }

In the literature of dynamically changing distributions, several solutions achieve expected constant time for certain families of distributions.
This is usually achieved by using some form of rejection sampling, that is, sampling repeatedly until reaching some desired outcome.
While none of our machinery was exceptionally complicated, it took some work to get to worst-case constant time, let alone do so in a space-efficient manner.

We do, however, rely on the ability to sample in constant time a random number from a set~$[x]$ for any $x\le 2^w$, where~$w$ is the word size in bits.
This is a non-trivial ability, and without it, the expected constant time sneaks back in.

When~$x$ is a power of two, say, $x=2^k$, this is easy, just read~$k$ random bits and the outcome is the result.
But what happens when this is not the case?
One common solution is to repeatedly sample the required number of bits, until we get a result within the desired range. 
Since the sampled range is at most twice as large as the desired range, then the stopping time is distributed geometrically, and we get that this process runs in constant time \emph{in expectation}.
Another option is to approximate the distribution by reading a large number of random bits, say~$z~[2^y]$, and return~$\lfloor z / x \rfloor$. This process yields an additive approximation error of at most~$2^{-y}$.

Note that 
Feldman et al.~\cite{Feldman1993} studied this problem and showed that there exists a set of~$\Theta(w)$ values of predefined biases, such that for any~$x \le 2^w$, it is possible to sample uniformly at random from~$[x]$ given biased coin flips from this set. 
Despite the small number of random bits and the small number of predefined biases, their construction requires arithmetic computations that have no known efficient algorithm.

\section{Tight Lower Bound on the Predictability for a Memory Size}
\label{sec_lower_bound}
We now provide a lower bound on the predictability of the Dealer's permutation as a function of the memory it has. 
Specifically, we show that for any Dealer with $m$ bits of memory, there is a Guesser that makes at least $\Omega(n/ m + \ln (m)) $ correct predictions in expectation when playing against that Dealer. We call this Guesser the {\em Myopic Optimizing Guesser}.

We wish to show that the entropy of the generated permutation is relatively small for any low-memory Dealer. 
Observe that our low-memory ($m$ bits) Dealer (the \adaptiveDealer described in~\Cref{sec_adaptive}) could draw from a set of possibilities of at most $m$ cards in each turn. We show that this is not coincidental.
We do so by presenting an encoding scheme for the series of choices made by the Dealer. 
In each turn, the Dealer has a set of choices from which a card is chosen, according to some probability.
We utilize the fact that if the entropy of a random choice is low, then it is predictable, and a simple Guesser can guess correctly with some probability.

\newcommand{\encodeChoicesFuncState}{\tn{Encode}_{\dealer, \memoryState, \turn}}

\newcommand{\choicesRV}{\{\choiceRV_i\}_{i=1}^m}
\newcommand{\choicesInst}{\{\choiceInst_i\}_{i=1}^m}
\newcommand{\choicesnRV}{\{\choiceRV_\turn\}_{\turn=1}^n}
\newcommand{\choicesnInst}{\{\choiceInst_\turn\}_{\turn=1}^n}
\newcommand{\choicesByTurnInst}{\{\choiceInst_i\}_{i=1}^{\turn-1}}
\newcommand{\entropyConditionalChoiceTurn}{\entropy{\choiceRV_\turn|\choiceRV_1, ..., \choiceRV_{\turn -1}}}

To provide a bound on the expected number of correct guesses, we introduce an encoding scheme for the sequence of choices that the Dealer made during the game.
Our encoding scheme is made of two ingredients: 
We first observe that given two memory states at different turns, we can deduce which cards were drawn between them: this is an inherent property of any Dealer that draws each card exactly once.
By specifying the card drawing order, we reconstruct the exact choices made by the Dealer at each turn.
This allows us to fully recover the course of the game from the Dealer's point of view.
	
Consider the memory state~$\memoryState$ at some turn, and assume that the Dealer can draw from a set of~$k$ cards at that turn, such that making the~$i$th choice would lead the Dealer to state $\memoryState^i$ at the following turn.
Needless to say,~$\{\memoryState^i\}_{i=1}^{k}$ must all be distinct, since otherwise there are two choices that lead to the same memory state, and the Dealer would not be able to tell which choice was made, thus would not know which cards still resides in the deck.
It follows that if we know $\memoryState$, and we are given $\memoryState^i$, then we can tell exactly which card was drawn by the Dealer.
	
This argument generalizes to multiple turns.
Given two memory states, $\memoryState_1, \memoryState_2$ at turns~$\turn_1$ and~$\turn_2$ (respectively), such that there is a set of choices that leads the Dealer from $\memoryState_1$ to $\memoryState_2$, the set of cards drawn by the Dealer between $\turn_1$ and $\turn_2$ is determined.
The definition of range below captures this notion.

\newcommand{\rangetext}{Range\xspace}
    
\begin{definition}[\rangetext]
    \label{sec_lower_bound:def_range}
    The \rangetext of two memory states~$\memoryState_1, \memoryState_2$ between turns~$\turn_1$ and~$\turn_2$, denoted by~$\rangeFunction_{\turn_1, \turn_2}(\memoryState_1, \memoryState_2)$, is the set of cards that the Dealer draws between turns~$\turn_1$ and~$\turn_2$ if the Dealer's memory state was~$\memoryState_1$ at turn~$\turn_1$ and~$\memoryState_2$ at turn~$\turn_2$.
    \end{definition}
By the discussion above, we conclude that the \rangetext is well defined. However, while the {\em set} of cards generated between two memory states is unique, there may be multiple choice sequences that lead from one memory state to another.
By specifying the drawing order, we can reconstruct the exact choices made by the Dealer at each turn.
	
We are interested in bounding the (un)predictability of the choices made by the Dealer. 
To do so, we present an encoding scheme that utilizes the above properties. The encoding scheme simulates and records a game between the Guesser and the Dealer, and stores sufficient information to recover the entire course of the game.
In particular, we encode:

\begin{itemize}
\item
Snapshots of the Dealer's memory, taken at equal intervals: to recover the \rangetext.
\item 
Permutations, for recovering the order of the elements in each \rangetext.
\end{itemize}
	
\begin{figure}[htbp]
  \caption{Snapshots taken every~$m$ turns}
  \centering
  \includegraphics[width=0.9\columnwidth]{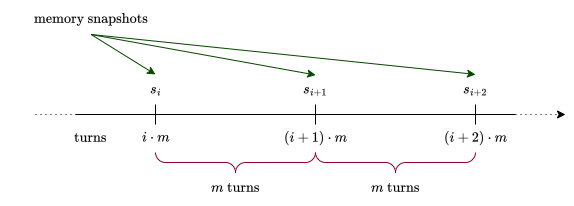}
  \label{sec_lower_bound:fig_memory_snapshots}
\end{figure}

Fix an encoding scheme for permutations where every permutation is represented by the same length.  (see~\Cref{storing_permutations} or~\cite{BonehCohen2023} for a discussion).

\begin{construction}
\label{sec_lower_bound:def_encoding}
To encode an ordered sequence of $n$ choices $\choicesnInst$ taken by the Dealer throughout the game, the function $\encodeFunction$ simulates and records a \dealerText playing a game while making the choices $\choicesnInst$.

For every $i \in [n /m]$, let $\memoryState_i$ be the Dealer's memory state at turn $m \cdot i$. 
Let $$
R_i = \rangeFunction_{m \cdot (i-1), m \cdot i}(\memoryState_{i-1}, \memoryState_{i})
$$
be the \rangetext of the cards drawn during the turns $\left\{m \cdot (i-1), \dots, (m \cdot i) -1 \right\}$. 
Enumerate the cards of $R_i = \{c_{i,1}, \dots, c_{i,m} \}$ such that $c_{i,j} < c_{i, j+1}$.
Let $\permutation_i \in \permutations_m$ be the permutation describing the order in which cards from~$R_i$ are drawn; that is, the card $c_{i, j}$ is drawn at turn $m \cdot (i-1) + \permutation_i(j)$.
		
The encoding function stores:
\begin{itemize}
\item 
    $n /m $ memory states $\{\memoryState_{i}\}_{i=1}^{n/m}$ ($ m \cdot (n/m)$ bits),
\item
    $n/m$ corresponding permutations $\{ \permutation_i \}_{i=1}^{n/m}$ ($(m \log m) \cdot (n/m)$ bits).
\end{itemize} 
Overall, the output's length is~$n + n\log m$ bits.
\end{construction}
A visual description of the snapshots taken in this construction is provided in~\Cref{sec_lower_bound:fig_memory_snapshots}. 

Consider the decoding process that gradually recovers choices one after the other while taking the information decoded so far into account.
Clearly, this process cannot produce more information than is poured into it.
Since the decoder reads~$n + n \log m$ bits, we get that the amount of information gained, for any sequence, is at most~$n + n \log m$ bits.

Let random variable~$\choiceRV_\turn$ be a random choice taken by the Dealer at turn~$\turn$, and let~$\choicesnRV$ be the sequence of random choices taken by the Dealer throughout the game.

\begin{claim}
    \label{sec_lower_bound:claim_sum_of_conditional_entropy}
    For every Dealer with~$m$ bits of memory and every sequence~$\choicesnInst$, the sum of choice entropy conditioned on past choices is at most
    $$
    \sum_{\turn=1}^n \entropy{\choiceRV_\turn | \choiceRV_1 = \choiceInst_1, \dots, \choiceRV_{\turn-1} = \choiceInst_{\turn-1}}
    \le 
    n + n \log m
    .
    $$
\end{claim}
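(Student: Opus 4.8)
The goal is to bound $\sum_{\turn=1}^n \entropy{\choiceRV_\turn \mid \choiceRV_1 = \choiceInst_1, \dots, \choiceRV_{\turn-1} = \choiceInst_{\turn-1}}$ by $n + n\log m$. The plan is to connect this sum of conditional entropies to the length of the encoding produced by \Cref{sec_lower_bound:def_encoding}, which is exactly $n + n\log m$ bits, via the standard fact that no prefix-free (or fixed-length) encoding can beat the entropy of the source.

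First I would make precise the ``decoding'' side: observe that given the $n/m$ memory snapshots $\{\memoryState_i\}$ together with the $n/m$ permutations $\{\permutation_i\}$, one can reconstruct the entire sequence of choices $\choicesnInst$. Indeed, between consecutive snapshots $\memoryState_{i-1}$ and $\memoryState_i$ the \rangetext $R_i$ is determined (\Cref{sec_lower_bound:def_range}), the permutation $\permutation_i$ fixes the draw order within $R_i$, and --- since distinct choices must lead to distinct memory states --- knowing the memory state before a turn and after a turn pins down which card was drawn, hence which choice index was used. So the map from choice sequences to encodings is injective, and its image lies in a set of strings all of the same length $n + n\log m$.

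Next I would run the information-theoretic inequality. Fix the conditioning values $\choiceInst_1, \dots, \choiceInst_{\turn-1}$ as in the statement; by the chain rule the left-hand side is $\entropy{\choiceRV_1} + \entropy{\choiceRV_2 \mid \choiceRV_1 = \choiceInst_1} + \cdots$, which telescopes (in the appropriate conditional sense) --- more carefully, one should note that the claim is really asserting that for the specific realization path, the product $\prod_\turn \Pr[\choiceRV_\turn = \cdot \mid \text{past}]$ induces a distribution on choice sequences whose entropy is at most the description length. The cleanest route: the encoding $\encodeFunction$ is an injective map into $\zo^{n+n\log m}$, so $\encodeFunction(\choicesnRV)$ is a random variable supported on that set, hence $\entropy{\encodeFunction(\choicesnRV)} \le n + n\log m$. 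Since $\encodeFunction$ is injective, $\entropy{\choicesnRV} = \entropy{\encodeFunction(\choicesnRV)} \le n + n\log m$, and by the chain rule $\entropy{\choicesnRV} = \sum_\turn \entropy{\choiceRV_\turn \mid \choiceRV_1, \dots, \choiceRV_{\turn-1}}$, where each term is the expectation over the past of the conditional entropies appearing in the claim. To get the claim as literally stated --- pointwise over a fixed realization rather than in expectation --- I would either argue it holds for every realization because the bound $n + n\log m$ is uniform over the support of the encoding (each fixed prefix still yields a valid fixed-length encoding of the remaining game), or simply note the in-expectation version is what is actually used downstream and restate accordingly.

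The main obstacle I anticipate is the subtlety that \Cref{sec_lower_bound:def_encoding} encodes a \emph{complete} game of exactly $n$ turns with $n/m$ an integer and the snapshots aligned to multiples of $m$, whereas the claim quantifies over a fixed choice prefix of arbitrary length $\turn - 1$; I need the bound to be robust to conditioning on a partial history. This is handled by the uniformity of the encoding length: conditioning on $\choiceRV_1 = \choiceInst_1, \dots, \choiceRV_{\turn-1} = \choiceInst_{\turn-1}$ just restricts to a sub-collection of games, each still encodable in $n + n\log m$ bits, so the conditional entropy of the whole sequence given the prefix is still at most $n + n\log m$, and subtracting the (nonnegative) entropy already ``spent'' on the prefix only helps. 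A secondary point to be careful about is that the permutation encoding must use a fixed length $\log(m!)\le m\log m$ per block regardless of which range arises --- which the construction already stipulates --- so that the total length is a constant $n + n\log m$ independent of the realization; this is exactly what makes the uniform bound go through.
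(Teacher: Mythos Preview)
Your proposal follows the same line as the paper: the encoding of \Cref{sec_lower_bound:def_encoding} is an injection into strings of fixed length $n+n\log m$, so the entropy of the full choice sequence is bounded by that length, and the chain rule converts this into a sum of conditional entropies. The paper itself gives no formal proof of this claim --- only the informal paragraph immediately preceding it (``the decoder reads $n+n\log m$ bits, we get that the amount of information gained, for any sequence, is at most $n+n\log m$'') --- so your write-up is already more explicit than what appears there.

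You are right to flag the pointwise-versus-expectation distinction, and this is a genuine subtlety the paper elides. Your route 2 is the clean one: injectivity of $\encodeFunction$ into $\{0,1\}^{n+n\log m}$ gives $\entropy{\choicesnRV}\le n+n\log m$, and the chain rule yields $\sum_\turn \entropy{\choiceRV_\turn\mid \choiceRV_{<\turn}}\le n+n\log m$, which is the in-expectation statement; this is what the downstream argument in \Cref{sec_lower_bound:thm_lower_bound} actually needs (one extra Jensen over the random prefix absorbs the averaging). Your route 1, however, does not close the gap as written: knowing that ``each fixed prefix still admits a length-$(n+n\log m)$ encoding of the rest'' bounds $\entropy{\choiceRV_\turn,\dots,\choiceRV_n\mid c_{<\turn}}$ for a single $\turn$, but the claim as literally stated sums $\entropy{\choiceRV_\turn\mid c_{<\turn}}$ along one realization, and that sum is not in general controlled by any single tail entropy --- in a branching tree the product of branching factors along one root-to-leaf path can exceed the total leaf count. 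If you want the pointwise statement verbatim you would need to exploit the per-block structure of the encoding more carefully; otherwise, simply take route 2 and note that it suffices.
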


\newcommand{\conditionalEntropyInst}{x}
\newcommand{\conditionalEntropyRV}{\mathbf{X}}
\newcommand{\conditionalEntropyTurnTerm}{\entropy{\choiceRV_\turn | \choiceRV_1 = \choiceInst_1, \dots, \choiceRV_{\turn-1} = \choiceInst_{\turn-1}}}
\newcommand{\turnSupp}{\cT}

\remove{
\todoi{Want to delete from here}
The \namecref{sec_lower_bound:def_encoding} above implies that for any Dealer with~$m$ bits of memory, we can encode any possible sequence of choices~$\choicesnInst$, no matter how rare, with~$n + n\log m$ bits.
Furthermore, the decoding process that gradually reveals the Dealer's choices, while considering what is known so far, can provide us with only $n +n \log m$ bits of information.
The following~\namecref{} captures this idea:
\begin{claim}
    For any Dealer with $m$ bits of memory, and every sequence of choices $\choicesnInst$, the sum of entropies, conditioned on the past choices, is at most
    $$
    \sum_{\turn=1}^n \entropy{\choiceRV_\turn | \choiceRV_1 = \choiceInst_1, \dots, \choiceRV_{\turn-1} = \choiceInst_{\turn-1}}
    \le 
    n + n \log m
    .
    $$
\end{claim}

The \namecref{sec_lower_bound:def_encoding} above implies that any specific sequence of choices~$\choicesnInst$, no matter how an $m$ bits Dealer with makes, can be encoded with~$n + n\log m$ bits of memory.
It follows that it is possible to encode $\choicesnRV$ with~$n + n\log m$ bits.
Furthermore, since all descriptions are of the same length, this encoding is prefix-free.
We know that the expected length of any prefix-free code is at least the entropy of the object it encodes. 
The following~\namecref{sec_lower_bound:lemma_entropy_sequence} follows.
\begin{claim}
    \label{sec_lower_bound:lemma_entropy_sequence}
    For every \dealerText with $m$ bits of memory, the entropy of the sequence of choices made during a game is at most
    $$
    \entropy{\choicesnRV} \le n  + n \log m
    .
    $$
\end{claim}

Observe that the set of choices at some turn may depend on past choices taken at previous turns\footnote{As is the case in our \adaptiveDealer (\Cref{sec_adaptive})}.
This notion is captured by the chain rule of entropy (\Cref{sec_preliminaries:theorem_chain_rule_for_entropy}).
We rephrase \Cref{sec_lower_bound:lemma_entropy_sequence} to capture this notion.

\begin{lemma} 	
\label{sec_lower_bound:claim_expected_entropy}
For every Dealer with $m$ bits of memory, the expected entropy of a random draw is at most $1+ \log m$, that is
		$$
		\expected_{\turn} \left[ \entropyConditionalChoiceTurn\right] \le 1+ \log m 
		$$
		where the expectation is taken over the turns (i.e.\ for a random turn).
	\end{lemma}
	\begin{proof}
		\begin{flalign}
			\label{sec_lower_bound:prf_expected_entropy:chain_rule}
			\entropy{\choicesnRV} 
			= 
			{\sum_{\turn=1}^n} \entropy{\choiceRV_\turn | \choiceRV_0, ..., \choiceRV_{\turn-1}}
			&
			\le
			n \log (m) + n
			\\
			\label[ineq]{sec_lower_bound:prf_expected_entropy:average}
			{1\over n} \sum_{\turn=1}^n \entropy{\choiceRV_\turn | \choiceRV_0, ..., \choiceRV_{\turn-1}} 
			&
			\le
			1+ \log m 
			\\
			\label[ineq]{sec_lower_bound:prf_expected_entropy:expectation}
			\expected_{\turn} \left[ \entropy{\choiceRV_\turn | \choiceRV_0, \ldots, \choiceRV_{\turn-1}}\right] 
			&
			\le 
			1+\log m 
			.
		\end{flalign}
Where the equality in \Cref{sec_lower_bound:prf_expected_entropy:chain_rule} follows from the Chain-rule for entropy (\Cref{sec_preliminaries:theorem_chain_rule_for_entropy}), and the inequality is from \Cref{sec_lower_bound:lemma_entropy_sequence},
\Cref{sec_lower_bound:prf_expected_entropy:average} is given from dividing by $n$, and
\Cref{sec_lower_bound:prf_expected_entropy:expectation} is a rephrase as the expectation over the turns.
	\end{proof}

\todoi{Until here}
}

\noindent
{\bf The Myopic Optimizing Guesser:}
The Guesser we suggest is simply the one that at every point maximizes the probability of guessing correctly the next move by the Dealer, given the Dealer and the history so far. If there is a tie (two values that have the same probability), then the card with the smaller face value is chosen.  
Note that this Guesser is deterministic.
Also note that this is not necessarily the best Guesser against the given Dealer since the Dealer may be influenced by the guesses it sees, so optimizing the next step does not necessarily mean optimizing the overall rate of prediction.

\Cref{sec_preliminaries:lemma_entropy_heavy_element} states that an upper bound on the entropy of a random variable (for example, a choice) implies a lower bound on the probability mass of the heaviest element in the support.
That is, whenever the entropy of a random choice is upperbounded, there is a guess that is guaranteed to be correct with probability inversely proportional to the entropy.
This sets a lower bound on the expected benefit of our Myopic Guesser.

\newcommand{\dealersRandomness}{\cR}

Let~$\correctMyopicRV$ be an indicator random variable for the event that the Myopic Guesser guessed correctly at turn~$\turn$.
Let~$\dealersRandomness$ be the distribution over the Dealer's randomness.
\begin{claim}
    \label{sec_lower_bound:lemma_correct_expectation}
    For every turn~$\turn$, for every Dealer and every sequence of~$\turn-1$ choices~$\choicesByTurnInst$, and any number~$y \ge 0$,
    if the entropy of the~$\turn$-th choice, conditioned on past choices, is at most~$y$, then the expected benefit of the Myopic Guesser at turn~$\turn$ is at least
    $$
    \expected_\dealersRandomness\left[\correctMyopicRV | \choiceRV_1 = \choiceInst_1, \dots, \choiceRV_{\turn-1} = \choiceInst_{\turn-1}\right] \ge 2^{-y}
    .
    $$
\end{claim}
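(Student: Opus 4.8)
The plan is to reduce the statement to a single application of \Cref{sec_preliminaries:lemma_entropy_heavy_element}, the one subtlety being to match the event on which the entropy is bounded with the information available to the Myopic Guesser.

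First I would fix the event $E = \{\choiceRV_1 = \choiceInst_1, \dots, \choiceRV_{\turn-1} = \choiceInst_{\turn-1}\}$ and note that it determines the entire transcript through turn $\turn-1$: each past choice fixes the card drawn at that turn (this is the well-definedness property underlying \Cref{sec_lower_bound:def_range}, namely that two choices leading to the same continuation must have drawn the same card), and, the Myopic Guesser being deterministic, its past guesses are then fixed as well. Consequently, conditioned on $E$ the Myopic Guesser's action at turn $\turn$ is determined: it guesses the card $g_\turn$ that maximizes $\Pr_{\dealersRandomness}\left[\choiceRV_\turn = c \;\middle|\; E\right]$ over cards $c$, breaking ties toward the smaller face value. (If one prefers to think of a ``choice'' as an index into the current option set rather than as the drawn card, note that the drawn card is a deterministic function of $\choiceRV_\turn$ given $E$, so its conditional entropy given $E$ is still at most $y$; below I simply write $\choiceRV_\turn$ for the drawn card.)

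Next I would invoke \Cref{sec_preliminaries:lemma_entropy_heavy_element} on the conditional law of $\choiceRV_\turn$ given $E$, whose entropy is at most $y$ by hypothesis: there is a card $c^\ast$ with $\Pr_{\dealersRandomness}\left[\choiceRV_\turn = c^\ast \;\middle|\; E\right] \ge 2^{-y}$. Since $g_\turn$ is a maximizer, $\Pr_{\dealersRandomness}\left[\choiceRV_\turn = g_\turn \;\middle|\; E\right] \ge \Pr_{\dealersRandomness}\left[\choiceRV_\turn = c^\ast \;\middle|\; E\right] \ge 2^{-y}$. Finally, because $g_\turn$ is constant on $E$, the indicator $\correctMyopicRV$ equals $1$ exactly when $\choiceRV_\turn = g_\turn$, so $\expected_{\dealersRandomness}\left[\correctMyopicRV \;\middle|\; E\right] = \Pr_{\dealersRandomness}\left[\choiceRV_\turn = g_\turn \;\middle|\; E\right] \ge 2^{-y}$, as claimed.

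I expect the only genuine obstacle to be the bookkeeping in the first step: one must be certain that the Myopic Guesser's information state at turn $\turn$ is exactly $E$ and not something coarser, since otherwise it would be maximizing against a conditional distribution different from the one whose entropy is bounded. This is precisely the observation that distinct choices at a turn yield distinct observable outcomes --- so the choice prefix and the observed transcript carry identical information --- after which everything follows directly from \Cref{sec_preliminaries:lemma_entropy_heavy_element}.
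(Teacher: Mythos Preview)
Your proposal is correct and follows essentially the same approach as the paper: apply \Cref{sec_preliminaries:lemma_entropy_heavy_element} to the conditional law of~$\choiceRV_\turn$ to obtain a heavy element, then observe that the Myopic Guesser picks a maximizer and hence succeeds with at least that probability. The paper's proof is a two-sentence version of the same argument; your additional care in verifying that the Guesser's information state coincides with the conditioning event~$E$ is sound and addresses a point the paper leaves implicit.
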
	
\begin{proof}
    If~$\conditionalEntropyTurnTerm \leq y$ then, from~\Cref{sec_preliminaries:lemma_entropy_heavy_element}, we get that there is a choice that the Dealer takes with probability at least~$2^{-y}$.
    Thus, the Myopic Dealer predicts correctly with probability at least~$2^{-y}$ and the result follows.
\end{proof}

\begin{theorem}
\label{sec_lower_bound:thm_lower_bound}
For every Dealer with~$m$ bits of memory, the Myopic Guesser scores at least~${n/ 2m}$ points in expectation.
\end{theorem}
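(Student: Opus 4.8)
The plan is to turn the ``average conditional entropy is at most $1+\log m$'' bound of~\Cref{sec_lower_bound:claim_sum_of_conditional_entropy} into an ``average per-turn success probability is at least $1/(2m)$'' bound for the Myopic Guesser, using the per-turn guarantee of~\Cref{sec_lower_bound:lemma_correct_expectation} together with convexity of $x \mapsto 2^{-x}$. By linearity of expectation, the expected score of the Myopic Guesser is $\sum_{\turn=1}^n \expected_\dealersRandomness\!\left[\correctMyopicRV\right]$, so it suffices to lower bound this sum by $n/(2m)$.

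First I would fix a turn $\turn$ and condition on a realization of the past choices. Let $Y_\turn$ denote the random variable whose value, on the event $\{\choiceRV_1 = \choiceInst_1, \dots, \choiceRV_{\turn-1} = \choiceInst_{\turn-1}\}$, is the conditional entropy $\entropy{\choiceRV_\turn \mid \choiceRV_1 = \choiceInst_1, \dots, \choiceRV_{\turn-1} = \choiceInst_{\turn-1}}$ of the $\turn$-th choice given that history. Applying~\Cref{sec_lower_bound:lemma_correct_expectation} with $y$ equal to this conditional entropy gives, for \emph{every} fixing of the past, $\expected_\dealersRandomness\!\left[\correctMyopicRV \mid \choiceRV_1 = \choiceInst_1, \dots, \choiceRV_{\turn-1} = \choiceInst_{\turn-1}\right] \ge 2^{-Y_\turn}$; taking expectation over the (random) history then yields $\expected_\dealersRandomness\!\left[\correctMyopicRV\right] \ge \expected\!\left[2^{-Y_\turn}\right]$. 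Summing over $\turn$ and swapping the finite sum with the expectation, the total expected score is at least $\expected\!\left[\sum_{\turn=1}^n 2^{-Y_\turn}\right]$.

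Next I would show the stronger \emph{pointwise} statement that $\sum_{\turn=1}^n 2^{-Y_\turn} \ge n/(2m)$ for every realization of the choice sequence $\choicesnInst$, which immediately implies the theorem. Fixing such a realization, convexity of $f(x)=2^{-x}$ and Jensen's inequality (\Cref{sec_preliminaries:thm_jensen_inequality}) applied to the uniform distribution on $\{1,\dots,n\}$ give $\frac{1}{n}\sum_{\turn=1}^n 2^{-Y_\turn} \ge 2^{-\frac{1}{n}\sum_{\turn=1}^n Y_\turn}$. Since~\Cref{sec_lower_bound:claim_sum_of_conditional_entropy} applies to this very sequence and bounds $\sum_{\turn=1}^n Y_\turn \le n + n\log m$, we get $\frac{1}{n}\sum_{\turn=1}^n Y_\turn \le 1+\log m$, hence $2^{-\frac{1}{n}\sum_{\turn=1}^n Y_\turn} \ge 2^{-(1+\log m)} = \frac{1}{2m}$. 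Multiplying by $n$ proves the pointwise bound, so $\expected\!\left[\sum_{\turn=1}^n 2^{-Y_\turn}\right] \ge n/(2m)$ and the theorem follows.

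The main thing to get right is the order of conditioning versus taking expectations: both~\Cref{sec_lower_bound:lemma_correct_expectation} and~\Cref{sec_lower_bound:claim_sum_of_conditional_entropy} are phrased for a \emph{fixed} past, so the argument must first condition on histories to extract the pointwise inequalities $\expected_\dealersRandomness[\correctMyopicRV\mid\text{past}]\ge 2^{-Y_\turn}$ and $\sum_\turn Y_\turn \le n+n\log m$, and only afterwards average. One could instead invoke $\expected[2^{-Y_\turn}]\ge 2^{-\expected[Y_\turn]}$ and a bound on $\expected[\sum_\turn Y_\turn]$, obtaining the same constant, but routing everything through the per-realization entropy budget keeps the estimates clean and avoids any subtlety about the Dealer reacting to the (deterministic) Myopic guesses. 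There is no real obstacle beyond this bookkeeping; the additional $\Omega(\log m)$ term advertised in the section is orthogonal and can be recovered separately from the standard ``$\ln$ correct guesses on the final block of cards'' argument.
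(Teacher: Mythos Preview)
Your proposal is correct and follows essentially the same route as the paper: fix a realization of the choice sequence, invoke \Cref{sec_lower_bound:lemma_correct_expectation} turn by turn to get $\expected[\correctMyopicRV\mid\text{past}]\ge 2^{-Y_\turn}$, apply Jensen to $x\mapsto 2^{-x}$ over the uniform distribution on turns, and feed in the per-realization entropy budget from \Cref{sec_lower_bound:claim_sum_of_conditional_entropy}. If anything, your write-up is slightly more explicit than the paper's about averaging the resulting pointwise inequality over histories to recover the unconditional expected score.
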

\begin{proof}
    For any sequence of choices~$\choicesnInst$,
    and for every turn~$\turn$, denote the entropy of the~$\turn$-th choice conditioned on the first~$\turn-1$ turns by
    $$
    \conditionalEntropyInst_\turn = \conditionalEntropyTurnTerm,
    $$
    and let the random variable~$\conditionalEntropyRV$ be the entropy of a choice in a random turn conditioned on the past choices.
    \Cref{sec_lower_bound:claim_sum_of_conditional_entropy} implies that~$\expectation_{\turnSupp}\left[\conditionalEntropyRV\right] \le 1+ \log m$, where~$\turnSupp$ is the uniform distribution over~$n$ turns.

    Now consider a game played by the Myopic Guesser, and recall that in turn~$\turn$, the Guesser makes a prediction based on the~$\turn-1$ prior card draws, so if the first~$\turn-1$ choices were~$\choiceInst_1, \dots, \choiceInst_{\turn-1}$, then expected benefit at turn~$\turn$, is~$\expectation_{\dealersRandomness} \left[\correctMyopicRV | \choiceRV_1 = \choiceInst_1, \dots, \choiceRV_{\turn-1} = \choiceInst_{\turn-1}\right]$.

    Fix the function~$g(x) = 2^{-x}$.
    By linearity of expectation, we get that the expected score is at least

    \begin{flalign}
    \sum_{\turn \in [n]}
    \expectation_{\dealersRandomness} \left[\correctMyopicRV | \choiceRV_1 = \choiceInst_1, \dots, \choiceRV_{\turn-1} = \choiceInst_{\turn-1}\right]
    &
    \ge
    \sum_{\turn \in [n]}
    g(\conditionalEntropyInst_\turn)
    \label[ineq]{sec_lower_bound:prf_lower_bound:from_lemma_correct_expectation}
    \\
    &
    =
    n \cdot
    \expectation_{\turnSupp} \left[g(\conditionalEntropyRV)\right]
    \label[equa]{sec_lower_bound:prf_lower_bound:defintion_of_conditionalEntropyRV}
    \\
    &
    \ge
    n \cdot
    g\left(\expectation_{\turnSupp} \left[\conditionalEntropyRV\right]\right)
    \label[ineq]{sec_lower_bound:prf_lower_bound:from_jensen}
    \\
    &
    \ge
    n \cdot
    2^{-1-\log(m)}
    \label[ineq]{sec_lower_bound:prf_lower_bound:by_discussion_above}
    =
    {n \over 2m}
    .
\end{flalign}
Where
\Cref{sec_lower_bound:prf_lower_bound:from_lemma_correct_expectation} follows from~\Cref{sec_lower_bound:lemma_correct_expectation},
\Cref{sec_lower_bound:prf_lower_bound:defintion_of_conditionalEntropyRV} is true by the definition of~$\conditionalEntropyRV$,
\Cref{sec_lower_bound:prf_lower_bound:from_jensen} follows from Jensen Inequality (see ~\Cref{sec_preliminaries:thm_jensen_inequality}),
and~\Cref{sec_lower_bound:prf_lower_bound:by_discussion_above} follows from~\Cref{sec_lower_bound:claim_sum_of_conditional_entropy} and the discussion above.

\end{proof}

\section{Computationally Efficient Guessers: Open Book Dealers and Crypto to the Rescue?}
\label{prf}

The Myopic optimizing Guesser we saw in~\Cref{sec_lower_bound} is {\em not} necessarily computationally efficient {\em even in the case where the Dealer is efficient}: the Guesser has to find the most probable next move, and this is a computationally non-trivial task. Furthermore, if one-way functions exist, then it is a hard problem. We first point out that for any open-book Dealer there is an efficient Guesser.

\noindent
{\bf Open Book Dealer:}
When the Dealer has no secrets (as is the case for both of our suggested Dealers), then we can make the Myopic Optimizing Guesser of the previous section efficient. 
Recall that the Guesser has to figure out the most probable move by the dealer, which may require a long computation time.  
On the other hand, finding a move whose probability is an additive approximation of the most probable move, say within~$1/n$, is good enough: with this approximation, at every turn, instead of the probability of the original algorithm, we need to subtract~$1/n$, and this only modifies the expected result by~$n \cdot 1/n =1$.
Such a task can be performed efficiently: simply simulate the Dealer on the public state a few times ($O(n^2)$), figure out the most common response, guess that it is the most probable one, and act accordingly. 
Therefore, this Guesser is as efficient as the Dealer (times a polynomial in $n$ factor).

\begin{corollary}
\label{corrolary: open book}
For every open-book Dealer using $m$ bits of memory, there exists a Guesser that makes at least ${n \over 2m} - 1$ correct guesses in expectation and operates in the same amount of time as the Dealer times a polynomial in $n$ factor.
\end{corollary} %

\subsection{Space Efficient Generation Under Computational Assumptions}
	\label{one-way}
	
Suppose that one-way functions exist and the Guesser is computationally limited (i.e.\ cannot break them); see Goldreich~\cite{Gol01} for background on the notions. 
We claim that there is a method using a small amount of secret memory to generate a hard-to-guess permutation. We will describe two methods, one simpler, but computationally more expensive, and the other one more efficient. 

It is well known that the existence of one-way functions is equivalent to the existence of pseudorandom generators (PRGs), i.e.\ a function that maps a short seed~$s$ into a longer string where the output is indistinguishable from a truly random string function provided the seed is chosen at random. (see~Goldreich~\cite{Gol01}, Haitner et al.~\cite{HaitnerRV13} or Mazor-Pass~\cite{MazorP23}). 
Suppose that we have a PRG $G:\{0,1\}^k \mapsto \{0,1\}^{2k}$ and that $G$ does not require more than $O(k)$ bits of memory to evaluate the output. Then, we can construct from $G$ a PRG
$G':\{0,1\}^k \mapsto \{0,1\}^w$ mapping a short seed $s$ of length $k$ into a string of length $w$ which is polynomial in $k$ (the value of $m$ will be determined later) and where the amount of memory required of $G'$ to perform the mapping is $O(k)$ bits. To see how this could be done, think of a lopsided tree of depth $w$ where the output is in the leaves. Now, as in the famed GGM construction, we assign labels to the nodes. The root is labeled with the seed, and the left child of each internal node is labeled with the left part of the application of $G$ to the parent label, and the right child is labeled with the right part of the application of $G$.
We will interpret the generated string as a sequence of values in $[n]$ denoted with $X_1, X_2, \ldots$. The permutation generated is going to be a subsequence of these values. 

The output in the $i$th round is the next value $X_j$ that appears for the first time. 
In other words, if the values so far are $X_{j_1}, X_{j_2} \ldots X_{j_{i-1}}$, then the value of the $i$th element in the permutation is the first $X_j$ not in the set $\{ X_{j_1}, X_{j_2} \ldots X_{j_{i-1}}\}$.
This value can be found by regenerating the labels on the tree (from the seed) and seeing whether $X_j$ appeared before or not. This requires just $O(k)$ bits of memory. 

From the coupon collector's problem, the expected amount of work we need to do is $O(n^2\log^2 n)$ applications of the PRG (assuming each application gives us one block in $[n]$), since after $n \ln n$ turns we expect {\em all} values to appear and checking whether a value appeared does not take more than the length of the sequence. This means that we can set $w$ (the size of the tree) to be  $O(n\log^2 n)$, and with a very high probability, we will not need more values than this to generate the permutation. 
A more careful analysis shows that the amount of work required to generate the permutation is only $O(n^2\log n)$, since if $X_j$ appeared before, we need, in expectation, to check only at most $n$ previous elements.

\noindent
{\bf A more efficient solution:}
It is well known that the existence of one-way functions is equivalent to the existence of pseudorandom functions (PRFs), i.e.\ collections of functions that are indistinguishable from a truly random function (see~Goldreich~\cite{Gol01}). Given a PRF $F_k(x)$,  it can be 
used to construct pseudorandom permutations (PRP) (See Luby and Rackoff~\cite{LubyR88}) $P_k(x): [n] \mapsto [n]$. 

The natural algorithm is then: the Dealer generated a random key $k$ and stored it secretly. At step $i$ it outputs $P_k(i)$. The storage requirements are the key $k$ and the turn number, which should be much smaller than $n$. The result is a permutation, but is it hard to guess? 
If one looks at the Luby-Rackoff construction~\cite{LubyR88} or later ones~\cite{NaorR99}, then the distinguishing probability is of the form ${\ell^2}/2^n$ (or even $\sqrt{n}$ in the denominator), where $\ell$ is the number of queries the adversary has. In our case $\ell=n-1$, so such constructions (or at least their analysis) are useless, especially towards the end of the sequence. 
 
Instead, we need to apply the construction of pseudo-random permutations on small domains (related to format-preserving encryption) that are resilient to {\em any} number of queries. Such constructions are known and can be based on PRFs. 
The most efficient construction for small domain PRP is the ``Sometimes Recurse'' (SR) shuffle by Morris and Rogaway~\cite{MorrisR14} (a variant of the earlier swap-or-not and mix-and-cut shuffles), which for a domain of size $n$ runs in $\emph{expected}$ time of $O(\log n)$ (this is the number of applications of the PRF per call to the PRP) and worst case $O(\log^2 n)$. It is secure {\em even when the adversary queries the whole domain} (as will be the case in our setting).  
	
Therefore the size of memory needed to implement the scheme is the length of the key to a PRF and the total time needed to generate a permutation on $n$ elements is $O(n \log n)$ applications of the PRF, with a maximum of $O(\log^2 n)$ applications of the PRF per element. 

\begin{theorem}
If one-way functions exist, then there is a closed-book Dealer that can generate a permutation where any polynomial in the security parameter time Guesser can succeed in guessing $\ln n$ plus a negligible in the security parameter many values in expectation.  The amount of storage the Dealer needs is proportional to the security parameter. 
\end{theorem}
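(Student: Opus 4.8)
The construction is the ``more efficient solution'' sketched above, so essentially all the work is in the analysis. Since one-way functions exist, so do pseudorandom functions $\{F_k\}$ with key length $\kappa$ equal to the security parameter, and from a PRF one builds a small-domain pseudorandom permutation $P_k:[n]\to[n]$ that remains secure against an adversary querying the \emph{entire} domain; concretely I would instantiate $P_k$ by the ``Sometimes Recurse'' shuffle of Morris and Rogaway~\cite{MorrisR14}, which evaluates on a domain of size $n$ using $O(\log n)$ PRF calls in expectation and $O(\log^2 n)$ in the worst case and is secure even when the adversary reads the whole codebook. The Dealer picks $k$ uniformly, stores it secretly together with a turn counter in $\{1,\dots,n\}$, and at turn $i$ outputs $P_k(i)$; this is a legal play since $P_k$ is a permutation. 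The memory is $\Theta(\kappa)$ for the key plus $\lceil\log n\rceil$ for the counter, and since a Guesser that makes $n$ guesses in time polynomial in $\kappa$ forces $n=\poly(\kappa)$, we have $\log n = O(\kappa)$, so the total storage is $O(\kappa)$, and each draw costs $O(\log n)$ PRF evaluations in expectation.

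First I would record the ideal-world bound: if the Dealer were replaced by one outputting $\pi(i)$ for a truly uniform permutation $\pi$ of $[n]$, the draw sequence is a uniformly shuffled deck, so by the textbook computation reproduced in~\Cref{sec_preliminaries:sec_intro_to_card_guessing} every Guesser (adaptive or not) scores at most $H_n \le \ln n + 1$ correct guesses in expectation.

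Then I would argue, by reduction, that the real Dealer is no better up to a negligible term. Suppose some Guesser $G$ runs in time $\poly(\kappa)$ and scores, in expectation over its coins and over $k$, at least $H_n+\varepsilon(\kappa)$ against the real Dealer for non-negligible $\varepsilon$. I would build a distinguisher $A$ with oracle access to a permutation $\Pi$ on $[n]$ (either $P_k$ for random $k$, or uniform): $A$ simulates the card game, answering ``the Dealer's draw at turn $t$'' by $\Pi(t)$ and feeding the history to $G$, and after querying $\Pi(1),\dots,\Pi(n)$ it has computed the score $\mathbf{S}\in\{0,1,\dots,n\}$. Since $\fE[\mathbf{S}\mid\Pi=P_k]-\fE[\mathbf{S}\mid\Pi\text{ uniform}]\ge\varepsilon$ and $\mathbf{S}\in[0,n]$, there is a threshold $t^\star$ with $\bigl|\Pr[\mathbf{S}>t^\star\mid P_k]-\Pr[\mathbf{S}>t^\star\mid\text{uniform}]\bigr|\ge\varepsilon/n$; let $A$ output $1$ iff $\mathbf{S}>t^\star$. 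As $n=\poly(\kappa)$, $\varepsilon/n$ is non-negligible, contradicting full-domain PRP security. Hence every $\poly(\kappa)$-time Guesser scores at most $H_n+\mathrm{negl}(\kappa)$, i.e.\ $\ln n+\mathrm{negl}(\kappa)$ up to the usual $H_n\le\ln n+1$ slack, which gives the theorem once the PRF/PRP constructions are plugged in.

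The main obstacle is precisely this last conversion: a gap of $\varepsilon$ in expected \emph{score} is only a gap of $\varepsilon/n$ at the best threshold, so the reduction is meaningful only when $n$ is polynomial in $\kappa$ — which is why the theorem is naturally stated with negligibility in $\kappa$ and storage proportional to $\kappa$ — and it forces one to invoke a PRP that survives the adversary reading the entire codebook; the Luby--Rackoff-style bounds of the form $\ell^2/2^n$ are useless here since $\ell=n-1$, and ``Sometimes Recurse'' is the ingredient that avoids this. A secondary point to check, but a benign one, is that $G$ is adaptive and interactive: this causes no difficulty, since $A$ merely runs $G$ as a subroutine and the queries $\Pi(1),\dots,\Pi(n)$ are exactly the draws $G$ would see.
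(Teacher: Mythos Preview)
Your proposal is correct and follows exactly the construction the paper intends: OWF $\Rightarrow$ PRF $\Rightarrow$ full-domain-secure small-domain PRP via the Sometimes-Recurse shuffle, with the Dealer outputting $P_k(i)$ at turn $i$ and storing only the key and a counter. The paper itself does not spell out a formal proof of the theorem---it simply states it after describing the construction and noting that Luby--Rackoff-style bounds are useless here---so your explicit reduction (turn a Guesser with expected advantage $\varepsilon$ into a distinguisher with advantage $\varepsilon/n$ via a threshold on the score, using that $n=\poly(\kappa)$) is a welcome addition rather than a deviation. One cosmetic point: choosing the threshold $t^\star$ is a non-uniform step; you could equally well have $A$ output $1$ with probability $\mathbf{S}/n$, which gives a uniform distinguisher with advantage exactly $\varepsilon/n$ and avoids the issue.
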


One remaining question is whether we can improve on the efficiency of the Dealer and get it down to $O(1)$ applications of the PRF per turn: 

\vspace{2mm}
\noindent
{\bf Question:} Is it possible to generate a random-looking permutation with low memory and O(1) calls to a PRF per element? 

Another question is whether we can prove that one-way functions are necessary for such a statement, which we discuss next. 
 
\paragraph{One-way functions and pseudorandomness:}

What are the consequences of assuming that one-way functions do not exist? Can we say that for any efficient (poly-time) Dealer there is an efficient Guesser that wins in expectation more than $\Omega(n/m + \log m)$? If we think of the myopic optimizing Guesser, then it may seem that we can approximate it by coming up with a {\em random inverse of the Guesser} and act similarly to the open-book case. That is given the Dealer's choices so far, find a random set of coins $\delta$ that yields its choices (i.e.\ choose uniformly from the set collections of coins that is consistent with its choices) and see what the next choice is; repeat this several times to get a good approximation for the most likely move. 

As we know from Impagliazzo and Luby~\cite{ImpagliazzoL89}, if one-way functions do not exist, then it is possible to find a random inverse of a function.  But the problem with this approach is that it works for one turn. If we want to do it again and again for each turn, then we must take into account the {\em complexity of the inverter}. These costs keep increasing as we compose the inverter more and more times. So it seems that this approach does not yield the desired result. This problem arises, for instance, also in the context of repeated games with limited randomness where the issue is where there is a computationally efficient way to exploit a player with limited randomness if one-way functions do not exist (see Section 5 of Hub{\'{a}}cek, Naor and Ullman~\cite{HubacekNU16}). We conclude with the following open problem:

\vspace{2mm}
\noindent
{\bf Question:} Is it true that that breaking the $O(n/m+\log n)$ bound efficiently is possible iff one-way functions exist? 

\remove{
\paragraph{Sampling in Worst Case Constant Time:}
Given a stream of random bits, how does one sample a member of the set~$\{0,1,2\}$ uniformly at random?
A common solution is to perform rejection-sampling, that is, look at the first two bits, if they are different than~$3$, take them as the result, otherwise, repeat with the next two bits.
Since the probability for acceptance is larger than half, we expect no more than~$2$ repetitions of this process, that is, it takes a constant time in expectation.

Consider our \adaptiveDealer from~\Cref{sec_adaptive}, it also performs rejection sampling until a suitable mini-deck is found, and thus, also runs in expected constant time.
We could drop the rejection process by using a data structure that allows sampling marbles from urns such as the one described in~\cite{berenbrink_simulation20}, but much to one's dismay, sampling here runs in expected amortized constant time.
Looking with sorrow at other data structures for dynamically changing distribution, it seems that the expected constant time barrier remains.
To implement our linear space Dealer~(\Cref{sec_perfect_dealer}), we've introduced a data structure that supports sampling in expected constant time.
The expectation comes from maintaining an especially small portion of weight.

An immediate observation is that when the weight of a pseudo distribution is sufficiently small, say, one that can be stored in a word, then simple bit operations such as those discussed in~\Cref{sec_preliminaries:sec_word_ram} can be used to sample and update in constant time.

\vspace{2mm}
\noindent
{\bf Question:} For what families of dynamically changing distributions does there exist a data structure that supports a constant-time sampling procedure as well as a constant-time update?
}

\subsection*{Other Distributions}
We point out that in some cases there are methods with small space for the generation of a large object. Consider, for example, the generation of a $2n$ bit string with exactly $n$ ones and $n$ zeroes. How much memory is needed to output such a string on-the-fly? We claim that $O(\log n)$ bits of memory suffice. At any point $t$ in time ($1 \leq t \leq 2n$), simply store the number $k_t$ of bits up to point $t$ that were one. The distribution of the next bit $x_i$ is one with probability $\frac{n-k_t}{2n-t+1}$ and zero with probability $\frac{n+k_t-t+1}{2n-t+1}$.  

\vspace{2mm}
\noindent
{\bf Question:} Characterize the distributions where this low memory generation is possible.

\section*{Acknowledgments}
We thank Udi Wieder and Yotam Dikstein for meaningful discussions and advice. 

\bibliographystyle{alpha}
\bibliography{biblio}

\begin{appendices}
    \section{Algorithmic Description of the Perfect Dealer}

\label{sec_linear_perfect:sec_appendix}

\begin{algorithm}[]
\caption{To draw an element from a cell, sample a random $1$-bit and toggle it off.}
\label{sec_linear_perfect:alg_draw_card_from_cell}
\small
\begin{algorithmic}
\Procedure{DrawElementFromCell}{cell}
    \State
        $c$ = $\popCount$(cell.elements)
    \Comment{Count the number of elements in the cell}
    \State
        $r \sim \{0,..., c-1\}$
    \State
    $i$ = $\bitSelect$(cell.elements, $r$)
    \Comment
        Get the index of the $r$th $1$-bit
    \State
        element = $($cell.$j \cdot \log(n)) + i$
    \State
    \Return element
\EndProcedure

\end{algorithmic}
\end{algorithm}

\begin{algorithm}[]
\caption{Intervals}
\label{sec_linear_perfect:alg_draw_cell_of_population}
\small
\begin{algorithmic}

\Procedure{SampleACellOfPopulationSize}{population\_size}
    \LineComment
    Sample a random cell
    \State
    $s \gets $interval\_size[population\_size]
    \State
    $r \sim \{0,..., s - 1\}$
    \State
    $l \gets $ interval\_beginnings[population\_size] $+$ $r$
    \Comment
        location of a random cell in the interval
    \State
        \Call{DrawElementFromCell}{cells[$l$]}
\EndProcedure

\Procedure{DecrementCellPopulationSize}{$l$, population\_size}
    \LineComment
    Update intervals
    \LineComment
    1. Swap the location of $l$ with the last index in the interval
    \State
    $b$ = interval\_beginnings[population\_size] + interval\_size[population\_size] $- 1$
    \State
    cells[$l$] $\leftrightarrow$ cells[$b$]
    \\
    \LineComment
    2. Update interval boundaries
    \State
    interval\_size[population\_size] -= 1
    \State
    interval\_size[population\_size-1] += 1
    \State
    \begin{varwidth}[t]{\linewidth}
      interval\_beginnings[population\_size-1] = \par
        \hskip\algorithmicindent interval\_beginnings[population\_size] + interval\_size[population\_size]
      \end{varwidth}
\EndProcedure
\end{algorithmic}
\end{algorithm}

\begin{algorithm}[]
\caption{Urn Data Structure}
\small
\label{sec_linear_perfect:alg_urn_ds}
\begin{algorithmic}
\LineComment A single node in the linked list
\State \textbf{struct} Node $\{$
    \State \hspace{1em} color: int
    \Comment{$\log \colors$ bits}

    \State \hspace{1em} prev: int
    \Comment{$\log \marbles$ bits} 
    \State \hspace{1em} next: int
    \Comment{$\log \marbles$ bits} 
\State $\}$
\LineComment The Urn data structure
\State \textbf{struct} Urn $\{$
    \State \hspace{1em} nodes []Node
    \Comment{An array of $\marbles$ nodes}

    \State \hspace{1em} size int
    \Comment{Number of total marbles currently in the Urn}
    \State \hspace{1em} anchors [color]
    \Comment{$\colors$ anchors to different linked lists}
\State $\}$

\end{algorithmic}
\end{algorithm}

\begin{algorithm}[]
\caption{Urn Methods}
\small
\label{sec_linear_perfect:alg_urn_methods}
\begin{algorithmic}
\Procedure{AddMarble}{color}
    \State
        nodes[size].color $\gets$ color
    \State
        nodes[size].prev $\gets$ Nil
    \Comment{set the color of the last node}
    \State
        head $\gets $ anchors[color]
    \If {head $\neq$ Nil}
    \Comment{if there is already a linked list for this color}
        \State
            anchors[color].prev $\gets$ size
        \State
            nodes[size].next $\gets$ head
    \EndIf
    \State
        anchors[color] $\gets$ size
    \Comment{Set the new anchor}
\EndProcedure

\Procedure{RemoveMarble}{color}
    \State
        prev\_head $\gets$ anchors[color]
    \State
        new\_head $\gets$ nodes[prev\_head].next
    \State
        size $\gets$ size - 1
    \State
        nodes[prev\_head] $\gets$ nodes[size]
        \Comment{Copy the content of the last node}
    \LineComment
        Update neighbors of moved element
    \If {nodes[size].next $\neq$ Nil}
        \State
        nodes[nodes[prev\_head].next].prev $\gets$ prev\_head
    \EndIf
    \If {nodes[size].prev $\neq$ Nil}
        \State
        nodes[nodes[prev\_head].prev].next $\gets$ prev\_head
    \EndIf
    \LineComment
        If we moved the anchor of some linked list, update the anchor table
    \If {nodes[prev\_head].prev == Nil}
        \State
            anchors[nodes[prev\_head].color] $\gets$  prev\_head
    \EndIf 

    \State
        anchors[color] $\gets$ new\_head
    \If {new\_head $\neq$ Nil}
        \State 
            nodes[new\_head].prev $\gets$ Nil
    \EndIf
\EndProcedure

\Procedure{SampleMarble}{}
    \State
        i $\sim$ [size]
    \Comment{sample a Marble uniformly at random}
    \State \Return nodes[i].color
\EndProcedure

\end{algorithmic}
\end{algorithm}

\end{appendices}

\end{document}